\documentclass[sigconf,nonacm]{acmart}
\sloppy

\usepackage[vlined,boxed]{algorithm2e}
\usepackage{amsmath}
\usepackage{amsfonts}
\usepackage{epsfig}
\usepackage{graphics}
\usepackage{color}
\usepackage{comment}
\usepackage{paralist}
\usepackage{mathtools}
\usepackage{multirow}
\usepackage{subfigure}
\usepackage{eepic}
\usepackage{stmaryrd}
\usepackage{textcomp}
\usepackage{balance}
\usepackage[inline]{enumitem}
\usepackage{ulem}\normalem
\usepackage{arydshln}
\usepackage{bigdelim}
\usepackage{makecell}
\usepackage{todonotes}
\usepackage{amsmath}
\usepackage{tikz}\usetikzlibrary{arrows,fit,shapes.symbols,patterns}

\begin{document}

\newcommand{\OMIT}[1]{}

\newcommand{\naf}{\neg}
\newcommand{\sms}[1]{{\sf sms}(#1)}
\newcommand{\flip}{{\sf Flip}}
\newcommand{\fin}{\mathsf{fin}}
\newcommand{\SM}{\mathsf{SM}}
\newcommand{\repl}{\mathsf{repl}}
\newcommand{\params}[1]{\langle #1 \rangle}
\newcommand{\terminals}[1]{\mathsf{terminals}(#1)}
\newcommand{\comp}{\hookrightarrow}
\newcommand{\paths}[1]{\mathsf{paths}(#1)}
\newcommand{\dg}[1]{\mathsf{dg}(#1)}
\newcommand{\scc}[1]{\mathsf{scc}(#1)}

\newcommand{\edb}[1]{\mathsf{edb}(#1)}
\newcommand{\idb}[1]{\mathsf{idb}(#1)}

\newcommand{\full}[1]{\mathsf{full}(#1)}
\newcommand{\heads}[1]{\mathsf{heads}(#1)}
\newcommand{\pos}[1]{\mathsf{pos}(#1)}
\newcommand{\ground}[1]{\mathsf{ground}(#1)}

\newcommand{\op}{\mathit{op}}
\newcommand{\PS}{\mathcal{P}}
\newcommand{\viol}[2]{\mathsf{V}(#1,#2)}
\newcommand{\rs}[2]{\mathsf{RS}(#1,#2)}
\newcommand{\crs}[2]{\mathsf{CRS}(#1,#2)}
\newcommand{\crss}[3]{\mathsf{CRS}_{#3}(#1,#2)}
\newcommand{\cancrs}[2]{\mathsf{CanCRS}(#1,#2)}
\newcommand{\cancrss}[3]{\mathsf{CanCRS}_{#3}(#1,#2)}
\newcommand{\opr}[2]{\mathsf{ORep}(#1,#2)}
\newcommand{\ops}[3]{\mathsf{Ops}(#1,#2,#3)}
\newcommand{\abs}[1]{\mathsf{abs}_{>0}(#1)}
\renewcommand{\abs}[1]{\mathsf{RAS}(#1)}
\newcommand{\insP}{\ins{P}}
\newcommand\sem[1]{{[\![ #1 ]\!]}}
\newcommand{\probrep}[2]{\mathsf{P}_{#1}(#2)}
\newcommand{\oca}[2]{\mathsf{OCA}_{#2}(#1)}
\newcommand{\ocqa}[1]{\mathsf{OCQA}(#1)}
\newcommand{\ur}{\mathsf{ur}}
\newcommand{\us}{\mathsf{us}}
\newcommand{\uo}{\mathsf{uo}}

\newcommand{\mi}[1]{\mathit{#1}}
\newcommand{\ins}[1]{\mathbf{#1}}
\newcommand{\adom}[1]{\mathsf{dom}(#1)}
\renewcommand{\paragraph}[1]{\textbf{#1}}
\newcommand{\ra}{\rightarrow}
\newcommand{\fr}[1]{\mathsf{fr}(#1)}
\newcommand{\dep}{\Sigma}
\newcommand{\sch}[1]{\mathsf{sch}(#1)}
\newcommand{\ar}[1]{\mathsf{ar}(#1)}
\newcommand{\body}[1]{\mathsf{body}(#1)}
\newcommand{\head}[1]{\mathsf{head}(#1)}
\newcommand{\guard}[1]{\mathsf{guard}(#1)}
\newcommand{\class}[1]{\mathbb{#1}}
\newcommand{\app}[2]{\langle #1,#2 \rangle}
\newcommand{\crel}[1]{\prec_{#1}}

\newcommand{\ccrel}[1]{\prec_{#1}^+}

\newcommand{\tcrel}[1]{\prec_{#1}^{\star}}
\newcommand{\rctaa}{\class{CT}_{\forall \forall}^{\mathsf{res}}}
\newcommand{\rctaapr}{\mathsf{CT}_{\forall \forall}^{\mathsf{res}}}
\newcommand{\rctae}{\class{CT}_{\forall \exists}^{\mathsf{res}}}
\newcommand{\rctaepr}{\mathsf{CT}_{\forall \exists}^{\mathsf{res}}}
\newcommand{\base}[1]{\mathsf{base}(#1)}
\newcommand{\eqt}[1]{\mathsf{eqtype}(#1)}
\newcommand{\result}[1]{\mathsf{result}(#1)}
\newcommand{\chase}[2]{\mathsf{ochase}(#1,#2)}
\newcommand{\pred}[1]{\mathsf{pr}(#1)}
\newcommand{\origin}[1]{\mathsf{org}(#1)}
\newcommand{\eq}[1]{\mathsf{eq}(#1)}
\newcommand{\dept}[1]{\mathsf{depth}(#1)}

\newcommand{\rep}[2]{\mathsf{rep}_{#2}(#1)}
\newcommand{\repp}[2]{\mathsf{rep}_{#2}\left(#1\right)}
\newcommand{\rfreq}[2]{\mathsf{rfreq}_{#2}(#1)}
\newcommand{\homs}[3]{\mathsf{hom}_{#2,#3}(#1)}
\newcommand{\prob}[1]{\mathsf{#1}}
\newcommand{\key}[1]{\mathsf{key}(#1)}
\newcommand{\keyval}[2]{\mathsf{key}_{#1}(#2)}
\newcommand{\block}[2]{\mathsf{block}_{#2}(#1)}
\newcommand{\sblock}[2]{\mathsf{sblock}_{#2}(#1)}

\newcommand{\rt}[1]{\mathsf{root}(#1)}
\newcommand{\child}[1]{\mathsf{child}(#1)}

\newcommand{\var}[1]{\mathsf{var}(#1)}
\newcommand{\const}[1]{\mathsf{const}(#1)}
\newcommand{\pvar}[2]{\mathsf{pvar}_{#2}(#1)}

\newcommand{\att}[1]{\mathsf{att}(#1)}
\newcommand{\card}[1]{\sharp #1}

\newcommand{\pr}{\mathsf{Pr}}
\newcommand{\prsp}{\mathsf{PS}}

\newcommand{\sign}[1]{\mathsf{sign}(#1)}
\newcommand{\litval}[2]{\mathsf{lval}_{#2}(#1)}
\newcommand{\angletup}[1]{\langle #1 \rangle}

\def\qed{\hfill{\qedboxempty}      
  \ifdim\lastskip<\medskipamount \removelastskip\penalty55\medskip\fi}

\def\qedboxempty{\vbox{\hrule\hbox{\vrule\kern3pt
                 \vbox{\kern3pt\kern3pt}\kern3pt\vrule}\hrule}}

\def\qedfull{\hfill{\qedboxfull}   
  \ifdim\lastskip<\medskipamount \removelastskip\penalty55\medskip\fi}

\def\qedboxfull{\vrule height 4pt width 4pt depth 0pt}

\newcommand{\markfull}{\qedboxfull}
\newcommand{\markempty}{\qed}

\newtheorem{claim}[theorem]{Claim}
\newtheorem{fact}[theorem]{Fact}
\newtheorem{observation}{Observation}
\newtheorem{remark}{Remark}
\newtheorem{apptheorem}{Theorem}[section]
\newtheorem{appcorollary}[apptheorem]{Corollary}
\newtheorem{appproposition}[apptheorem]{Proposition}
\newtheorem{applemma}[apptheorem]{Lemma}
\newtheorem{appclaim}[apptheorem]{Claim}
\newtheorem{appfact}[apptheorem]{Fact}

\fancyhead{}

\title{Generative Datalog with Stable Negation}

\author{Mario Alviano}
\affiliation{%
	\institution{University of Calabria}
	\country{}
}
\email{alviano@mat.unical.it }

\author{Matthias Lanzinger}
\affiliation{%
	\institution{University of Oxford}
		\country{}
}
\email{matthias.lanzinger@cs.ox.ac.uk}

\author{Michael Morak}
\affiliation{%
	\institution{University of Klagenfurt}
	\country{}
}
\email{michael.morak@aau.at}

\author{Andreas Pieris}
\affiliation{%
	\city{University of Edinburgh \& University of Cyprus}
		\country{}
}
\email{apieris@inf.ed.ac.uk}

\begin{abstract} 
	Extending programming languages with stochastic behaviour such as
	probabilistic choices or random sampling has a long tradition in
	computer science. A recent development in this direction is a
	declarative probabilistic programming language, proposed by B\'ar\'any
	et al. in 2017, which operates on standard relational databases. In
	particular, B\'ar\'any et al. proposed generative Datalog, a
	probabilistic extension of Datalog that allows sampling from discrete
	probability distributions. Intuitively, the output of a generative
	Datalog program $\Pi$ on an input database $D$ is a probability space
	over the minimal models of $D$ and $\Pi$, the so-called possible
	outcomes. This is a natural generalization of the (deterministic)
	semantics of Datalog, where the output of a program on a database is
	their unique minimal model. A natural question to ask is how generative
	Datalog can be enriched with the useful feature of negation, which in
	turn leads to a strictly more expressive declarative probabilistic
	programming language. In particular, the challenging question is how the
	probabilistic semantics of generative Datalog with negation can be
	robustly defined. Our goal is to provide an answer to this question by
	interpreting negation according to the stable model semantics.
	%
\end{abstract}

\maketitle

\section{Introduction}\label{sec:introduction}

Extending programming languages with stochastic behaviour, such as probabilistic
choices or random sampling, has a long tradition in computer science that goes
back in the 70s and 80s~\cite{Kozen81,Saheb-Djahromi78}.
During the last decade or so, there was a significant effort in developing dedicated probabilistic programming languages (for example, Pyro~\cite{BCJOPKSSH19}, Stan~\cite{JSSv076i01}, Church~\cite{GMRBT08}, R2~\cite{NHRS14}, Figaro~\cite{Pfeffer}, and Anglican~\cite{ToMW15}, to name a few) that allow the specification and ``execution'', via probabilistic inference, of sophisticated probabilistic models.

A recent development in this direction is a purely declarative probabilistic programming language based on Datalog (PPDL), proposed by B\'ar\'any et al.~\cite{BCKOV17}, which operates on standard relational databases. 
%
A PPDL program consists of the generative and the constraint component.
The generative component is essentially a rule-based program written in a probabilistic extension of Datalog, called generative Datalog, that allows sampling from discrete probability distributions. Intuitively, the output of a generative Datalog program $\Pi$ on an input database $D$ is a probability space over the minimal models of $D$ and $\Pi$,
the so-called possible outcomes. Typically, the probability space over the possible outcomes obtained via such a generative process is called prior distribution. The probabilistic semantics of generative Datalog can be seen as a natural generalization of the (deterministic) semantics of Datalog, where the output of a program on a database is their unique minimal model. 
Now, the constraint component allows
one to pose logical constraints that the relevant possible outcomes should satisfy and, semantically,
transforms the prior distribution into 
the subspace obtained by conditioning on the constraints.

Generative Datalog is indeed a powerful probabilistic formalism that allows for defining a rich family of probabilistic models, while retaining the key property of declarativity, that is, the independence of the order in which the rules are executed. On the other hand, as discussed below, there are simple probabilistic models that cannot be defined using generative Datalog due to its monotonic nature. 


\begin{example}[\textbf{Network Resilience}]\label{exa:netwrok-resilience}
	We consider a network of routers, and assume that some of them have been
	infected by a malware that interrupts the routing, and it also attempts
	to infect neighbouring routers with a success rate of 10\%. We say that
	such a network is dominated by the malware if all routers are infected
	or isolated in the network (i.e., connected only to infected routers).
	In such a scenario, we would like to be able to predict how likely it is
	for the network to be dominated.
	%
	The propagation of the malware can be encoded using the generative Datalog rule
	\[
	\text{\rm Infected}(x,\mi{true}), \text{\rm Connected}(x,y) \ra \text{\rm Infected}(y,\flip\params{0.1})
	\]
	meaning that if the router $x$ is infected and $x$ is connected with
	$y$, then $y$ is also infected with probability $0.1$; that is, the
	probabilistic term $\flip\params{0.1}$ becomes $\mi{true}$ with
	probability $0.1$, or $\mi{false}$ with probability $0.9$.
	However, the domination of the network is a non-monotonic
	property in the sense that expanding a dominated network may result in a
	non-dominated one, whereas generative Datalog is inherently monotonic.
	Hence, although the propagation of the malware can be encoded via
	generative Datalog, the likelihood of domination cannot be predicted via
	this formalism. \hfill\markfull
\end{example}

With the aim of overcoming the above weakness of generative Datalog, we enrich
it with the useful feature of \emph{negation as failure}, interpreted according to the
standard stable model semantics stemming from logic programming~\cite{GeLi88}.
Having such a formalism in place, we could predict the likelihood of a network
being dominated using the probabilistic rule from
Example~\ref{exa:netwrok-resilience}, and the rules
\begin{align*}
&\text{\rm Router}(x), \naf \text{\rm Infected}(x,\mi{true}) \ra \textrm{Uninfected}(x)\\
&\text{\rm Uninfected}(x), \text{\rm Uninfected}(y), \text{\rm Connected}(x,y) \ra \text{\rm Fail},
\end{align*}
which essentially state that a router which is {\em not} infected is uninfected, and whenever two uninfected routers are connected in the network, then the malware fails to dominate the network.

\medskip
\noindent \paragraph{Our Contributions.}
Adding stable negation to generative Datalog comes with several technical
complications, discussed in Section~\ref{sec:gen-datalog}, that need to be
studied and understood in order to define a robust semantics. This is precisely
the goal of the present work. Our main contributions can be summarized as
follows.

In Section~\ref{sec:gen-datalog}, we introduce a formal semantics for generative Datalog with stable negation. 
This
cannot simply be done by following the development underlying the semantics of
positive generative Datalog from~\cite{BCKOV17} and considering stable models
instead of minimal models.
In fact, with negation, it is not enough to consider single models; we need to
thing of possible outcomes as \emph{sets} of stable models. These sets are
represented symbolically via \emph{ground programs} that also explicitly encode
the underlying probabilistic choices.
This last property is crucial
since, due to non-monotonicity,
the probabilistic choices are not necessarily reflected in the resulting set of
stable models.
Hence, a set of stable models may not
carry enough information to allow us to calculate its probability.
	
In Section~\ref{sec:chase}, we show that the probabilistic semantics from
Section~\ref{sec:gen-datalog} can be equivalently defined via a fixpoint
procedure. We introduce a new chase procedure that operates on ground programs
(rather than on databases, as in the standard chase procedure), which we then use to
define a chase-based probability space that leads to the desired fixpoint
semantics. Interestingly, the semantics is independent of the
order in which the rules are executed.
	
Finally, in Section~\ref{sec:stratified-negation}, we concentrate on the central class of generative Datalog programs with {\em stratified negation}, and we provide further evidence of the robustness of our semantics.

A discussion on related work can be found in the appendix.

\section{Preliminaries}\label{sec:preliminaries}

We recall the basics on relational databases, TGDs with stable negation, and probability spaces.
We assume the disjoint countably infinite sets $\ins{C}$ and $\ins{V}$ of {\em constants} and {\em variables}, respectively; we refer to constants and variables as {\em terms}. For $n > 0$, let $[n] = \{1,\ldots,n\}$.

\medskip

\noindent\paragraph{Relational Databases.}
A {\em (relational) schema} $\ins{S}$ is a finite set of relation names (or predicates) with associated arity; we write $\ar{R}$ for the arity of a predicate $R$.
A {\em (relational) atom} $\alpha$ over $\ins{S}$ is an expression of the form $R(t_1,\ldots,t_n)$, where $R \in \ins{S}$, $\ar{R} = n$, and $t_i \in \ins{C} \cup \ins{V}$ for each $i \in [n]$.
A {\em literal} over $\ins{S}$ is either an atom over $\ins{S}$ (i.e., a positive literal), or an atom over $\ins{S}$ preceded by the negation symbol $\naf$ (i.e., a negative literal).
An {\em instance} of $\ins{S}$ is a (possibly infinite) set of atoms over $\ins{S}$ that mention only constants, while a {\em database} of $\ins{S}$ is a {\em finite} instance of $\ins{S}$.
We write $\adom{I}$ for the set of constants occurring in an instance $I$.
%
%
%
For simplicity, in the rest of the paper, we assume that the constants of $\ins{C}$ are translatable into real numbers. Therefore, whenever we refer to a constant of $\ins{C}$, we actually refer to its translation into a real number. 

\OMIT{
\medskip
\noindent\paragraph{Datalog with Stable Negation.}
%
%
A {\em Datalog rule with negation} (or simply {\em Datalog$^\naf$ rule}) $\rho$ over a schema $\ins{S}$ is an expression
\[
R_1(\bar u_1),\ldots,R_n(\bar u_n), \naf P_1(\bar v_1),\ldots,\naf P_m(\bar v_m) \ra R_0(\bar w)
\]
for $n,m \geq 0$, where $R_0,R_1,\ldots,R_n,P_1,\ldots,P_m$ are predicates of $\ins{S}$, $\bar u_1,\ldots,\bar u_n$ are tuples over $\ins{C}  \cup \ins{V}$, and $\bar v_1,\ldots,\bar v_m,\bar w$ are tuples over $\ins{C} \cup \ins{V}$ such that each of their variables is mentioned in $\bar u_j$ for some $j \in [n]$; the latter condition is known as {\em safety}. 
The atom that appears on the right of the $\ra$ symbol is the {\em head} of $\rho$, denoted $H(\rho)$, while the expression on the left of the $\ra$ symbol is the {\em body} of $\rho$, denoted $B(\rho)$. 
We further write $B^+(\rho)$ for the set of positive literals in $B(\rho)$, and $B^-(\rho)$ for the set of atoms appearing in the negative literals of $B(\rho)$.
A {\em Datalog$^\naf$ program} (resp., {\em infinitary Datalog$^\naf$ program}) $\Pi$ over $\ins{S}$ is a finite (resp., infinite) set of Datalog$^\naf$ rules over $\ins{S}$.\footnote{Of course, in real-life scenarios, we are only interested in finite programs. Having said that, infinitary programs are a very useful technical tool.} 
A predicate $R \in \ins{S}$ occurring in $\Pi$ is called {\em extensional} if there is no rule of the form ${\rm body} \ra R(\bar w)$ in $\Pi$, i.e., $R$ occurs only in rule-bodies, and {\em intensional} if there exists at least one rule of the form ${\rm body} \ra R(\bar w)$ in $\Pi$, that is, $R$ appears in the head of at least one rule of $\Pi$. The {\em extensional (database) schema} of $\Pi$, denoted $\edb{\Pi}$, consists of the extensional predicates in $\Pi$, while the {\em intensional schema} of $\Pi$, denoted $\idb{\Pi}$, is the set of all intensional predicates in $\Pi$. The {\em schema} of $\Pi$, denoted $\sch{\Pi}$, is the set of predicates $\edb{\Pi} \cup \idb{\Pi}$. 
%
%
%
We write $\pos{\Pi}$ for the program obtained from $\Pi$ by removing all
negative literals, i.e., $\pos{\Pi} = \{B^+(\rho) \ra H(\rho) \mid \rho \in
\Pi\}$. Let $\heads{\Pi}$ be the set of head atoms of $\Pi$.

In this work, we interpret negation according to the stable model semantics~\cite{GeLi88}. To formally define the adopted semantics, we need to recall when an instance is a model of an (infinitary) Datalog$^\naf$ program $\Pi$.
%
Given a Datalog$^\naf$ rule $\rho$ of the form
\[
R_1(\bar u_1),\ldots,R_n(\bar u_n), \naf P_1(\bar v_1),\ldots,\naf P_m(\bar v_m) \ra R_0(\bar w)
\]
we write $\dep_\rho$ for the first-order sentence
\begin{multline*}
\forall x_1 \cdots \forall x_n (R_1(\bar u_1) \wedge \ldots \wedge R_n(\bar u_n) \wedge\\ 
\neg P_1(\bar v_1)\wedge \ldots \wedge \neg P_m(\bar v_m) \ra R_0(\bar w)),
\end{multline*}
where $x_1,\ldots,x_n$ are all the variable occurring in $\rho$. In case $B(\rho)$ is empty, which means that $\bar w$ consists only of constants due to the safety condition, $\dep_\rho$ is the sentence $\text{\rm True} \ra R(\bar w)$.
For an (infinitary) Datalog$^\naf$ program $\Pi$ over a schema $\ins{S}$, let $\dep_\Pi$ be the first-order sentence $\bigwedge_{\rho \in \Pi} \dep_\rho$. We say that an instance $I$ of $\ins{S}$ is a {\em model} of $\Pi$ if it is a (classical) model of the sentence $\dep_\Pi$; note that here we adopt the {\em unique-name assumption}, that is, every constant of $\ins{C}$ refers to a distinct object.

We further need to recall the notion of grounding for Datalog$^\naf$ programs.
A {\em ground instance} of a Datalog$^\naf$ rule $\rho$ is obtained by replacing each variable in $\rho$ with a constant of $\ins{C}$. 
Given a Datalog$^\naf$ program $\Pi$, the {\em grounding of $\Pi$}, denoted $\ground{\Pi}$, is the infinitary program consisting of all the ground instances of the rules of $\Pi$.

Consider now an (infinitary) Datalog$^\naf$ program $\Pi$. Given an instance $I$ of $\sch{\Pi}$, the {\em reduct of $\Pi$ w.r.t. $I$}, denoted $\Pi^I$, is obtained from $\ground{\Pi}$ by first removing all the rules with a false negative literal, and then eliminating the remaining negative literals, that is,
\[
\Pi^I\ =\ \pos{\{\rho \in \ground{\Pi} \mid B^-(\rho) \cap I = \emptyset\}}.
\]
We say that $I$ is a {\em stable model of $\Pi$} if $I$ is a model of $\Pi$, and there is no instance $J \subsetneq I$ such that $J$ is model of $\Pi^I$.
Intuitively, $I$ is a stable model of $\Pi$ if it can be obtained by ``executing'' $\Pi$ using $I$ as an oracle for the negative literals. We write $\sms{\Pi}$ for the set of stable models of $\Pi$.
Finally, given a database $D$ of $\edb{\Pi}$, an instance $I$ of $\sch{\Pi}$ is a {\em stable model of $D$ and $\Pi$} if $I \in \sms{\Pi[D]}$ with $\Pi[D] = \{ \ra \alpha \mid \alpha \in D\} \cup \Pi$. We write $\sms{D,\Pi}$ for the set of stable models of $D$ and $\dep$, i.e., $\sms{D,\dep} = \sms{\Pi[D]}$.
}


\medskip
\noindent\paragraph{TGDs with Stable Negation.} 
%
A {\em tuple-generating dependency with negation} (or simply TGD$^\naf$) $\sigma$ over a schema $\ins{S}$ is an expression
\[
\forall \bar x \forall \bar y \, (\phi(\bar x, \bar y)\ \ra\ \exists \bar z \, \psi(\bar x, \bar z)),
\]
where $\phi(\bar x,\bar y)$ (resp., $\psi(\bar x,\bar z)$) is a quantifier-free conjunction of literals (resp., atoms) over $\ins{S}$ with variables from $\bar x \cup \bar y$ (resp., $\bar x \cup \bar z$), and every variable in a negative literal occurs also in a positive literal; the latter condition is known as {\em safety}.
For brevity, we will write $\sigma$ as $\phi(\bar x, \bar y) \ra \exists \bar z \, \psi(\bar x, \bar z)$, and use comma instead of $\wedge$ for joining atoms.
The formula $\psi(\bar x,\bar z)$ is the {\em head} of $\sigma$, denoted $H(\sigma)$, whereas the formula $\phi(\bar x,\bar y)$ is the {\em body} of $\sigma$, denoted $B(\sigma)$. We further write $B^+(\sigma)$ for the set of positive literals in $B(\sigma)$, and $B^-(\sigma)$ for the set of atoms appearing in the negative literals of $B(\sigma)$. 
A {\em TGD$^\naf$ program} (resp., {\em infinitary TGD$^\naf$ program}) $\dep$ over $\ins{S}$ is a finite (resp., infinite) set of TGD$^\naf$ over $\ins{S}$. The {\em schema} of $\dep$, denoted $\sch{\dep}$, is the set of predicates in $\dep$, which is always finite, even if $\dep$ is
infinitary.
%
Let $\heads{\dep}$ be the set of head atoms of $\dep$.

For our purposes, although syntactically a TGD$^\naf$ is a first-order sentence, semantically is not since the negation should be interpreted as stable negation rather than classical negation. We proceed to recall the formal definition as given in~\cite{AlMP17}. The high-level idea is, given an (infinitary) TGD$^\naf$ program $\dep$, to construct a second-order sentence $\SM[\dep]$ that encodes the intention underlying the stable model semantics, and then define the stable models of $\dep$ as the (classical) models of $\SM[\dep]$.
\OMIT{
At this point, one may think that we could simply eliminate the existentially quantified variables in $\dep$ via Skolemization, which in turn leads to a logic program with function symbols, and then adopt the standard stable model semantics for logic programs 
as done in~\cite{GHKL21}. However, despite the usefulness of this approach, as thoroughly discussed in~\cite{AlMP17}, it fails to fully capture the intended meaning of TGDs with stable negation.
}
Consider an (infinitary) TGD$^\naf$ program $\dep$. Let $\ins{R} = (R_1,\ldots,R_n)$ be the list of predicates of $\sch{\dep}$ in some fixed order (e.g., lexicographic order), and $\ins{X}_\dep = (X_{\dep}^{1},\ldots,X_{\dep}^{n})$ a list of $n$ distinct predicate variables. For a literal $\ell$ occurring in $\dep$, let
\begin{align*}
\repl_{\dep}(\ell)\ =\
\begin{cases}
X_{\dep}^{i}(\bar{t}), & \mbox{if } \ell = R_i(\bar{t}) \\
\neg R_i(\bar{t}), & \mbox{if } \ell = \neg R_i(\bar{t}).
\end{cases}
\end{align*}
We define $\repl_{\dep}(\dep)$ as the (infinitary) TGD$^\naf$ program obtained by applying $\repl_\dep(\cdot)$ to every literal in $\dep$. The sentence $\SM[\dep]$ is
\[
 \bigwedge_{\sigma \in \dep} \sigma\ \wedge\ \neg \exists \ins{X}_{\dep} \left( (\ins{X}_\dep < \ins{R}) \wedge \repl_{\dep}(\dep)\right),
\]
where $\ins{X}_\dep < \ins{R}$ denotes the sentence
\begin{multline*}
\bigwedge_{i=1}^{n} \forall x_1 \cdots \forall x_{\ar{R_i}} \left(X_{\dep}^{i}(x_1,\ldots,x_{\ar{R_i}}) \ra R_i(x_1,\ldots,x_{\ar{R_i}})\right) \wedge\\
\neg\left(\bigwedge_{i=1}^{n} \forall x_1 \cdots \forall x_{\ar{R_i}} \left(R_i(x_1,\ldots,x_{\ar{R_i}}) \ra X_{\dep}^{i}(x_1,\ldots,x_{\ar{R_i}})\right)\right).
\end{multline*}
An instance $I$ of $\sch{\dep}$ is a {\em stable model of $\dep$} if it is a (classical) model of $\SM[\dep]$; note that here we adopt the {\em unique-name assumption}, that is, every constant of $\ins{C}$ refers to a distinct object. We write $\sms{\dep}$ for the set of stable models of $\dep$.
Finally, given a database $D$ of $\sch{\dep}$, an instance $I$ of $\sch{\dep}$ is a {\em stable model of $D$ and $\dep$} if $I \in \sms{\dep[D]}$ with $\dep[D] = \{\mathsf{True} \ra \alpha \mid \alpha \in D\} \cup \dep$. We write $\sms{D,\dep}$ for the set of stable models of $D$ and $\dep$, i.e., $\sms{D,\dep} = \sms{\dep[D]}$.

\OMIT{
\medskip

\noindent {\it Remark.} Recall that in the definition of the stable model semantics for Datalog$^\naf$, a rule $\rho$ is converted into a logical sentence $\dep_\rho$. Observe that $\dep_\rho$ is a TGD$^\naf$ without existentially quantified variables. Hence, one could define the stable model semantics for Datalog$^\naf$ as for TGD$^\naf$ without relying on grounding.
Indeed, for an (infinitary) Datalog$^\naf$ program $\Pi$, $\sms{\Pi} = \sms{\{\dep_\rho \mid \rho \in \Pi\}}$~\cite{AlMP17}.

\medskip

\noindent {\it Remark.} Henceforth, by abuse of terminology, we may silently treat an (infinitary) TGD$^\naf$ program {\em without} existentially quantified variables as an (infinitary) Datalog$^\naf$ program, and vice versa.
}

\medskip
\noindent\paragraph{Probability Spaces.} 
A {\em discrete probability space} is a pair $(\Omega,P)$, where $\Omega$ is a finite or countably infinite set, called the {\em sample space}, and $P$ is a function $\Omega \ra [0,1]$ such that $\sum_{o \in \Omega} P(o) = 1$, called {\em discrete probability distribution over $\Omega$}.
Let $\mathcal{P}_{\Omega}$ be the set that collects all the functions $P : \Omega \ra [0,1]$ such that $(\Omega,P)$ is a discrete probability space, that is, all the  discrete probability distributions over $\Omega$. A {\em parameterized probability distribution over $\Omega$} is a function $\delta : \mathbb{R}^k \rightarrow \mathcal{P}_\Omega$ with $k >0$ being the parameter dimension, i.e., $\delta(\bar p)$ is a discrete probability distribution over $\Omega$ for every parameter instantiation $\bar{p} \in \mathbb{R}^k$. For the sake of presentation, we write $\delta\params{\bar p}$ instead of $\delta(\bar p)$ to avoid the overuse of parentheses, i.e., expressions of the  form $\delta(\cdot)(\cdot)$.
Due to our assumption that constants are essentially real numbers, it suffices to consider discrete probability distributions that are {\em numerical} , that is, over a sample space $\Omega \subseteq \mathbb{R}$.
%
A simple illustrative example can be found in the appendix.

\OMIT{
\begin{example}
	Throwing a die is a classical example of a random process. The throwing of an {\em unbiased} die can be modelled via the discrete probability space $(\Omega,P)$, where $\Omega = \{1,\ldots,6\}$, the six faces of a die, and $P$ is such that $P(i) = \frac{1}{6}$ for each $i \in \Omega$.
	On the other hand, we can model the throwing of a {\em biased} die by using a parameterized probability distribution. In particular, this can be done via $\mathsf{Die} : \mathbb{R}^6 \ra \mathcal{P}_{\Omega}$ with $\Omega = \{0,1,\ldots,6\}$ defined as follows: for every $\bar p = (p_1,\ldots,p_6) \in \mathbb{R}^6$, $\sum_{i=1}^{6} p_i = 1$ implies $\mathsf{Die}\params{\bar p}(0) = 0$ and $\mathsf{Die}\params{\bar p}(i) = p_i$ for each $i \in [6]$; otherwise, $\mathsf{Die}\params{\bar p}(0) = 1$ and $\mathsf{Die}\params{\bar p}(i) = 0$ for each $i \in [6]$. Note that the outcome $0$ is associated with incorrect instantiations of the parameters. \hfill\markfull
\end{example}
}

We further need to recall the notion of probability space that goes beyond countable sample spaces.
Let $\Omega$ be a (possibly uncountable) set. A {\em $\sigma$-algebra} over $\Omega$ is a subset $\mathcal{F}$ of $2^\Omega$ (the powerset of $\Omega$), i.e., a collection of subsets of $\Omega$, that (i) contains $\Omega$, (ii) is closed under complement, i.e., $F \in \mathcal{F}$ implies $\Omega \setminus F \in \mathcal{F}$, and (iii) is closed under countable unions, i.e., for a countable set $\mathcal{E}$ of elements of $\mathcal{F}$, $\bigcup_{E \in \mathcal{E}} E \in \mathcal{F}$. As a consequence, a $\sigma$-algebra contains the empty set, and is closed under countable intersections.
A {\em probability space} is defined as a triple $(\Omega, \mathcal{F}, P)$, where
\begin{itemize}
	\item $\Omega$ is a (possibly uncountable) set, called the {\em sample space},
	\item $\mathcal{F}$ is a $\sigma$-algebra over $\Omega$, called the {\em event space}, and
	\item $P : \mathcal{F} \ra [0,1]$, called a {\em probability measure}, is such that $P(\Omega) = 1$, and, for every countable set $\mathcal{E}$ of pairwise disjoint elements of $\mathcal{F}$, $P(\bigcup_{E \in \mathcal{E}} E) = \sum_{E \in \mathcal{E}} P(E)$.
\end{itemize}
For a non-empty $F \subseteq 2^\Omega$, the closure of $F$ under complement and countable unions is a $\sigma$-algebra, and it is said to be {\em generated} by $F$.

\section{Generative Datalog with Negation}\label{sec:gen-datalog}

A Datalog$^\naf$ program (which is essentially a TGD$^\naf$ program without existentially quantified variables) specifies how to obtain a set of stable models from an input database. In this section, we present generative Datalog$^\naf$ programs that essentially specify how to infer a distribution over sets of stable models from an input database. We first present the syntax of generative Datalog$^\naf$ programs. We then give an informal description of their semantics, highlighting the challenges in defining the probabilistic semantics in question, and then proceed with the formal definition. 
%
Henceforth, by $\Delta$ we refer to a finite set of parameterized numerical probability distributions.

\medskip
\noindent \paragraph{Syntax.}
%
A {\em $\Delta$-term} is an expression of the form $\delta\params{\bar p}[\bar q]$, where $\bar p$ is a non-empty tuple of terms, $\bar q$ is a (possibly empty) tuple of terms, and $\delta$ belongs to $\Delta$ with parameter dimension $|\bar p|$. We call $\bar p$ the {\em distribution parameters}, and $\bar q$ the (optional) {\em event signature}. When the event signature is missing we simply write $\delta\params{\bar p}$.
Intuitively, $\delta\params{\bar p}[\bar q]$ denotes a sample from the probability distribution $\delta\params{\bar p}$, where different samples are drawn for different event signatures.
A {\em $\Delta$-atom} over a schema $\ins{S}$ is an expression of the form $R(t_1,\ldots,t_n)$, where $R \in \ins{S}$, $\ar{R} = n$, and $t_i$ is either an ordinary term (constant or variable), or a $\Delta$-term for each $i \in [n]$. In other words, a $\Delta$-atom is an atom that can mention, apart from terms, also $\Delta$-terms.
%

A {\em generative Datalog rule with negation w.r.t.~$\Delta$} (or simply {\em GDatalog$^{\naf}[\Delta]$ rule}) $\rho$ over a schema $\ins{S}$ is an expression of the form
\[
R_1(\bar u_1),\ldots,R_n(\bar u_n), \naf P_1(\bar v_1),\ldots,\naf P_m(\bar v_m) \ra R_0(\bar w)
\]
for $n,m \geq 0$, where $R_0,R_1,\ldots,R_n,P_1,\ldots,P_m$ are predicates of $\ins{S}$, $\bar u_1,\ldots,\bar u_n$ are tuples over $\ins{C}  \cup \ins{V}$, $\bar v_1,\ldots,\bar v_m$ are tuples over $\ins{C} \cup \ins{V}$ such that each of their variables is mentioned in $\bar u_j$ for some $j \in [n]$, and $\bar w$ is a tuple consisting of terms from $\ins{C} \cup \ins{V}$ and $\Delta$-terms such that each of its variables, possibly as part of the distribution parameters or the event signature of a $\Delta$-term, is mentioned in $\bar u_j$ for some $j \in [n]$. 
%
%
The ($\Delta$-)atom that appears on the right of the $\ra$ symbol is the {\em head} of $\rho$, denoted $H(\rho)$, while the expression on the left of the $\ra$ symbol is the {\em body} of $\rho$, denoted $B(\rho)$. 
We further write $B^+(\rho)$ for the set of positive literals in $B(\rho)$, and $B^-(\rho)$ for the set of atoms appearing in the negative literals of $B(\rho)$.

A {\em GDatalog$^\naf[\Delta]$ program} $\Pi$ over $\ins{S}$ is a finite set of Datalog$^\naf$ rules over $\ins{S}$.
A predicate $R \in \ins{S}$ occurring in $\Pi$ is called {\em extensional} if there is no rule of the form ${\rm body} \ra R(\bar w)$ in $\Pi$, i.e., $R$ occurs only in rule-bodies; otherwise, it is called {\em intensional}.
The {\em extensional (database) schema} of $\Pi$, denoted $\edb{\Pi}$, consists of the extensional predicates in $\Pi$, while the {\em intensional schema} of $\Pi$, denoted $\idb{\Pi}$, is the set of all intensional predicates in $\Pi$. The {\em schema} of $\Pi$, denoted $\sch{\Pi}$, is the set $\edb{\Pi} \cup \idb{\Pi}$. 
%
Given a GDatalog$^\naf[\Delta]$ program $\Pi$, and a database $D$ of $\edb{\Pi}$, $\Pi[D]$ is the program $\{ \ra \alpha \mid \alpha \in D\} \cup \Pi$.

\OMIT{
defined in the same way as a Datalog$^\naf$ rule over $\ins{S}$ with the key difference that the head of $\rho$ can be either an (ordinary) atom or a $\Delta$-atom over $\ins{S}$. Let us stress that the safety condition must be satisfied even if the head is a $\Delta$-atom. More precisely, every variable that appears in a negative literal of the body of $\rho$, or in the head of $\rho$, possibly as part of the distribution parameters or the event signature of a $\Delta$-term, must also appear in a positive literal of the body of $\rho$. The notations $H(\rho)$, $B(\rho)$, $B^+(\rho)$ and $B^-(\rho)$ can be defined as for Datalog$^\naf$ rules.
A {\em GDatalog$^{\naf}[\Delta]$ program} $\Pi$ over $\ins{S}$ is a finite set of GDatalog$^{\naf}[\Delta]$ rules over $\ins{S}$; as we shall see, in our later development we do not need to deal with infinitary GDatalog$^{\naf}[\Delta]$ programs.
The notions of extensional and intensional predicate, as well as $\edb{\Pi}$, $\idb{\Pi}$ and $\sch{\Pi}$ are defined as for Datalog$^\naf$ programs. 
}

\begin{example}\label{exa:gdatalog-syntax}
	Consider the ``network resilience'' example already discussed in the Introduction. Recall that the network is dominated by the malware if all routers are infected or isolated (i.e., connected only to infected routers).
	Assume we have the following schema: 
	\begin{eqnarray*}
		\text{\rm Router}(\mi{router\_id}) && \text{\rm Infected}(\mi{router\_id},\mi{boolean})\\
		\text{\rm Connected}(\mi{router\_id},\mi{router\_id})&& \text{\rm Uninfected}(\mi{router\_id}).
	\end{eqnarray*}
	Consider the parameterized distribution $\flip : \mathbb{R} \ra \mathcal{P}_{\{0,1\}}$ over $\{0,1\}$ such that $\flip\params{p}(1) = p$ and $\flip \params{p}(0) = 1-p$, and let $\Delta = \{\flip\}$.
	The GDatalog$^\naf[\Delta]$ program $\Pi$ over the above schema that encodes the malware domination cases is as follows:
	\begin{align*}
	&\text{\rm Infected}(x,1), \text{\rm Connected}(x,y) \ra \text{\rm Infected}(y,\flip\params{0.1}[x,y])\\
	&\text{\rm Router}(x), \naf \text{\rm Infected}(x,1) \ra \textrm{Uninfected}(x)\\
	&\text{\rm Uninfected}(x), \text{\rm Uninfected}(y), \text{\rm Connected}(x,y) \ra \bot,
	\end{align*}
	where $\bot$ denotes $\mathsf{False}$. The symbol $\bot$ is syntactic sugar as $\mathsf{False}$ can be always simulated using stable negation. In particular, we can replace $\bot$ with a $0$-ary predicate $\text{\rm Fail}$ that is forced to be false in every stable model via the rule $\text{\rm Fail}, \neg \text{\rm Aux} \ra \text{\rm Aux}$, where $\text{\rm Aux}$ is a new predicate not mentioned in the program in question. \hfill\markfull
\end{example}

\noindent \paragraph{An Informal Semantics.}
Consider a positive (i.e., without negation) GDatalog$[\Delta]$ program $\Pi$. Every configuration (i.e., a set) $C$ of probabilistic choices made via the $\Delta$-terms during the execution of $\Pi$ on an input database $D$ leads to a {\em different} minimal model $M_C$ of $D$ and $\Pi$ (in fact, of $D$ and a set of TGDs $\dep_\Pi$ that somehow captures the intended meaning of $\Pi$). Furthermore, the probabilistic choices of $C$ are explicitly represented in $M_C$, which means that $M_C$ carries enough information that allows us to calculate its probability. 
According to the probabilistic semantics from~\cite{BCKOV17}, the output of $\Pi$ on $D$ is a probability space over the sample space consisting of all the minimal models of $D$ and $\dep_\Pi$ with non-zero probability, the so-called {\em possible outcomes}. This is a natural generalization of the (deterministic) semantics of Datalog, where the output of a positive program on an input database is their unique minimal model.

Once we add negation to generative Datalog, one may be tempted to think that we can follow a similar approach by considering stable instead of minimal models.
However, in the presence of stable negation, the situation changes significantly. A configuration of probabilistic choices no longer leads to a single minimal model, but rather a (possibly empty) set of stable models. Consider the GDatalog$^\naf[\{\flip\}]$ program $\Pi_{\mi{coin}}$ consisting of
\begin{eqnarray*}
\ra \text{\rm Coin}(\flip\params{0.5}) && \text{\rm Coin}(1), \neg \text{\rm Aux}_1 \ra \text{\rm Aux}_2\\
\text{\rm Coin}(0) \ra \bot && \text{\rm Coin}(1), \neg \text{\rm Aux}_2 \ra \text{\rm Aux}_1
\end{eqnarray*}
which encodes the flipping of a ``fair'' coin.
%
Intuitively, if the flipping results to $0$ (which corresponds to {\em Heads}), 
then there is no stable model, but if the result is $1$ (which corresponds to {\em Tails}), then $\{\text{\rm Aux}_1, \text{\rm Coin}(1)\}$ and $\{\text{\rm Aux}_2, \text{\rm Coin}(1)\}$ are the stable models of $\Pi_{\mi{coin}}$ (in fact, similarly to the positive case, of a TGD$^\naf$ program $\dep_{\Pi_{\mi{coin}}}$ that somehow captures the intended meaning of $\Pi_{\mi{coin}}$; here the database is empty).
It may also happen that different configurations of probabilistic choices lead to the same set of stable models (for example, if we add to $\Pi_{\mi{coin}}$ the rule $\text{\rm Coin(1)} \ra\bot$).
%

The above informal discussion essentially tells us that, in general, there is no correspondence between stable models and configurations of probabilistic choices. 
Thus, it will be conceptually problematic to directly consider individual stable models as possible outcomes, which in turn form the underlying sample space.
This leads to the obvious idea of considering sets of stable models (including the empty set) with non-zero probability, obtained via certain configurations of probabilistic choices, as possible outcomes.
Coming back to the ``coin'' example, this would result in the possible outcome $\{\{\text{\rm Aux}_1, \text{\rm Coin}(1)\}, \{\text{\rm Aux}_2, \text{\rm Coin}(1)\}\}$ for the
configuration where the result of the flip is $1$, and the possible outcome $\emptyset$ for the configuration where the result of the flip is $0$, and the two respective events should have probability $0.5$ each.

Although the above idea is a plausible one, which we have thoroughly explored, it brings us to the other complication that is coming with negation.
Unlike positive generative Datalog, the probabilistic choices are not necessarily reflected in the resulting set of stable models. In other words, there are cases where it is not possible to extract the configuration of probabilistic choices that led to a set of stable models $M$ by simply inspecting $M$. This can be seen in the ``coin'' example, where the fact that the result of the flip is $0$ is not explicitly encoded in the respective stable model, that is, the empty model. 
This is problematic as a set of stable models does not carry enough information that allows us to calculate its probability.
To overcome this complication, the key idea is to symbolically represent a set of stable models $M$ by an (infinitary) ground program such that its set of stable models is precisely $M$, and it also reflects the probabilistic choices that led to $M$. In other words, the possible outcomes will not be sets of stable models, but rather (infinitary) ground programs that lead to sets of stable models.
There is, however, a crucial choice to be made here: infinite possible outcomes (i.e., infinitary ground programs) are considered as valid or invalid. The probabilistic semantics by B\'{a}r\'{a}ny et al. in~\cite{BCKOV17} consider infinite possible outcomes as valid ones. On the other hand, as argued by Grohe et al. in~\cite{GKKL20}, where generative Datalog with continuous distributions is studied, it is both technically and conceptually more meaningful to consider infinite possible outcomes as invalid ones, which are collected in the so-called error event. In our development, we adopt the view by Grohe et al.~\cite{GKKL20}.

This brings us to the importance of the adopted grounding strategy since, depending on how sophisticated this strategy is, a certain configuration of probabilistic choices may have a finite or an infinite representative grounding, which in turn affects the resulting probability space. Thus, the definition of our semantics is parameterized by a grounder, and we will discuss a mechanism for performing a comparison among semantics induced by different grounders.

%

\OMIT{
\medskip
\noindent \paragraph{Grounding Datalog$^\naf$ Programs.} As discussed above, the semantics of generative Datalog with stable negation heavily relies on a notion of grounder. We provide an abstract definition of what we call grounder for (infinitary) Datalog$^\naf$ programs, and then extend it to Datalog$^\naf[\Delta]$ programs. 
Consider an (infinitary) Datalog$^\naf$ program $\Pi$, and a monotonic function $G : 2^{\ground{\Pi}} \ra 2^{\ground{\Pi}}$. For an (infinitary) program $\Pi' \in 2^{\ground{\Pi}}$, we define the sets
\[
G^{0}(\Pi') = \Pi' \quad \text{and} \quad G^{i+1}(\Pi') = G(G^{i}(\Pi')) \text{ for } i > 0,
\]
and we let
\[
G^{\infty}(\Pi')\ =\ \bigcup_{i \geq 0} G^{i}(\Pi').
\]
Due to the monotonicity of $G$, it can be shown that $G^{\infty}(\Pi')$ is the least fixpoint of $G$ that contains $\Pi' \in 2^{\ground{\Pi}}$. 
%
We can now introduce the notion of grounder for Datalog$^\naf$.

\begin{definition}[\textbf{Datalog$^\naf$ Grounder}]\label{def:grounder-datalog}
	Consider an (infinitary) Datalog$^\naf$ program $\Pi$. A {\em grounder of $\Pi$} is a monotonic function
	\[
	G\ :\ 2^{\ground{\Pi}}\ \ra\ 2^{\ground{\Pi}}
	\]
	such that $\sms{G^{\infty}(\emptyset)}= \sms{\Pi}$. $G^{\infty}(\emptyset)$ the {\em $G$-grounding of $\Pi$}. \hfill\markfull
\end{definition}

In simple words, a grounder of an (infinitary) Datalog$^\naf$ program $\Pi$ is a monotonic operator $G$ that after repeatedly applying it, starting from the empty program, until there is nothing else to produce, it builds a ground (infinitary) program that has exactly the same stable models as $\Pi$.
Alternatively, a grounder of $\Pi$ can be seen as a function of the form $\{\emptyset\} \ra 2^{\ground{\Pi}}$ that assigns to the empty program a subset $\Pi'$ of $\ground{\Pi}$ such that $\sms{\Pi'} = \sms{\Pi}$. This alternative form of the notion of grounder for Datalog$^\naf$ will be conceptually useful when we deal below with GDatalog$^\naf[\Delta]$.

The most naive grounder of an (infinitary) Datalog$^\naf$ program $\Pi$ is the one that returns in a single step the whole set $\ground{\Pi}$. It is usually called Herbrand grounder, denoted $\mi{Herbrand}_\Pi$, with $\mi{Herbrand}_\Pi(\Pi') = \ground{\Pi}$ for every $\Pi' \in 2^{\ground{\Pi}}$.
A slightly more clever grounder for $\Pi$ than the Herbrand one, which avoids the construction of some superfluous rules, is the simple grounder $\mi{Simple}_\Pi$ that extends a set of rules $\Pi' \in 2^{\ground{\Pi}}$ by matching the positive body literals of $\Pi$ with ground head atoms of $\Pi'$. The notion of matching can be formalized via homomorphisms with which we assume the reader is familiar. We write $h(A) \subseteq B$ to indicate that $h$ is a homomorphism from a set of atoms $A$ to a set of atoms $B$.
Furthermore, given a rule $\rho$ and a homomorphism $h$ from $B^+(\rho)$ to some other set of atoms, we write $h(\rho)$ for the rule obtained by applying $h$ to the body and head of $\rho$. We then define the simple grounder as follows: for every $\Pi' \in 2^{\ground{\Pi}}$,
\[
\mathit{Simple}_\Pi(\Pi')\ =\ \Pi' \cup \left\{
h(\rho)\ \mid\ 
\rho \in \Pi\ \text{and}\ h(B^+(\rho)) \subseteq \heads{\Pi'}
\right\}.
\]
Despite its simplicity (hence the name ``simple''), $\mathit{Simple}_\Pi$ is a relevant grounder and its usefulness will be revealed later in the paper.
But, of course, one can devise more sophisticated grounders that can recognise that the truth value of some literals is fixed in all stable models, and avoid the generation of trivially satisfied rules.
}

\medskip
\noindent \paragraph{From GDatalog$^\naf[\Delta]$ to TGD$^\naf$.} 
We proceed to formalize the translation of a generative Datalog program with negation into a TGD$^\naf$ program.
%
%
Consider a GDatalog$^\naf[\Delta]$ rule $\rho$
\[
R_1(\bar u_1),\ldots,R_n(\bar u_n), \naf P_1(\bar v_1),\ldots,\naf P_m(\bar v_m) \ra R_0(\bar w),
\]
where $\bar w = (w_1,\ldots,w_{\ar{R_0}})$ with $w_{i_1} = \delta_1\params{\bar p_{1}}[\bar q_{1}],\ldots,w_{i_r} = \delta_r\params{\bar p_{r}}[\bar q_{r}]$, for $1 \leq i_1 < \cdots < i_r \leq \ar{R_0}$, be all the $\Delta$-terms in $\bar w$. 
If $r=0$ (i.e., there are no $\Delta$-terms in $\bar w$), then $\rho_{\exists}$ is defined as the singleton set $\{\dep_\rho\}$, where $\dep_\rho$ is the TGD$^\naf$ without existentially quantified variables obtained after converting $\rho$ into a first-order sentence in the usual way: commas are treated as conjunctions, and all the variables are universally quantified.
Otherwise, $\rho_{\exists}$ is defined as the set consisting of the following TGD$^\naf$:
\begin{eqnarray*}
R_1(\bar u_1),\ldots,R_n(\bar u_n),\naf P_1(\bar v_1),\ldots,\naf P_m(\bar v_m)\ \ra\ \text{\rm Active}_{|\bar q_j|}^{\delta_j}(\bar p_{j},\bar q_{j})\\
\text{\rm Active}_{|\bar q_j|}^{\delta_j}(\bar p_{j},\bar q_{j}) \ra \exists y_j\ \textrm{Result}_{|\bar q_j|}^{\delta_j}(\bar p_{j},\bar q_{j},y_j)
\end{eqnarray*}
for each $j \in [r]$, and
\begin{multline*}
\text{\rm Result}_{|\bar q_1|}^{\delta_1}(\bar p_{1},\bar q_{1},y_1),\ldots, \text{\rm Result}_{|\bar q_r|}^{\delta_r}(\bar p_{r},\bar q_{r},y_r),\\
R_1(\bar u_1),\ldots,R_n(\bar u_n),
\naf P_1(\bar v_1),\ldots,\naf P_m(\bar v_m) \ra\ 
R_0(\bar w'),
\end{multline*}
where $\textrm{Active}_{|\bar q_j|}^{\delta_j}(\bar p_{j},\bar q_{j})$ and $\textrm{Result}_{|\bar q_j|}^{\delta_j}$ are fresh $(|\bar p_j|+|\bar q_j|)$-ary and $(|\bar p_j|+|\bar q_j|+1)$-ary predicates, respectively, not occurring in $\sch{\Pi}$, $y_1,\ldots,y_r$ are distinct variables not occurring in $\rho$, and $\bar w' = (w_1,\ldots,w_{i_1-1},y_1,w_{i_1+1},\ldots,w_{i_r-1},y_r,w_{i_r+1},\ldots,w_{\ar{R_0}})$ is obtained from $\bar w$ by replacing $\Delta$-terms with variables $y_1,\ldots,y_r$.
We finally define $\dep_\Pi$ as the TGD$^\naf$ program $\bigcup_{\rho \in \Pi} \rho_\exists$.
Note that the TGDs of $\dep_{\Pi}$ with an existentially quantified variable in their head, which we call {\em active-to-result (AtR)} TGDs, are essentially encoding the probabilistic choices during an execution of the program. 
We write $\dep_{\Pi}^{\exists}$ for the program that collects all the active-to-result TGDs of $\dep_{\Pi}$, and we write $\dep_{\Pi}^{\not\exists}$ for the program $\dep_{\Pi} \setminus \dep_{\Pi}^{\exists}$.
%

\begin{example}\label{exa:translation-to-tgds}
	Let $\Pi$ be the GDatalog$^\naf[\Delta]$ program given in Example~\ref{exa:gdatalog-syntax}. The TGD$^\naf$ program $\dep_\Pi$ consists of
		\begin{align*}
	&\text{\rm Infected}(x,1), \text{\rm Connected}(x,y) \ra \text{\rm Active}_{2}^{\flip}(0.1,x,y)\\
	&\text{\rm Active}_{2}^{\flip}(0.1,x,y) \ra \exists z \, \text{\rm Result}_{2}^{\flip}(0.1,x,y,z)\\
	&\text{\rm Result}_{2}^{\flip}(0.1,x,y,z),\text{\rm Infected}(x,1), \text{\rm Connected}(x,y) \ra\\ &\hspace{60mm}\text{\rm Infected}(y,z),
	\end{align*}
	obtained from the first rule of $\Pi$, and the last two rules of $\Pi$ (interpreted as TGD$^\naf$ without existentially quantified variables). \hfill\markfull
\end{example}

\noindent \paragraph{Grounding GDatalog$^\naf[\Delta]$ Programs.} Having the translation from GDatalog$^\naf[\Delta]$ to TGD$^\naf$, we can now introduce the notion of grounding for generative Datalog$^\naf$, which relies on a simple notion of grounding for TGD$^\naf$ programs. A {\em ground instance} of a TGD$^\naf$ $\sigma$ is obtained by replacing each (universally or existentially quantified) variable in $\sigma$ with a constant of $\ins{C}$. Given a TGD$^\naf$ program $\dep$, the {\em grounding of $\dep$}, denoted $\ground{\dep}$, is the infinitary program consisting of all the ground instances of the TGD$^\naf$ of $\dep$.
We can now proceed with the grounding for generative Datalog$^\naf$ programs.

Roughly speaking, a grounder $G$ of a generative Datalog$^\naf$ program $\Pi$ is a monotonic function whose domain is a collection of sets of ground AtR TGDs of $\dep_\Pi$, and it assigns to each such set $\dep'$ (which essentially encodes a configuration of probabilistic choices) a subset $\dep''$ of $\ground{\dep_{\Pi}^{\not\exists}}$ such that $\sms{\dep' \cup \dep''}$ is the set of stable models of $\dep_\Pi$ that are consistent with the  probabilistic choices encoded by the ground AtR TGDs in $\dep'$. Then, the grounding of $\Pi$ relative to $G$ will be a set of (infinitary) ground TGD$^\naf$ programs that are somehow induced by $G$. 
We proceed to formalize this.

\OMIT{
Recall the alternative form for a grounder of a Datalog$^\naf$ program $\Pi$ as a function $\{\emptyset\} \ra 2^{\ground{\Pi}}$ that assigns to the empty program a subset $\Pi'$ of $\ground{\Pi}$ such that $\sms{\Pi'} = \sms{\Pi}$. The fact that the domain of such a  function consists only of the empty program reflects the fact that $\Pi$ does not contain probabilistic rules (i.e., rules whose head is a $\Delta$-atom).
Now, when $\Pi$ is a GDatalog$^\naf[\Delta]$, the idea is to consider a function whose domain is a collection of sets of ground AtR TGDs of $\dep_\Pi$, and it assigns to each such set $\dep'$ (which essentially encodes certain probabilistic choices) a subset $\dep''$ of $\ground{\dep_{\Pi}^{\not\exists}}$ such that $\sms{\dep' \cup \dep''}$ is the set of stable models of $\dep_\Pi$ that are consistent with the  probabilistic choices encoded by the ground AtR TGDs in $\dep'$. Note that the notation $\ground{\cdot}$, although it has been only defined for Datalog$^\naf$ programs in Section~\ref{sec:preliminaries}, it can be defined in exactly the same way for TGD$^\naf$ programs. 
We proceed to formalize the above intuitive description.
}

Consider a GDatalog$^\naf[\Delta]$ program $\Pi$. 
%
A subset $\dep$ of $\ground{\dep_{\Pi}^{\exists}}$ is called {\em (functionally) consistent} if, with $\alpha = \text{\rm Active}_{|\bar q|}^{\delta}(\bar p,\bar q)$, for every two ground AtR TGDs $\sigma_o$ and $\sigma_{o'}$ of the form
\[
\alpha\ \ra\ \text{\rm Result}_{|\bar q|}^{\delta}(\bar p,\bar q,o) \quad \text{and} \quad \alpha\ \ra\ \text{\rm Result}_{|\bar q|}^{\delta}(\bar p,\bar q,o')
\]
respectively, it holds that $o = o'$. Intuitively, this means that $\dep$ encodes valid probabilistic choices.
We write $[2^{\ground{\dep_{\Pi}^{\exists}}}]_{=}$ for all the consistent subsets of $\ground{\dep_{\Pi}^{\exists}}$.
Observe that a consistent subset $\dep$ of $\ground{\dep_{\Pi}^{\exists}}$ induces a partial function $\mi{AtR}_\dep : \mi{Act} \ra \mi{Res}$, where $\mi{Act}$ (resp., $\mi{Res}$) is the set of all atoms that can be formed using predicates of the form $\text{\rm Active}_{n}^{\delta}$ (resp., $\text{\rm Result}_{n}^{\delta}$) of $\sch{\dep}$, and constants from $\ins{C}$.
We say that $\mi{AtR}_{\dep}$ is {\em compatible} with a set $\dep' \subseteq \ground{\dep_{\Pi}^{\not\exists}}$, written $\mi{AtR}_\dep \comp \dep'$, if it is defined on every atom of the form $\text{\rm Active}_{n}^{\delta}(\bar p,\bar q)$ occurring in $\heads{\dep'}$, that is, the set of head atoms of $\dep'$.
A {\em totalizer} for the partial function $\mi{AtR}_\dep$ is a set $\dep' \in [2^{\ground{\dep_{\Pi}^{\exists}}}]_{=}$ such that $\dep \subseteq \dep'$ and $\mi{AtR}_{\dep'}$ is total.
We can now introduce the desired notion of grounding.

\begin{definition}[\textbf{Grounding Generative Datalog$^\naf$}]\label{def:grounder-gdatalog}
	Consider a GDatalog$^\naf[\Delta]$ program $\Pi$. A {\em grounder of $\Pi$} is a monotonic function
	\[
	G\ :\ [2^{\ground{\dep_{\Pi}^{\exists}}}]_{=}\ \ra\ 2^{\ground{\dep_{\Pi}^{\not\exists}}} 
	\]
	such that, for every $\dep \in [2^{\ground{\dep_{\Pi}^{\exists}}}]_{=}$, if $\mi{AtR}_\dep \comp G(\dep)$, then $\sms{G(\dep) \cup \dep} = \sms{\dep_{\Pi}^{\not\exists} \cup \dep'}$ for every totalizer $\dep'$ of $\mi{AtR}_{\dep}$.
	The {\em $G$-grounding of $\Pi$}, denoted $G$-$\ground{\Pi}$, is defined as the set
	\begin{multline*}
	\{\dep \cup G(\dep) \mid \dep \in \terminals{G} \text{ and}\\
	\text{there is no } \dep' \in \terminals{G} \text{ such that } \dep' \subsetneq \dep\}
	\end{multline*}
	with $\terminals{G} = \left\{\dep \in [2^{\ground{\dep_{\Pi}^{\exists}}}]_{=}\, \mid\, \mi{AtR}_\dep \comp G(\dep) \right\}$. \hfill\markfull
\end{definition}

The importance of the monotonicity of grounders for generative Datalog$^\naf$ will be revealed in Section~\ref{sec:chase}, where we present our chase-based semantics. Before proceeding with our probabilistic semantics, we present a concrete example of a grounder, which is conceptually very simple (hence the name ``simple grounder''), but at the same time a relevant one as we shall see later in the paper.

\medskip
\noindent \paragraph{Simple Grounder.}
Given a GDatalog$^\naf[\Delta]$ program $\Pi$, we define the so-called simple grounder $\mi{GSimple}_{\Pi}$, which relies on an operator that operates on (infinitary) existential-free TGD$^\naf$ programs. 
Consider an (infinitary) TGD$^\naf$ program $\dep$ without existentially quantified variables. We define the operator $\mi{Simple}_{\dep}$ that extends an (infinitary) TGD$^\naf$ program $\dep' \in 2^{\ground{\dep}}$ by matching the positive body literals of $\dep$ with ground head atoms of $\dep'$. The notion of matching can be formalized via homomorphisms with which we assume the reader is familiar. We write $h(A) \subseteq B$ to indicate that $h$ is a homomorphism from a set of atoms $A$ to a set of atoms $B$. Given a TGD$^\naf$ $\sigma$ and a homomorphism $h$ from $B^+(\sigma)$ to some other set of atoms, we write $h(\sigma)$ for the TGD$^\naf$ obtained by applying $h$ to the body and head of $\sigma$. 
We define $\mi{Simple}_{\dep}$ as the monotonic function from $2^{\ground{\dep}}$ to $2^{\ground{\dep}}$ such that, for every $\dep' \in 2^{\ground{\dep}}$,
\[
\mathit{Simple}_\dep(\dep')\ =\ \dep' \cup \left\{
h(\sigma) \mid 
\sigma \in \dep\ \text{ and }\ h(B^+(\sigma)) \subseteq \heads{\dep'}
\right\}.
\]
We further define the sets
\[
\mi{Simple}_{\dep}^{0}(\dep') = \dep' 
\]
\[
\mi{Simple}_{\dep}^{i+1}(\dep') = \mi{Simple}_{\dep}(\mi{Simple}_{\dep}^{i}(\dep')) \text{ for } i > 0,
\]
and we let
\[
\mi{Simple}_{\dep}^{\infty}(\dep')\ =\ \bigcup_{i \geq 0} \mi{Simple}_{\dep}^{i}(\dep').
\]
Due to the monotonicity of $\mi{Simple}_{\dep}$, it is clear that $\mi{Simple}_{\dep}^{\infty}(\dep')$ is the least fixpoint of $\mi{Simple}_{\dep}$ that contains $\dep' \in 2^{\ground{\dep}}$. We are now ready to define the function $\mi{GSimple}_{\Pi}$.

\begin{definition}[\textbf{Simple Grounder}]\label{def:simple-grounder}
	Consider a GDatalog$^\naf[\Delta]$ program $\Pi$. $\mi{GSimple}_{\Pi} : [2^{\ground{\dep_{\Pi}^{\exists}}}]_{=} \ra 2^{\ground{\dep_{\Pi}^{\not\exists}}}$ is such that
	\[
	\mi{GSimple}_{\Pi}(\dep)\ =\ \mi{Simple}_{\dep'}^{\infty}(\emptyset) \setminus \dep
	\]
	where $\dep' = \dep_{\Pi}^{\not\exists} \cup \dep$, for every $\dep \in [2^{\ground{\dep_{\Pi}^{\exists}}}]_{=}$. \hfill\markfull
\end{definition}

It is not difficult to show that $\mi{GSimple}_{\Pi}$ is indeed a grounder:

\begin{proposition}\label{pro:gsimple-grounder}
	Consider a GDatalog$^\naf[\Delta]$ program $\Pi$. It holds that $\mi{GSimple}_{\Pi}$ is a grounder of $\Pi$.
\end{proposition}

An example that illustrates the simple grounder follows:

\begin{example}\label{exa:gsimple}
	Let $\Pi$ be the GDatalog$^\naf[\Delta]$ program from Example~\ref{exa:gdatalog-syntax}. Its translation into the TGD$^\naf$ program $\dep_\Pi$ can be found in Example~\ref{exa:translation-to-tgds}.
	Consider the database $D$ defined as
	\begin{multline*}
	\{\text{\rm Router}(i) \mid i \in [3]\} \cup \{\text{\rm Infected}(1,1)\}\\
	\{\text{\rm Connected}(i,j) \mid i,j \in [3] \text{ and } i \neq j\}
	\end{multline*}
	that stores a fully connected network consisting of three routers, the first being initially infected.
	Hence, $\mi{GSimple}_{\Pi[D]}(\emptyset)$ contains
	\begin{align*}
		&\text{\rm Infected}(1,1), \text{\rm Connected}(1,2) \ra \text{\rm Active}_{2}^{\flip}(0.1,1,2)\\
		&\text{\rm Infected}(1,1), \text{\rm Connected}(1,3) \ra \text{\rm Active}_{2}^{\flip}(0.1,1,3)
	\end{align*}
	and the following TGD$^\naf$ for all $i,j \in [3]$ with $i \neq j$:
\begin{align*}
	&\text{\rm Router}(i), \naf \text{\rm Infected}(i,1) \ra \textrm{Uninfected}(i)\\
	&\text{\rm Uninfected}(i), \text{\rm Uninfected}(j), \text{\rm Connected}(i,j) \ra \bot.
\end{align*}
	Now, considering the program
	\[
	\dep\ =\ \left\{\text{\rm Active}_{2}^{\flip}(0.1,1,i) \ra \text{\rm Result}_{2}^{\flip}(0.1,1,i,0) \mid i \in \{2,3\}\right\},
	\]
	$\mi{GSimple}_{\Pi[D]}(\Sigma)$ extends $\mi{GSimple}_{\Pi[D]}(\emptyset)$ with
	\begin{align*}
		&\text{\rm Infected}(1,1), \text{\rm Connected}(1,i) \ra \text{\rm Active}_{2}^{\flip}(0.1,1,2)\\
		&\text{\rm Result}_{2}^{\flip}(0.1,1,i,0),\text{\rm Infected}(1,1), \text{\rm Connected}(1,i) \ra\\ &\hspace{60mm}\text{\rm Infected}(i,0)
	\end{align*}
	for each $i \in \{2,3\}$.
	%
	We have that $\Sigma \in \terminals{\mi{GSimple}_{\Pi[D]}}$, and $\Sigma \cup \mi{GSimple}_{\Pi[D]}(\Sigma) \in \mi{GSimple}_{\Pi[D]}$-$\ground{\Pi[D]}$.
	Also, $\sms{\Sigma \cup \mi{GSimple}_{\Pi[D]}(\Sigma)} = \emptyset$
	since routers 2 and 3 are uninfected.
\hfill\markfull
\end{example}

\OMIT{
A concrete grounder of a GDatalog$^\naf[\Delta]$ program $\Pi$ is the simple grounder $\mi{GSimple}_{\Pi}$ (we adopt the name $\mi{GSimple}_\Pi$ instead of $\mi{Simple}_\Pi$ in order to avoid confusion with the simple grounder for Datalog$^\naf$), which relies on the simple grounder for Datalog$^\naf$ defined above.
In particular, for every $\dep \in \left[2^{\ground{\dep_{\Pi}^{\exists}}}\right]_{=}$,
\[
\mi{GSimple}_{\Pi}(\dep)\ =\ \mi{Simple}_{\dep'}^{\infty}(\emptyset) \setminus \dep \quad \text{where} \quad \dep' = \dep_{\Pi}^{\not\exists} \cup \dep.
\]
It is not difficult to show that this is indeed a grounder:

\begin{proposition}\label{pro:gsimple-grounder}
	Consider a GDatalog$^\naf[\Delta]$ program $\Pi$. It holds that $\mi{GSimple}_{\Pi}$ is a grounder of $\Pi$.
\end{proposition}

The usefulness of the grounder $\mi{GSimple}_{\Pi}$ (and thus, of the simple grounder for Datalog$^\naf$) is revealed below during the comparison of the semantics for generative Datalog$^\naf$ that we are about to propose with the existing semantics for generative Datalog from~\cite{BCKOV17} when we focus our attention on positive Datalog programs.

At this point, one may ask why a grounder $G$ of a GDatalog$^\naf[\Delta]$ program $\Pi$ is not defined in a simpler way than what was done in Definition~\ref{def:grounder-gdatalog} by using the same principle as in the definition of $\mi{GSimple}_{\Pi}$, that is, for every $\dep \in \left[2^{\ground{\dep_{\Pi}^{\exists}}}\right]_{=}$, $G(\dep) = H^{\infty}(\emptyset) \setminus \dep$, where $H$ is a grounder of the (possibly infinitary) Datalog$^\naf$ program $\dep_{\Pi}^{\not\exists} \cup \dep$.  {\color{red} We should explain why this definition is not enough.}
}

\OMIT{
\begin{definition}\label{def:grounder-gdatalog}
	Consider a GDatalog$^\naf[\Delta]$ program $\Pi$. A {\em grounder of $\Pi$} is a monotonic function
	$T_\Pi$ from the consistent subsets of $\mathsf{ground}_{\ins{C}}(\dep_{\Pi}^{\exists})$ to $2^{\ground{\dep_{\Pi}^{\not\exists}}}$ such that, for every consistent subset $\dep$ of $\mathsf{ground}_{\ins{C}}(\dep_{\Pi}^{\exists})$, if $\mi{AtR}_\dep$ is compatible with $T_\Pi(\dep)$, then $\sms{T_{\Pi}(\dep) \cup \dep} = \sms{\dep_\Pi \cup \dep'}$ for each totalizer $\dep'$ of $\mi{AtR}_{\dep}$. \hfill\markfull
\end{definition}
}

\noindent \paragraph{Probabilistic Semantics.} We now have all the ingredients for defining the semantics of generative Datalog$^\naf$. We start by formalizing the notion of possible outcome, which in turn gives rise to the sample space over which we need to define a probability space.

\begin{definition}[\textbf{Possible Outcome}]\label{def:possible-outcome}
	Consider a GDatalog$^\naf[\Delta]$ program $\Pi$, and a database $D$ of $\edb{\Pi}$. Let $G$ be a grounder of $\Pi[D]$. A {\em possible outcome of $D$ w.r.t.~$\Pi$ relative to $G$} is a (possibly infinitary) program $\dep \in G\text{-}\ground{\Pi[D]}$ such that $\delta\params{\bar p}(o) > 0$ for every $\Delta$-atom $\textrm{Result}_{|\bar q|}^{\delta}(\bar p,\bar q,o) \in \heads{\dep}$. We write $\Omega_{\Pi,G}(D)$ for the set of all possible outcomes of $D$ w.r.t.~$\Pi$ relative to $G$. 
	\hfill\markfull
\end{definition}

Observe that each $\dep \in \Omega_{\Pi,G}(D)$ induces a (possibly empty) subset of $\sms{\dep_{\Pi[D]}}$, that is, $\sms{\dep}$. Note that, for $\dep',\dep'' \in \Omega_{\Pi,G}(D)$, $\dep' \neq \dep''$ does {\em not} imply $\sms{\dep'} \neq \sms{\dep''}$, i.e., two different possible outcomes may induce the same subset of $\sms{\dep_{\Pi[D]}}$.

Given a GDatalog$^\naf[\Delta]$ program $\Pi$, and a database $D$ of $\edb{\Pi}$, the output of $\Pi$ on $D$, relative to some grounder $G$ of $\Pi[D]$, should be understood  as a probability space over $\Omega_{\Pi,G}(D)$. We proceed to explain how such a probability space can be defined. In essence, the events that we are interested in are subsets of $\sms{\dep_{\Pi[D]}}$ induced by proper (i.e., finite) programs of $\Omega_{\Pi,G}(D)$, or, in other words, by finite possible outcomes of $D$ w.r.t.~$\Pi$ relative to $G$; we write $\Omega_{\Pi,G}^{\fin}(D)$ for the set that collects all the (finite) programs of $\Omega_{\Pi,G}(D)$. Therefore, finite possible outcomes resulting in the same set of stable models of $\sms{\dep_{\Pi[D]}}$ must belong to same event.
Moreover, following
Grohe et al. on generative Datalog with
continuous distributions~\cite{GKKL20}, we collect all the infinite possible
outcomes of $\Omega_{\Pi,G}(D)$ in the {\em infinity (or error) event}
$\Omega_{\Pi,G}^{\infty}(D)$.

Let $F_{\Pi,G}^{D}$ be the subset of $2^{\Omega_{\Pi,G}(D)}$ consisting of $\Omega_{\Pi,G}^{\infty}(D)$, and all the maximal subsets $E$ of $\Omega_{\Pi,G}^{\fin}(D)$ such that, for all sets $\dep',\dep'' \in E$, $\sms{\dep'} = \sms{\dep''}$. We define $\mathcal{F}_{\Pi,G}^{D}$ as the $\sigma$-algebra over $\Omega_{\Pi,G}(D)$ generated by $F_{\Pi,G}^{D}$.
%
We finally define the function $P_{\Pi,G}^{D} : \mathcal{F}_{\Pi,G}^{D} \ra [0,1]$ as follows. For every $\dep \in \Omega_{\Pi,G}^{\fin}(D)$, let
\[
\mi{Pr}(\dep)\ =\ \prod_{\textrm{Result}_{|\bar q|}^{\delta}(\bar p, \bar q, o) \in \heads{\dep}}{\delta\params{\bar p}(o)}.
\]
Then, for every (countable) set $E \in \mathcal{F}_{\Pi,G}^{D}$ with $E \in 2^{\Omega_{\Pi,G}^{\fin}(D)}$, let
\[
P_{\Pi,G}^{D}(E)\ =\ \sum_{\dep \in E}{\mi{Pr}(\dep)}.
\]
Observe that $\Omega_{\Pi,G}^{\fin}(D) \in \mathcal{F}_{\Pi,G}^{D}$, and thus, we can let
\[
P_{\Pi,G}^{D}\left(\Omega_{\Pi,G}^{\infty}(D)\right)\ =\ 1 - P_{\Pi,G}^{D}\left(\Omega_{\Pi,G}^{\fin}(D)\right).
\]
Finally, $P_{\Pi,G}^{D}$ naturally extends to countable unions via countable addition.
\OMIT{
Finally, for every (countable) set $E = \Omega_{\Pi,G}^{\infty}(D) \cup \bigcup_{i>0} E_i \in \mathcal{F}_{\Pi,G}^{D}$, 
\[
P_{\Pi,G}^{D}(E)\ =\ P_{\Pi,G}^{D}\left(\Omega_{\Pi,G}^{\infty}(D)\right) + \sum_{i>0} P_{\Pi,G}^{D}(E_i).
\]
}
We are now ready to define the output of a generative Datalog$^\naf$ program on an input database relative to a grounder.

\begin{definition}[\textbf{Output of GDatalog$^\naf[\Delta]$ Programs}]\label{def:output-gdatalog-programs}
	Consider a GDatalog$^\naf[\Delta]$ program $\Pi$, and a database $D$ of $\edb{\Pi}$. Let $G$ be a grounder of $\Pi[D]$. The {\em output of $\Pi$ on $D$ relative to $G$}, denoted $\Pi_{G}(D)$, is defined as the triple $\left(\Omega_{\Pi,G}(D),\mathcal{F}_{\Pi,G}^{D},P_{\Pi,G}^{D}\right)$. \hfill\markfull
\end{definition}

It can be shown that the following holds, which provides the desired probabilistic semantics for generative Datalog$^\naf$.

\begin{theorem}\label{the:semantics}
	Consider a GDatalog$^\naf[\Delta]$ program $\Pi$,
	database $D$ of
	$\edb{\Pi}$, and
	grounder $G$ of $\Pi[D]$. $\Pi_{G}(D)$ is a
	probability space.
\end{theorem}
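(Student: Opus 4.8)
\emph{Proof plan.}
We must verify the three defining conditions of a probability space for $\Pi_{G}(D) = (\Omega_{\Pi,G}(D), \mathcal{F}_{\Pi,G}^{D}, P_{\Pi,G}^{D})$. That $\mathcal{F}_{\Pi,G}^{D}$ is a $\sigma$-algebra over $\Omega_{\Pi,G}(D)$ is immediate, since by construction it is the $\sigma$-algebra generated by $F_{\Pi,G}^{D}$. So the real content is that $P_{\Pi,G}^{D}$ is well defined, takes values in $[0,1]$, satisfies $P_{\Pi,G}^{D}(\Omega_{\Pi,G}(D)) = 1$, and is countably additive. The observation organising all of this is that $F_{\Pi,G}^{D}$ is a \emph{countable partition} of $\Omega_{\Pi,G}(D)$: a set $E$ is ``a maximal subset of $\Omega_{\Pi,G}^{\fin}(D)$ on which $\sms{\cdot}$ is constant'' exactly when $E$ is an equivalence class of $\Omega_{\Pi,G}^{\fin}(D)$ under ``$\sms{\dep'} = \sms{\dep''}$'', so the members of $F_{\Pi,G}^{D}$ are these (nonempty, pairwise disjoint) classes together with the error event $\Omega_{\Pi,G}^{\infty}(D)$, and they cover $\Omega_{\Pi,G}(D)$.

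\emph{Countability and bookkeeping.}
Every $\dep \in \Omega_{\Pi,G}(D)$ is a subset of $\ground{\dep_{\Pi[D]}}$, and $\dep_{\Pi[D]}$ is a \emph{finite} TGD$^{\naf}$ program ($\Pi$ is finite and $D$ adds finitely many facts); instantiating finitely many TGDs, each with finitely many variables, over the countable set $\ins{C}$ yields a countable $\ground{\dep_{\Pi[D]}}$, so its finite subsets form a countable set and $\Omega_{\Pi,G}^{\fin}(D)$ is countable. Hence $F_{\Pi,G}^{D}$ has only countably many members, and therefore $\mathcal{F}_{\Pi,G}^{D}$ is precisely the family of all unions of members of $F_{\Pi,G}^{D}$, each element of $\mathcal{F}_{\Pi,G}^{D}$ having a \emph{unique} such representation because the members are disjoint. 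Thus $P_{\Pi,G}^{D}$ is unambiguously determined by its definition, and countable additivity reduces to re-summing nonnegative series: pairwise disjoint events are unions over disjoint sub-families of the partition, and the single term $P_{\Pi,G}^{D}(\Omega_{\Pi,G}^{\infty}(D))$, if present, contributes to exactly one summand. Finally, $\Omega_{\Pi,G}(D) \neq \emptyset$: greedily adjoining to $\emptyset$, while some Active atom in the head of the current grounding is not yet assigned, an AtR TGD giving it a positive-probability result yields (in the limit, by monotonicity of $G$) a $\subseteq$-minimal element of $\terminals{G}$ all of whose results have positive probability, i.e.\ a possible outcome.

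\emph{The key inequality.}
Everything above reduces the theorem to the single estimate
\[
\sum_{\dep \in \Omega_{\Pi,G}^{\fin}(D)} \mi{Pr}(\dep)\ \leq\ 1,
\]
for then $P_{\Pi,G}^{D}(\Omega_{\Pi,G}^{\fin}(D)) \in [0,1]$, the prescribed value $P_{\Pi,G}^{D}(\Omega_{\Pi,G}^{\infty}(D)) = 1 - P_{\Pi,G}^{D}(\Omega_{\Pi,G}^{\fin}(D))$ lies in $[0,1]$ (and $\Omega_{\Pi,G}^{\infty}(D)$ is nonempty whenever it is positive, again by a greedy/K\"onig-style argument), so $P_{\Pi,G}^{D}$ has range $[0,1]$ and $P_{\Pi,G}^{D}(\Omega_{\Pi,G}(D)) = P_{\Pi,G}^{D}(\Omega_{\Pi,G}^{\fin}(D)) + P_{\Pi,G}^{D}(\Omega_{\Pi,G}^{\infty}(D)) = 1$. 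To prove the estimate I would pass to the product probability space $(\Omega_{\ast}, \mu)$ whose points are the functions $f$ assigning to each ground Active atom $\alpha = \mathrm{Active}_{n}^{\delta}(\bar p, \bar q)$ of $\dep_{\Pi}$ a value, with $\mu = \bigotimes_{\alpha} \delta_{\alpha}\params{\bar p_{\alpha}}$ the product of the corresponding (genuine, discrete) distributions; this is well defined as there are only countably many such $\alpha$. To a finite possible outcome $\dep = \dep_{1} \cup G(\dep_{1})$, where $\dep_{1} \subseteq \ground{\dep_{\Pi}^{\exists}}$ is its $\subseteq$-minimal terminal part, associate the finite partial assignment $c(\dep) = \mi{AtR}_{\dep_{1}}$ and the cylinder $[\dep] = \{ f \in \Omega_{\ast} \mid f \supseteq c(\dep) \}$. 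Since $\mathrm{Result}$ predicates never occur in the head of a non-AtR TGD, the $\mathrm{Result}$-atoms in $\heads{\dep}$ are exactly those produced by the AtR TGDs of $\dep_{1}$, whence $\mu([\dep]) = \prod_{\alpha \in \mathrm{dom}(c(\dep))} \delta_{\alpha}\params{\bar p_{\alpha}}(c(\dep)(\alpha)) = \mi{Pr}(\dep)$. If the cylinders are pairwise disjoint then $\sum_{\dep} \mi{Pr}(\dep) = \sum_{\dep} \mu([\dep]) = \mu\big(\bigcup_{\dep}[\dep]\big) \leq \mu(\Omega_{\ast}) = 1$.

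\emph{Main obstacle.}
The crux is thus pairwise disjointness of the cylinders $\{[\dep] \mid \dep \in \Omega_{\Pi,G}^{\fin}(D)\}$, equivalently: two \emph{distinct} finite possible outcomes never have \emph{compatible} underlying configurations of probabilistic choices (distinct outcomes already have distinct $\dep_{1}$-parts, since $G$ is a function; the real content is that these parts, which are $\subseteq$-incomparable by the minimality clause of $G$-$\ground{\Pi[D]}$, are moreover pointwise incompatible). This is exactly where monotonicity of $G$ and that minimality clause are used: from $\dep_{1} \in \terminals{G}$ one knows $\mi{AtR}_{\dep_{1}}$ is total on the Active atoms occurring in $\heads{G(\dep_{1})}$, and monotonicity of $G$ propagates this, so one can argue that a set of AtR TGDs strictly smaller than $\dep_1$ (resp.\ $\dep_2$) would still be a terminal unless $\dep_1$ and $\dep_2$ are $\subseteq$-comparable -- contradicting that both are $\subseteq$-minimal in $\terminals{G}$. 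Carrying this out for an arbitrary monotone grounder $G$ (rather than only $\mi{GSimple}_{\Pi[D]}$), and getting the interaction between functional consistency and $\subseteq$-minimality precisely right, is the technical heart; the resulting bound is a Kraft-type inequality whose deficit $1 - \sum_{\dep \in \Omega_{\Pi,G}^{\fin}(D)} \mi{Pr}(\dep)$ is exactly the mass absorbed by the error event along infinite grounding derivations. The same estimate can alternatively be read off from the chase-based reformulation of the semantics developed in Section~\ref{sec:chase}.
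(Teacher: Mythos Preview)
Your structural skeleton---$F_{\Pi,G}^{D}$ is a countable partition of $\Omega_{\Pi,G}(D)$, the generated $\sigma$-algebra consists exactly of unions of its blocks, and countable additivity reduces to re-indexing a nonnegative double sum---is precisely the paper's argument. Where you diverge is that the paper's proof \emph{stops there}: it never verifies the inequality $\sum_{\dep \in \Omega_{\Pi,G}^{\fin}(D)} \mi{Pr}(\dep) \le 1$ (nor the nonemptiness points you raise), but simply invokes the defining equation $P_{\Pi,G}^{D}(\Omega_{\Pi,G}^{\infty}(D)) = 1 - P_{\Pi,G}^{D}(\Omega_{\Pi,G}^{\fin}(D))$ to obtain $P_{\Pi,G}^{D}(\Omega_{\Pi,G}(D)) = 1$. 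You are more scrupulous in isolating that inequality as the substantive content and in proposing the product-space/cylinder route to it.

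That route is correct, and the cylinder-disjointness step---your ``main obstacle''---admits a cleaner proof than your sketch. Suppose $\dep_1,\dep_1'$ are $\subseteq$-minimal members of $\terminals{G}$ whose partial assignments $\mi{AtR}_{\dep_1},\mi{AtR}_{\dep_1'}$ are compatible (i.e., $\dep_1\cup\dep_1'$ is functionally consistent). By monotonicity of $G$, $G(\dep_1\cap\dep_1')\subseteq G(\dep_1)\cap G(\dep_1')$. Any Active atom $\alpha\in\heads{G(\dep_1\cap\dep_1')}$ therefore lies in $\heads{G(\dep_1)}$ and in $\heads{G(\dep_1')}$, so both $\mi{AtR}_{\dep_1}$ and $\mi{AtR}_{\dep_1'}$ are defined on $\alpha$, and by compatibility they agree; the corresponding AtR TGD thus lies in $\dep_1\cap\dep_1'$. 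Hence $\dep_1\cap\dep_1'\in\terminals{G}$, and minimality of both $\dep_1$ and $\dep_1'$ forces $\dep_1=\dep_1\cap\dep_1'=\dep_1'$. So distinct minimal terminals have incompatible assignments, the associated cylinders are pairwise disjoint, and your Kraft-type bound follows without any appeal to Section~\ref{sec:chase} (though, as you note, Lemma~\ref{lem:bijection} together with the chase-tree structure gives the same estimate).
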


\OMIT{
\begin{theorem}\label{the:semantics}
	Consider a GDatalog$^\naf[\Delta]$ program $\Pi$, and a database $D$ of $\edb{\Pi}$. For every grounder $T$ of $\Pi[D]$, it holds that $\left(\Omega_{\Pi,T}(D),\mathcal{F}_{D,\Pi}^{T},P_{D,\Pi}^{T}\right)$ is a probability space.
\end{theorem}
}

We proceed to show the semantics for generative Datalog$^\naf$ in action by exploiting  our ``network resilience'' example.

\begin{example}
	Continuing Example~\ref{exa:gsimple}, $\dep \in \Omega_{\Pi,\mi{GSimple}_{\Pi[D]}}^{\fin}(D)$, and
	$\mi{Pr}(\dep) = \flip\params{0.1}(0)^2 = 0.9^2$.
	It can be verified that all other possible outcomes are finite and induce a non-empty subset of $\sms{\dep_{\Pi[D]}}$. This implies that $\{\dep\} \in F_{\Pi,\mi{GSimple}_{\Pi[D]}}^{D}$, and the probability of the event ``$\Pi[D]$ has some stable model'' is 
	\begin{eqnarray*}
	&& P_{\Pi,\mi{GSimple}_{\Pi[D]}}^{D}\left(\Omega_{\Pi,\mi{GSimple}_{\Pi[D]}}^{\fin}(D) \setminus \{\dep\}\right)\\
	&=& 1 - P_{\Pi,\mi{GSimple}_{\Pi[D]}}^{D}(\{\dep\})\ = 1 - 0.9^2\ =\ 0.19.
	\end{eqnarray*}
	Summing up, the network stored in the database $D$  (given in Example~\ref{exa:gsimple}) is dominated by the malware with probability $0.19$.
	\hfill\markfull
\end{example}

\noindent \paragraph{Positive Programs.} A natural question is whether the probabilistic semantics proposed above for generative Datalog$^\naf$ coincides with the existing semantics for generative Datalog introduced in~\cite{BCKOV17}, if we concentrate on positive programs that guarantee that all the possible outcomes are finite, and thus, the error event takes probability zero.
It should not be overlooked, however, that the probabilistic semantics for generative Datalog$^\naf$ is actually a family of semantics since, depending on the adopted grounder, we get different semantics; a mechanism for performing a qualitative comparison among the different semantics is discussed below.
Therefore, the key question is whether there exists a grounder for generative Datalog$^\naf$ that gives rise to semantics that coincides with that of~\cite{BCKOV17}.
Interestingly, we can show that such a grounder exists; in fact, we can show that this is the simple grounder introduced above (see Definition~\ref{def:simple-grounder}).
Due to space constraints, the details are deferred to the appendix.

\OMIT{
%
Given a GDatalog$^\naf[\Delta]$ program $\Pi$, we proceed to define the so-called simple grounder $\mi{GSimple}_{\Pi}$, which relies on an operator that operates on (infinitary) existential-free TGD$^\naf$ programs. 
Consider an (infinitary) TGD$^\naf$ program $\dep$ without existentially quantified variables. We define the operator $\mi{Simple}_{\dep}$ that extends an (infinitary) TGD$^\naf$ program $\dep' \in 2^{\ground{\dep}}$ by matching the positive body literals of $\dep$ with ground head atoms of $\dep'$. The notion of matching can be formalized via homomorphisms with which we assume the reader is familiar. We write $h(A) \subseteq B$ to indicate that $h$ is a homomorphism from a set of atoms $A$ to a set of atoms $B$. Furthermore, given a TGD$^\naf$ $\sigma$ and a homomorphism $h$ from $B^+(\sigma)$ to some other set of atoms, we write $h(\sigma)$ for the TGD$^\naf$ obtained by applying $h$ to the body and head of $\sigma$. 
We define $\mi{Simple}_{\dep}$ as the monotonic function from $2^{\ground{\dep}}$ to $2^{\ground{\dep}}$ such that, for every $\dep' \in 2^{\ground{\dep}}$,
\[
\mathit{Simple}_\dep(\dep')\ =\ \dep' \cup \left\{
h(\sigma) \mid 
\sigma \in \dep\ \text{ and }\ h(B^+(\sigma)) \subseteq \heads{\dep'}
\right\}.
\]
We further define the sets
\[
\mi{Simple}_{\dep}^{0}(\dep') = \dep' 
\]
\[
\mi{Simple}_{\dep}^{i+1}(\dep') = \mi{Simple}_{\dep}(\mi{Simple}_{\dep}^{i}(\dep')) \text{ for } i > 0,
\]
and we let
\[
\mi{Simple}_{\dep}^{\infty}(\dep')\ =\ \bigcup_{i \geq 0} \mi{Simple}_{\dep}^{i}(\dep').
\]
Due to the monotonicity of $\mi{Simple}_{\dep}$, it is clear that $\mi{Simple}_{\dep}^{\infty}(\dep')$ is the least fixpoint of $\mi{Simple}_{\dep}$ that contains $\dep' \in 2^{\ground{\dep}}$. We are now ready to define the function $\mi{GSimple}_{\Pi}$.

\begin{definition}[\textbf{Simple Grounder}]\label{def:simple-grounder}
	Consider a GDatalog$^\naf[\Delta]$ program $\Pi$. $\mi{GSimple}_{\Pi} : [2^{\ground{\dep_{\Pi}^{\exists}}}]_{=} \ra 2^{\ground{\dep_{\Pi}^{\not\exists}}}$ is such that
\[
\mi{GSimple}_{\Pi}(\dep)\ =\ \mi{Simple}_{\dep'}^{\infty}(\emptyset) \setminus \dep
\]
where $\dep' = \dep_{\Pi}^{\not\exists} \cup \dep$, for every $\dep \in [2^{\ground{\dep_{\Pi}^{\exists}}}]_{=}$. \hfill\markfull
\end{definition}

It is not difficult to show that $\mi{GSimple}_{\Pi}$ is indeed a grounder:

\begin{proposition}\label{pro:gsimple-grounder}
	Consider a GDatalog$^\naf[\Delta]$ program $\Pi$. It holds that $\mi{GSimple}_{\Pi}$ is a grounder of $\Pi$.
\end{proposition}


As said above, we can show that the simple grounder for generative Datalog$^\naf$ gives rise to semantics for generative Datalog programs with finite possible outcomes that coincides with the semantics of~\cite{BCKOV17}. 
Due to space constraints, we omit the formal statement and proof of this result, which are deferred to the appendix.
}

\OMIT{
Another interesting fact is that, in the case of positive programs, the simple grounder allows us to establish an invariance property under SMS-equivalence. Given two GDatalog$^\naf[\Delta]$ programs $\Pi$ and $\Pi'$, we say that they are {\em SMS-equivalent} if $\sms{\dep_\Pi} = \sms{\dep_{\Pi'}}$.
We further say that they are {\em simple-semantic equivalent} if, for every database $D$ of $\edb{\Pi} \cap \edb{\Pi'}$, with $G_\Pi = \mi{GSimple}_{\Pi[D]}$ and $G_{\Pi'} = \mi{GSimple}_{\Pi'[D]}$, the following holds: for every set $\mathcal{I}$ of instances,
\begin{multline*}
P_{\Pi,G_\Pi}^{D}\left(\{\Pi'' \in \Omega_{\Pi,G_\Pi}^{\fin}(D) \mid \sms{\Pi''} = \mathcal{I}\}\right) =\\
P_{\Pi',G_{\Pi'}}^{D}\left(\{\Pi'' \in \Omega_{\Pi',G_{\Pi'}}^{\fin}(D) \mid \sms{\Pi''} = \mathcal{I}\}\right).
\end{multline*}
We can then establish the following result; note that the converse of the following implication does not hold in general:

\begin{theorem}\label{the:stable-invariance-simple}
	If two GDatalog$[\Delta]$ programs are SMS-equivalent, then they are simple-semantic equivalent.
\end{theorem}

This actually tells us that rewriting a GDatalog$[\Delta]$ program in a way that preserves SMS-equivalence is
safe, as the probabilistic semantics induced by the simple grounder is in essence preserved.
}

\medskip
\noindent \paragraph{Qualitative Comparison of Probabilistic Semantics.} As already discussed, Theorem~\ref{the:semantics} gives rise to a family of semantics for generative Datalog$^\naf$ since different grounders lead to different semantics. It is thus natural to ask how we can qualitatively compare the various semantics.
%
Recall that infinite possible outcomes are considered as invalid ones that are collected in the error event, and, in general, the larger the error event, the less the probability assigned to valid (that is, finite) possible outcomes. This essentially tells us that the greater the probability assigned to finite outcomes by a probability measure, the better. Recall that the output of a program $\Pi$ on a database $D$ relative to a grounder $G$ of $\Pi[D]$ is the probability space $\left(\Omega_{\Pi,G}(D),\mathcal{F}_{\Pi,G}^{D},P_{\Pi,G}^{D}\right)$.

\begin{definition}[\textbf{Comparison of Semantics}]\label{def:compare-semantics}
	Consider a GDatalog$^\naf[\Delta]$ program $\Pi$, and a database $D$ of $\edb{\Pi}$. Let $G, G'$ be grounders of $\Pi[D]$. We say that {\em $\Pi_G(D)$ is as good as $\Pi_{G'}(D)$} if
	\begin{multline*}
	P_{\Pi,G}^{D}\left(\{\dep \in \Omega_{\Pi,G}^{\fin}(D) \mid \sms{\dep} = \mathcal{I}\}\right)\ \geq\\ 
	P_{\Pi,G'}^{D}\left(\{\dep \in \Omega_{\Pi,G'}^{\fin}(D) \mid \sms{\dep} = \mathcal{I}\}\right)
	\end{multline*}
	for every set of stable models $\mathcal{I} \subseteq \sms{\dep_{\Pi[D]}}$. \hfill\markfull
\end{definition}

The above notion equips us with a mechanism that can guide the choice of the probabilistic semantics based on the class of generative Datalog$^\naf$ that we are interested in, which in turn depends on the underlying application. For example, if we are interested only in positive generative Datalog programs, we can then show that there is no need to go beyond the probabilistic semantics induced by the simple grounder, i.e., the semantics induced by the simple grounder is as good as the semantics induced by any other grounder.

\begin{theorem}\label{the:gsimple-best}
	Consider a GDatalog$[\Delta]$ program $\Pi$, and a database $D$ of $\edb{\Pi}$. Let $G$ be an arbitrary grounder of $\Pi[D]$. It holds that $\Pi_{\mi{GSimple}_{\Pi[D]}}(D)$ is as good as $\Pi_{G}(D)$.
\end{theorem}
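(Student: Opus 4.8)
I would fix an arbitrary set of stable models $\mathcal{I} \subseteq \sms{\dep_{\Pi[D]}}$ and compare the two quantities in Definition~\ref{def:compare-semantics} directly. Since $\Pi$ is positive, $\dep_\Pi$ contains no negation, so every finite $\dep \in \Omega_{\Pi,G}^{\fin}(D)$, for any grounder $G$ of $\Pi[D]$, is a finite ground negation-free program, whence $\sms{\dep}$ is the singleton $\{M_\dep\}$ given by the (finite) least model $M_\dep$ of $\dep$. Thus, unless $\mathcal{I} = \{M\}$ for some finite model $M$, both sets in Definition~\ref{def:compare-semantics} are empty and there is nothing to prove; and if no finite possible outcome of $G$ induces $\{M\}$, the right-hand side is $0$. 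So it suffices to treat $\mathcal{I} = \{M\}$ with $M$ finite and realized by a finite outcome of $G$, and to prove $P_{\Pi,\mi{GSimple}_{\Pi[D]}}^{D}(\{\dep \mid \sms{\dep}=\{M\}\}) \ge P_{\Pi,G}^{D}(\{\dep \mid \sms{\dep}=\{M\}\})$. The idea is to couple both semantics to a single ``master'' probability space $(W,\mu)$: let $\mathcal{A}$ be the countable set of all ground atoms over the $\text{\rm Active}_n^\delta$ predicates, let $\delta_a = \delta\params{\bar p}$ for $a = \text{\rm Active}_n^\delta(\bar p,\bar q)$, and let $(W,\mu) = \prod_{a\in\mathcal{A}}(\mathsf{supp}(\delta_a),\delta_a)$ be the countable product of the corresponding discrete spaces, so a point $\omega \in W$ is a complete assignment fixing a sampled outcome $\omega(a)$ for every conceivable active atom.

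For a grounder $G$ and $\omega \in W$, I define $S_G(\omega) \subseteq \ground{\dep_{\Pi[D]}^{\exists}}$ as the least fixpoint of the operator $S \mapsto S \cup \{\, a \ra \text{\rm Result}_n^\delta(\bar p,\bar q,\omega(a)) \mid a = \text{\rm Active}_n^\delta(\bar p,\bar q) \in \heads{G(S)}\,\}$, which is well defined because $G$ is monotone and which stays functionally consistent since all recorded outcomes come from $\omega$; put $\Theta_G(\omega) = S_G(\omega) \cup G(S_G(\omega))$. The key lemma is: for every finite $\dep = S \cup G(S) \in \Omega_{\Pi,G}^{\fin}(D)$, the fibre $\{\omega \in W \mid \Theta_G(\omega) = \dep\}$ equals the cylinder $\mathsf{Cyl}(S) := \{\omega \mid \omega(a) = o \text{ whenever } (a \ra \text{\rm Result}_n^\delta(\bar p,\bar q,o)) \in S\}$, and $\mu(\mathsf{Cyl}(S)) = \mi{Pr}(\dep)$. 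For ``$\supseteq$'', an induction along the fixpoint iteration shows that, once $\omega$ agrees with the choices of $S$, every active atom placed in a head of $G(S')$ with $S' \subseteq S$ lies in $\mathrm{dom}(\mi{AtR}_S)$ (using $S \in \terminals{G}$, i.e.\ $\mi{AtR}_S \comp G(S)$), so the iterates stay inside $S$; hence $S_G(\omega) \subseteq S$, $S_G(\omega) \in \terminals{G}$, and minimality of $S$ in $G$-$\ground{\Pi[D]}$ forces $S_G(\omega) = S$. For ``$\subseteq$'', note that the fixpoint records $\omega$-outcomes exactly on the active atoms it covers. Finally $\mu(\mathsf{Cyl}(S)) = \mi{Pr}(\dep)$ because the Result atoms in $\heads{\dep}$ are precisely those produced by the AtR TGDs of $S$ (no rule of $\dep_{\Pi[D]}^{\not\exists}$ has a Result atom in its head), matching the factors of $\mi{Pr}(\dep)$ one-to-one.

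Since $\Theta_G$ is a function its fibres are disjoint, so $P_{\Pi,G}^{D}(\{\dep \mid \sms{\dep}=\{M\}\}) = \mu\big(\bigcup \{\mathsf{Cyl}(S) \mid S \cup G(S) \in \Omega_{\Pi,G}^{\fin}(D),\ \sms{S\cup G(S)}=\{M\}\}\big)$. Now let $S_M = \{\, a \ra \text{\rm Result}_n^\delta(\bar p,\bar q,o) \mid \text{\rm Result}_n^\delta(\bar p,\bar q,o) \in M\,\}$. For any finite outcome $\dep = S \cup G(S)$ with $\sms{\dep}=\{M\}$, $M$ is the least model of the finite negation-free program $S \cup G(S)$, so every Result atom of $M$ can only be derived via an AtR TGD of $S$, which by functional consistency of $S$ and $M$ records exactly $M$'s outcome; hence $S_M \subseteq S$ and $\mathsf{Cyl}(S) \subseteq \mathsf{Cyl}(S_M)$. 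Therefore $P_{\Pi,G}^{D}(\{\dep \mid \sms{\dep}=\{M\}\}) \le \mu(\mathsf{Cyl}(S_M))$. To finish, apply the same coupling to $G = \mi{GSimple}_{\Pi[D]}$ (a grounder by Proposition~\ref{pro:gsimple-grounder}): for $\omega \in \mathsf{Cyl}(S_M)$ the $\mi{GSimple}$-induced iteration runs exactly the standard chase of $D$ with $\dep_\Pi$ that resolves each generated sample via $\omega$, and since $M$ is finite this chase terminates deriving precisely the active atoms of $M$ with $M$'s outcomes, so $S_{\mi{GSimple}_{\Pi[D]}}(\omega) = S_M$. Moreover $S_M$ is minimal in $\terminals{\mi{GSimple}_{\Pi[D]}}$, because in a positive program every active atom is derived strictly before, and independently of, its own result atom, so deleting any AtR TGD from $S_M$ keeps the corresponding active atom derivable and thus violates terminality of the proper sub-configuration. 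Hence $\dep_M := S_M \cup \mi{GSimple}_{\Pi[D]}(S_M)$ is a finite $\mi{GSimple}$-outcome with $\sms{\dep_M}=\{M\}$ whose fibre is exactly $\mathsf{Cyl}(S_M)$, so $P_{\Pi,\mi{GSimple}_{\Pi[D]}}^{D}(\{\dep \mid \sms{\dep}=\{M\}\}) \ge \mi{Pr}(\dep_M) = \mu(\mathsf{Cyl}(S_M)) \ge P_{\Pi,G}^{D}(\{\dep \mid \sms{\dep}=\{M\}\})$, which proves the claim.

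The main obstacle is the cylinder lemma of the second paragraph: one must show that the $G$-chase driven by a complete assignment $\omega$ converges exactly to the unique minimal terminal configuration whose recorded outcomes agree with $\omega$, and this is where monotonicity of $G$ and minimality within $G$-$\ground{\Pi[D]}$ are both essential — a non-monotone pseudo-grounder could split the probability of a single model $M$ over several incomparable configurations extending $S_M$ and hence overshoot $\mu(\mathsf{Cyl}(S_M))$. A secondary technical point, used only for the $\mi{GSimple}$ side, is the well-foundedness of the positive chase (no active atom depends on its own result atom), which, together with the finiteness of $M$, secures that the $\mi{GSimple}$-chase terminates at exactly $S_M$ and that $S_M$ is minimal in $\terminals{\mi{GSimple}_{\Pi[D]}}$.
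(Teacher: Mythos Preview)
Your proof is correct and takes a genuinely different route from the paper. The paper dispatches Theorem~\ref{the:gsimple-best} by reduction: a positive program is in particular stratified, and for negation-free $\dep$ the operators $\mathit{Simple}_\dep$ and $\mathit{Perfect}_\dep$ coincide, so $\mi{GSimple}_{\Pi[D]} = \mi{GPerfect}_{\Pi[D]}$ and the statement is just the positive instance of Theorem~\ref{the:gperfect-best}. That theorem is proved by extracting, from each finite $G$-outcome $\dep \cup G(\dep)$, a sub-configuration $\dep' \subseteq \dep$ that is a minimal $\mi{GPerfect}$-terminal with the same stable model, and then observing that fewer $\textrm{Result}$ factors give a larger $\mi{Pr}$. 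Your approach instead realises every grounder simultaneously on a single product space $(W,\mu)$ of complete sample assignments and establishes a cylinder lemma identifying each finite $G$-outcome with a cylinder of $\mu$-mass $\mi{Pr}(\dep)$; the desired inequality then reduces to the set-theoretic containment $\mathsf{Cyl}(S) \subseteq \mathsf{Cyl}(S_M)$ together with disjointness of the fibres of $\Theta_G$. This is more machinery up front, but it makes the passage from the termwise comparison $\mi{Pr}(\dep') \geq \mi{Pr}(\dep)$ to the \emph{summed} inequality completely explicit --- the paper's proof of Theorem~\ref{the:gperfect-best} states only the termwise bound and asserts that the sum bound ``follows'', which is true for exactly the disjoint-cylinders reason you spell out. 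Your argument is also self-contained for the positive case, whereas the paper's reduction buys the stratified case for free. One small point: your minimality claim for $S_M$ (``every active atom is derived strictly before, and independently of, its own result atom'') is best justified via a minimal-proof-tree argument --- any proof of $a$ that uses $\textrm{Result}(a,\cdot)$ contains a strictly smaller proof of $a$ --- rather than by an operational ``before'' intuition.
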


\OMIT{
\begin{theorem}\label{the:gsimple-best}
	Consider a GDatalog$[\Delta]$ program $\Pi$, and a database $D$ of $\edb{\Pi}$, and let $G = \mi{GSimple}_{\Pi[D]}$. $\left(\Omega_{\Pi,G}(D),\mathcal{F}_{\Pi,G}^{D},P_{\Pi,G}^{D}\right)$ is as good as $\left(\Omega_{\Pi,G'}(D),\mathcal{F}_{\Pi,G'}^{D},P_{\Pi,G'}^{D}\right)$ for every grounder $G'$ of $\Pi[D]$.
\end{theorem}
}

\OMIT{
In simple words, Theorem~\ref{the:gsimple-best} tells that for positive generative Datalog programs, the ultimate semantics is the one induced by the simple grounder. In Section~\ref{sec:stratified-negation}, where we focus on the central class of generative Datalog$^\naf$ programs with stratified negation, we show that the simple grounder does not suffice, and we provide a more sophisticated grounder that leads to the ultimate semantics.
}

\OMIT{
We conclude this section by stressing that one can come up with simple conditions at the level of grounders that are sufficient for a certain semantics to be as good as some other semantics. For example, one may correctly think that the less ground TGD$^\naf$ are produced by a grounder, the better since such a strategy will classify less possible outcomes as invalid ones. This is shown next.

\begin{proposition}\label{pro:compare-semantics}
	Consider a GDatalog$^\naf[\Delta]$ program $\Pi$, and a database $D$ of $\edb{\Pi}$. Let $G,G'$ be grounders of $\Pi[D]$ with $G(\dep) \subseteq G'(\dep)$ for every $\dep \in [2^{\ground{\dep_{\Pi[D]}^{\exists}}}]_{=}$. $\Pi_G(D)$ is as good as $\Pi_{G'}(D)$.
\end{proposition}
}

\OMIT{
It would be quite useful to be able to classify certain semantics as good as other ones based on a simple property of the underlying grounders. 
Intuitively, the less ground TGD$^\naf$ are produced by a grounder, the better since such a strategy will classify less possible outcomes as invalid ones. This is formalized by the next definition.

\begin{definition}[\textbf{Compare Grounders}]\label{def:compare-grounders}
	Consider a GDatalog$^\naf[\Delta]$ program $\Pi$, and let $G$ and $G'$ be grounders of $\Pi$. We say that {\em $G$ is as good as $G'$} if $G(\dep) \subseteq G'(\dep)$ for every $\dep \in \left[2^{\ground{\dep_{\Pi}^{\exists}}}\right]_{=}$. \hfill\markfull
\end{definition}

We can now show that indeed the ``as good as'' notion at the level of grounders can be transferred to the induced semantics.

\begin{theorem}\label{the:compare-semantics}
	Consider a GDatalog$^\naf[\Delta]$ program $\Pi$, and a database $D$ of $\edb{\Pi}$. Let $G$ and $G'$ be grounders of $\Pi[D]$ such that $G$ is as good as $G'$. It holds that $\Pi_G(D)$ is as good as $\Pi_{G'}(D)$.
\end{theorem}
}

\section{Fixpoint Probabilistic Semantics}\label{sec:chase}

We now tackle the question whether the probabilistic semantics presented in the previous section can be equivalently defined via a fixpoint procedure. An affirmative answer to this question will provide a procedure that is amenable to practical implementations. To this end, we  introduce a novel chase procedure for generative Datalog$^\naf$, which we then use to define a chase-based probability space that leads to the desired fixpoint probabilistic semantics.

\medskip

\noindent \paragraph{Chasing Generative Datalog$^\naf$ Programs.} We start by introducing our novel chase procedure. Let us stress that our chase procedure deviates from the standard one, which typically operates over databases with the aim of completing an incomplete database as dictated by a given set of TGDs. Our chase procedure operates on ground AtR TGDs with the aim of completing such programs as dictated by a given ground TGD$^\naf$ program.
We start with the notion of trigger application, which corresponds to a chase step.

\begin{definition}[\textbf{Trigger Application}]\label{def:chase-step}
	Consider a GDatalog$^\naf[\Delta]$ program $\Pi$, and let $\dep \in 2^{\ground{\dep_{\Pi}^{\exists}}}$ and $\dep' \in 2^{\ground{\dep_{\Pi}^{\not\exists}}}$.
	%
	A {\em trigger for $\dep'$ on $\dep$} is an atom $\alpha = \text{\rm Active}_{|\bar q|}^{\delta}(\bar p,\bar q) \in \heads{\dep'}$ such that there is no TGD in $\dep$ of the form $\alpha \ra \text{\rm Return}^{\delta}_{|\bar q|}(\bar p,\bar q,o)$ for some $o \in \ins{C}$.
	An {\em application of $\alpha$ to $\dep$} returns the set of (infinitary) TGD programs $\{\dep_1,\dep_2,\ldots\}$ such that the following hold:
	\begin{itemize}
		\item for each integer $i >0$, $\dep_i = \dep \cup \left\{\alpha \ra \text{\rm Return}^{\delta}_{|\bar q|}(\bar p,\bar q,o)\right\}$, where $o \in \ins{C}$ and $\delta\params{\bar p}(o) > 0$, and
		\item for each constant $o \in \ins{C}$ with $\delta\params{\bar p}(o) > 0$, there exists $i > 0$ such that $\dep_i = \dep \cup \left\{\alpha \ra \text{\rm Return}^{\delta}_{|\bar q|}(\bar p,\bar q,o)\right\}$.
	\end{itemize}
	Such a trigger application is denoted as $\dep \params{\alpha} \{\dep_1,\dep_2,\ldots\}$. \hfill\markfull
\end{definition}

\OMIT{
\begin{definition}[\textbf{Trigger Application}]\label{def:chase-step}
	Consider the (infinitary) TGD$^\naf$ programs $\dep \in \left[2^{\ground{\dep_{\Pi}^{\exists}}}\right]_{=}$ and $\dep' \in 2^{\ground{\dep_{\Pi}^{\not\exists}}}$ for some GDatalog$^\naf[\Delta]$ program $\Pi$.
	A {\em trigger for $\dep'$ on $\dep$} is an atom $\alpha = \text{\rm Active}_{|\bar q|}^{\delta}(\bar p,\bar q) \in \heads{\dep'}$ such that the (partial) function $\mi{AtR}_\dep$ is undefined on $\alpha$.
	An {\em application of $\alpha$ to $\dep$} returns the set of (infinitary) TGD programs $\{\dep_1,\dep_2,\ldots\}$ such that the following hold:
	\begin{itemize}
		\item for each integer $i >0$, $\dep_i = \dep \cup \left\{\alpha \ra \text{\rm Return}^{\delta}_{|\bar q|}(\bar p,\bar q,o)\right\}$, where $o \in \ins{C}$ and $\delta\params{\bar p}(o) > 0$, and
		\item for each constant $o \in \ins{C}$ with $\delta\params{\bar p}(o) > 0$, there exists $i > 0$ such that $\dep_i = \dep \cup \left\{\alpha \ra \text{\rm Return}^{\delta}_{|\bar q|}(\bar p,\bar q,o)\right\}$.
	\end{itemize}
	Such a trigger application is denoted as $\dep \params{\alpha} \{\dep_1,\dep_2,\ldots\}$. \hfill\markfull
\end{definition}
}

Having the notion of trigger application, we can now introduce the central notion of chase tree, which is relative to a grounder.

\begin{definition}[\textbf{Chase Tree}]\label{def:chase-tree}
Consider a GDatalog$^\naf[\Delta]$ program $\Pi$, and a database $D$ of $\edb{\Pi}$. Let $G$ be a grounder of $\Pi[D]$. A {\em chase tree for $D$ w.r.t.~$\Pi$ relative to $G$} is a (possibly infinite) rooted labelled tree $T = (N,E,\lambda)$, where $\lambda : N \ra 2^{\ground{\dep_{\Pi[D]}^{\exists}}}$, such that:
\begin{itemize}
	\item the root node is labelled by $\emptyset$,
	\item for each non-leaf node $v$ (i.e., $v$ has an outgoing edge) with children $v_1,v_2,\ldots$ there exists a trigger $\alpha$ for $G(\lambda(v))$ on $\lambda(v)$ such that $\lambda(v)\params{\alpha}\{\lambda(v_1),\lambda(v_2),\ldots\}$, and
	\item for each leaf node $v$ (i.e., $v$ has no outgoing edges), there is no trigger $\alpha$ for $G(\lambda(v))$ on $\lambda(v)$. \hfill\markfull
\end{itemize}
\end{definition}

A chase tree $T = (N,E,\lambda)$ essentially encodes iterative trigger applications.
A {\em maximal path} of $T$ is either a finite path $v_1,\ldots,v_n$ in $T$, where $v_1$ is the root node and $v_n$ is a leaf node, or an infinite path $v_1,v_2,\ldots$ in $T$, where $v_1$ is the root node.
Let $\paths{T}$ be the set of all maximal paths of $T$; we further write $\mathsf{paths}^{\fin}(T)$ (resp., $\mathsf{paths}^{\infty}(T)$) for the set of finite (resp., infinite) maximal paths of $T$.
The {\em result} of a finite (resp., infinite) maximal path $\pi = v_1,\ldots,v_n$ (resp., $\pi = v_1,v_2,\ldots$) of $T$, denoted $\sem{\pi}$, is the program (resp., infinitary program) $\lambda(v_n)$ (resp., $\bigcup_{i>0} \lambda(v_i)$). 
We further write $\sem{\pi}_G$ for $\sem{\pi} \cup G(\sem{\pi})$.
The notations $\sem{\cdot}$ and $\sem{\cdot}_G$ naturally extend to sets of maximal paths, i.e., for a set $M$ of maximal paths, $\sem{M} = \{\sem{\pi} \mid \pi \in M\}$ and $\sem{M}_G = \{\sem{\pi} \cup G(\sem{\pi}) \mid \pi \in M\}$.
We may simply say chase tree for $D$ w.r.t.~$\Pi$ meaning a chase tree relative to some grounder of $\Pi[D]$.
%
The next easy lemma collects some useful properties of chase trees.

%


\begin{lemma}\label{lem:chase-tree-properties}
	Consider a GDatalog$^\naf[\Delta]$ program $\Pi$, and a database $D$ of $\edb{\Pi}$. Let $T = (N,E,\lambda)$ be a chase tree for $D$ w.r.t.~$\Pi$. Then:
	\begin{enumerate}
		\item For every $v \in N$, $\lambda(v) \in [2^{\ground{\dep_{\Pi[D]}^{\exists}}}]_{=}$.
		\item For every $u,v \in N$, $u \neq v$ implies $\lambda(u) \neq \lambda(v)$.
	\end{enumerate}
\end{lemma}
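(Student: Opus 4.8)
The plan is to verify both claims by induction on the depth of nodes in the chase tree $T$, exploiting the structure of trigger applications (Definition~\ref{def:chase-step}) together with the consistency notion $[2^{\ground{\dep_{\Pi[D]}^{\exists}}}]_{=}$.

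For part~(1), I would argue by induction on the depth of a node $v \in N$. The base case is the root, labelled by $\emptyset$, which is trivially (functionally) consistent. For the inductive step, let $v$ be a child of some node $u$, so that $\lambda(u)\params{\alpha}\{\lambda(v_1),\lambda(v_2),\ldots\}$ for a trigger $\alpha = \text{\rm Active}_{|\bar q|}^{\delta}(\bar p,\bar q)$ for $G(\lambda(u))$ on $\lambda(u)$, and $\lambda(v) = \lambda(u) \cup \{\alpha \ra \text{\rm Result}_{|\bar q|}^{\delta}(\bar p,\bar q,o)\}$ for some $o \in \ins{C}$ with $\delta\params{\bar p}(o)>0$. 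By induction hypothesis $\lambda(u)$ is consistent, so the only way $\lambda(v)$ could fail consistency is if $\lambda(u)$ already contained a TGD $\alpha \ra \text{\rm Result}_{|\bar q|}^{\delta}(\bar p,\bar q,o')$ with $o' \neq o$. But that is precisely what the definition of a trigger for $G(\lambda(u))$ on $\lambda(u)$ forbids: $\alpha$ is a trigger only if there is no TGD in $\lambda(u)$ of the form $\alpha \ra \text{\rm Result}_{|\bar q|}^{\delta}(\bar p,\bar q,o')$ for any $o' \in \ins{C}$. Hence $\lambda(v)$ adds a body atom $\alpha$ that was previously unresolved, and consistency is preserved.

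For part~(2), I would use the observation extracted while proving part~(1): along any edge of the tree, the label strictly grows, and moreover the newly added AtR TGD has a body atom $\alpha$ on which $\mi{AtR}_{\lambda(u)}$ was undefined. First, for a node $v$ and any proper ancestor $u$, a straightforward induction along the path gives $\lambda(u) \subsetneq \lambda(v)$, so ancestor–descendant pairs have distinct labels. For two nodes $u,v$ that are not related by ancestry, let $w$ be their least common ancestor, with distinct children $w_1$ (an ancestor of, or equal to, $u$) and $w_2$ (an ancestor of, or equal to, $v$) produced by a single trigger application $\lambda(w)\params{\alpha}\{\ldots\}$. Then $\lambda(w_1) = \lambda(w)\cup\{\alpha \ra \text{\rm Result}_{|\bar q|}^{\delta}(\bar p,\bar q,o_1)\}$ and $\lambda(w_2) = \lambda(w)\cup\{\alpha \ra \text{\rm Result}_{|\bar q|}^{\delta}(\bar p,\bar q,o_2)\}$ with $o_1 \neq o_2$ (children of the same trigger application are pairwise distinct because the constants $o$ are). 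The key point is that in every descendant of $w_1$, the function $\mi{AtR}$ maps the argument tuple of $\alpha$ to $o_1$, whereas in every descendant of $w_2$ it maps it to $o_2$; this is because, by part~(1), all these labels are consistent, so the value assigned to $\alpha$ can never be overwritten, and no later trigger can re-fire on $\alpha$ since it is no longer a trigger. Therefore $\lambda(u)$ and $\lambda(v)$ disagree on the value for $\alpha$, hence $\lambda(u) \neq \lambda(v)$.

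I do not expect any serious obstacle here — this is the "easy lemma" flagged in the text. The one point requiring a little care is the argument in part~(2) that a resolved AtR constraint is never overwritten or re-added further down the tree: this relies on combining the consistency invariant of part~(1) with the definition of trigger (so that $\alpha$ is never again eligible once resolved), and it is worth stating explicitly as an intermediate claim, namely that if $u$ is an ancestor of $v$ and $\lambda(u)$ contains $\alpha \ra \text{\rm Result}_{|\bar q|}^{\delta}(\bar p,\bar q,o)$, then so does $\lambda(v)$, with the same $o$. Everything else is routine induction on tree depth.
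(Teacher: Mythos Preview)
Your proposal is correct and follows essentially the same route as the paper: induction on depth for part~(1), and for part~(2) a case split into ancestor/descendant pairs (where strict label growth along edges suffices) versus incomparable pairs (handled via the least common ancestor and the distinct outcomes $o_1 \neq o_2$ assigned at its children). If anything, you are slightly more explicit than the paper in invoking the consistency invariant from part~(1) to justify why the value recorded for $\alpha$ persists unchanged in all descendants, which the paper leaves implicit under the word ``monotonicity''.
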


Item (1) essentially tells us that the nodes of a chase tree are labelled with sets of ground AtR TGDs that are functionally consistent, while item (2) establishes an injectivity property.
Another crucial property is the fact that two different chase trees, no matter in which order the triggers are applied, always lead to the same set of programs, that is, the set of programs obtained after collecting the results of their finite paths coincide. This is established by the next technical lemma, which exploits Lemma~\ref{lem:chase-tree-properties}, as well as the monotonicity of grounders for generative Datalog$^\naf$ programs.

\begin{lemma}\label{lem:chase-tree-uniqueness}
	Consider a GDatalog$^\naf[\Delta]$ program $\Pi$, and a database $D$ of $\edb{\Pi}$. Let $G$ be a grounder of $\Pi[D]$, and $T,T'$ be chase trees for $D$ w.r.t.~$\Pi$ relative to $G$. It holds that
	\[
	\sem{\mathsf{paths}^{\fin}(T)}\ =\ \sem{\mathsf{paths}^{\fin}(T')}.
	\]
\end{lemma}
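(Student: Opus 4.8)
The plan is to identify an \emph{intrinsic} family $S_G$ of (finite) ground AtR programs, depending only on $\Pi$, $D$, and $G$ but on no particular chase tree, and to show $\sem{\mathsf{paths}^{\fin}(T)} = S_G$ for \emph{every} chase tree $T$ for $D$ w.r.t.~$\Pi$ relative to $G$; the claim is then immediate. I would call $\dep \in [2^{\ground{\dep_{\Pi[D]}^{\exists}}}]_{=}$ \emph{$G$-reachable} if there is a finite chain $\emptyset = \dep_0 \subsetneq \dep_1 \subsetneq \cdots \subsetneq \dep_k = \dep$ in which each $\dep_{i+1}$ is obtained from $\dep_i$ by choosing a trigger $\alpha_i$ for $G(\dep_i)$ on $\dep_i$ and adding the single ground AtR TGD that resolves $\alpha_i$ with an outcome of positive probability (so $|\dep|=k$, and in particular $\dep$ is finite). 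I then set $S_G = \{\dep \in \terminals{G} : \dep \text{ is } G\text{-reachable}\}$. The two ingredients I rely on throughout are Lemma~\ref{lem:chase-tree-properties}(1), ensuring that every node label of a chase tree is functionally consistent (hence lies in $[2^{\ground{\dep_{\Pi[D]}^{\exists}}}]_=$), and the monotonicity of $G$ from Definition~\ref{def:grounder-gdatalog}, which yields $\heads{G(\dep)} \subseteq \heads{G(\dep')}$ whenever $\dep \subseteq \dep'$.

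For the easy inclusion $\sem{\mathsf{paths}^{\fin}(T)} \subseteq S_G$, I would take a finite maximal path $\pi = v_1,\dots,v_n$ of $T$. Each non-leaf $v_i$ ($i<n$) fires a trigger according to Definition~\ref{def:chase-step}, so $\lambda(v_{i+1})$ extends $\lambda(v_i)$ by exactly one ground AtR TGD carrying a positive-probability outcome; hence $\emptyset=\lambda(v_1)\subsetneq\cdots\subsetneq\lambda(v_n)=\sem{\pi}$ is a witnessing chain and $\sem{\pi}$ is $G$-reachable. Since $v_n$ is a leaf, there is no trigger for $G(\sem{\pi})$ on $\sem{\pi}$, i.e.\ $\mi{AtR}_{\sem{\pi}}$ is defined on every $\mathrm{Active}$-atom of $\heads{G(\sem{\pi})}$, which is exactly $\mi{AtR}_{\sem{\pi}} \comp G(\sem{\pi})$; with Lemma~\ref{lem:chase-tree-properties}(1) this gives $\sem{\pi}\in\terminals{G}$, so $\sem{\pi}\in S_G$.

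The reverse inclusion $S_G \subseteq \sem{\mathsf{paths}^{\fin}(T)}$ is the heart of the argument and the step I expect to be the main obstacle, since $T$ may choose its triggers adversarially. Fix $\dep \in S_G$ and a chase tree $T$, and descend from the root maintaining the invariant $\lambda(v) \subseteq \dep$ (true at the root, where $\lambda=\emptyset$). If the current node $v$ is not a leaf then $\lambda(v)\subsetneq\dep$ (equality would make $v$ a leaf, as $\dep\in\terminals{G}$), and $v$ carries some trigger $\alpha\in\heads{G(\lambda(v))}$ with $\mi{AtR}_{\lambda(v)}$ undefined on $\alpha$; by monotonicity $\alpha\in\heads{G(\dep)}$, so $\dep\in\terminals{G}$ gives a value $o$ with $\mi{AtR}_\dep(\alpha)=o$, and $G$-reachability of $\dep$ forces $\delta\params{\bar p}(o)>0$ (every AtR TGD of a $G$-reachable set arose from a trigger application, hence from a positive-probability outcome). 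Thus $v$ has a child $v'$ with $\lambda(v')=\lambda(v)\cup\{\text{the AtR TGD resolving }\alpha\text{ to }o\}\subseteq\dep$, strictly larger; as $\dep$ is finite this descent terminates, necessarily at a leaf $v$ with $\lambda(v)\subseteq\dep$. Finally I would show $\lambda(v)=\dep$: if not, let $\dep_i$ be the first member of a witnessing chain for $\dep$ with $\dep_i\not\subseteq\lambda(v)$, so $\dep_{i-1}\subseteq\lambda(v)$ and $\dep_i=\dep_{i-1}\cup\{\tau_i\}$ with $\tau_i\notin\lambda(v)$ resolving a trigger $\alpha_i$ for $G(\dep_{i-1})$ on $\dep_{i-1}$; by monotonicity $\alpha_i\in\heads{G(\lambda(v))}$, and $\alpha_i$ is not resolved in $\lambda(v)$ (otherwise functional consistency of $\dep\supseteq\lambda(v)$ would put $\tau_i$ into $\lambda(v)$), so $\alpha_i$ is a trigger for $G(\lambda(v))$ on $\lambda(v)$, contradicting that $v$ is a leaf. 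Hence $\lambda(v)=\dep$, i.e.\ $\dep$ is the result of a finite maximal path of $T$. Applying $\sem{\mathsf{paths}^{\fin}(T)} = S_G$ to both $T$ and $T'$ then yields the lemma.
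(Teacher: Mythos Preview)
Your proposal is correct and follows essentially the same route as the paper: both arguments descend in the chase tree while maintaining the invariant $\lambda(v)\subseteq\dep$ (the paper's $\dep$ being $\sem{\pi}$ for a fixed finite path of the other tree), use monotonicity of $G$ to see that the trigger chosen at $v$ is resolved in $\dep$, and finish with the same ``leaf but strict subset $\Rightarrow$ unresolved trigger'' contradiction (the paper's Claim~(*)). Your framing via the intrinsic set $S_G=\{\dep\in\terminals{G}:\dep\text{ is }G\text{-reachable}\}$ is a clean organizational device---it makes explicit what $\sem{\mathsf{paths}^{\fin}(T)}$ actually equals, whereas the paper only proves the two instances coincide---but the underlying mechanics are identical.
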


\smallskip
We finally establish the key connection between chase trees and possible outcomes, which heavily relies on Lemma~\ref{lem:chase-tree-uniqueness}.

\OMIT{
Given a chase tree $T$ for a database $D$ w.r.t.~a GDatalog$^\naf[\Delta]$ program $\Pi$ relative to a grounder $G$ of $\Pi[D]$, we define the binary relation 
\[
\mathsf{R}_T\ =\ \left\{(\pi, \sem{\pi} \cup G(\sem{\pi})) \mid \pi \in \mathsf{paths}^\fin(T)\right\}.
\]
over $\mathsf{paths}^\fin(T) \times 2^{\ground{\dep_{\Pi(D)}}}$. By exploiting Lemma~\ref{lem:chase-tree-uniqueness} we can show the following key technical result:

\begin{lemma}\label{lem:bijection}
	Consider a GDatalog$^\naf[\Delta]$ program $\Pi$, and a database $D$ of $\edb{\Pi}$. Let $G$ be a grounder of $\Pi[D]$, and $T$ be chase tree for $D$ w.r.t.~$\Pi$ relative to $G$. $\mathsf{R}_T$ is a bijection from $\mathsf{paths}^\fin(T)$ to $\Omega_{\Pi,G}^{\fin}(D)$.
\end{lemma}
}

\begin{lemma}\label{lem:bijection}
	Consider a GDatalog$^\naf[\Delta]$ program $\Pi$, and a database $D$ of $\edb{\Pi}$. Let $G$ be a grounder of $\Pi[D]$, and $T$ be chase tree for $D$ w.r.t.~$\Pi$ relative to $G$. The binary relation
	\[
	\left\{(\pi, \sem{\pi}_G) \mid \pi \in \mathsf{paths}^\fin(T)\right\}.
	\]
	is a bijection from $\mathsf{paths}^\fin(T)$ to $\Omega_{\Pi,G}^{\fin}(D)$.
\end{lemma}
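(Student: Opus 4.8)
The plan is to check separately that the map $\Phi \colon \pi \mapsto \sem{\pi}_G$ sends $\mathsf{paths}^{\fin}(T)$ into $\Omega_{\Pi,G}^{\fin}(D)$, that $\Phi$ is injective, and that $\Phi$ is onto $\Omega_{\Pi,G}^{\fin}(D)$. Injectivity is the quickest: distinct maximal paths of a tree terminate at distinct leaves, so by Lemma~\ref{lem:chase-tree-properties}(2) two distinct finite paths $\pi \neq \pi'$ already have $\sem{\pi} = \lambda(v_n) \neq \lambda(v'_m) = \sem{\pi'}$; moreover, every ground AtR TGD has a single $\text{\rm Active}$ atom as body and a $\text{\rm Result}$ atom as head, a shape shared by no ground instance of a TGD of $\dep_{\Pi[D]}^{\not\exists}$, so $\ground{\dep_{\Pi[D]}^{\exists}}$ and $\ground{\dep_{\Pi[D]}^{\not\exists}}$ are disjoint and hence $\sem{\pi} = \sem{\pi}_G \cap \ground{\dep_{\Pi[D]}^{\exists}}$ is recoverable from $\sem{\pi}_G$; therefore $\sem{\pi}_G \neq \sem{\pi'}_G$.

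For the image of $\Phi$, fix a finite maximal path $\pi = v_1,\dots,v_n$ and put $\dep = \sem{\pi} = \lambda(v_n)$. By Lemma~\ref{lem:chase-tree-properties}(1), $\dep \in [2^{\ground{\dep_{\Pi[D]}^{\exists}}}]_{=}$. Since $v_n$ is a leaf there is no trigger for $G(\dep)$ on $\dep$, which by unfolding the definition of ``trigger'' says exactly that $\mi{AtR}_\dep$ is defined on every atom $\text{\rm Active}_{n}^{\delta}(\cdots) \in \heads{G(\dep)}$, i.e.\ $\mi{AtR}_\dep \comp G(\dep)$ and $\dep \in \terminals{G}$. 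To see that $\dep$ is $\subseteq$-minimal in $\terminals{G}$, suppose $\dep' \in \terminals{G}$ with $\dep' \subsetneq \dep$; since each trigger application along $\pi$ adds exactly one ground AtR TGD, $\dep = \{\alpha_i \ra \text{\rm Result}(\cdots,o_i) \mid 1 \le i < n\}$, with $\alpha_i$ the trigger used at $v_i$, so picking the least $i$ with $(\alpha_i \ra \text{\rm Result}(\cdots,o_i)) \notin \dep'$ yields $\lambda(v_i) \subseteq \dep'$; monotonicity of $G$ then gives $\alpha_i \in \heads{G(\lambda(v_i))} \subseteq \heads{G(\dep')}$, so $\mi{AtR}_{\dep'} \comp G(\dep')$ puts some $\alpha_i \ra \text{\rm Result}(\cdots,o')$ into $\dep' \subseteq \dep$, and functional consistency of $\dep$ forces $o' = o_i$, contradicting $(\alpha_i \ra \text{\rm Result}(\cdots,o_i)) \notin \dep'$. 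Hence $\dep \cup G(\dep) \in G$-$\ground{\Pi[D]}$; moreover each AtR TGD in $\dep$ stems from a trigger application, so its head is $\text{\rm Result}_{|\bar q|}^{\delta}(\bar p,\bar q,o)$ with $\delta\params{\bar p}(o)>0$, and these are the only $\text{\rm Result}$-atoms in $\heads{\dep \cup G(\dep)}$ since $\text{\rm Result}$ occurs in no head of $\dep_{\Pi[D]}^{\not\exists}$; thus $\dep \cup G(\dep)$ is a possible outcome, and it lies in $\Omega_{\Pi,G}^{\fin}(D)$ precisely because $\pi$ is finite. So $\Phi(\pi) \in \Omega_{\Pi,G}^{\fin}(D)$.

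For surjectivity, take $\dep'' = \dep \cup G(\dep) \in \Omega_{\Pi,G}^{\fin}(D)$, with $\dep$ a $\subseteq$-minimal element of $\terminals{G}$, $\dep$ finite, and $\delta\params{\bar p}(o)>0$ for every $\text{\rm Result}_{|\bar q|}^{\delta}(\bar p,\bar q,o) \in \heads{\dep}$. The plan is to build a chase tree $T^{\star}$ for $D$ w.r.t.\ $\Pi$ relative to $G$ that possesses a finite maximal path with result $\dep$; Lemma~\ref{lem:chase-tree-uniqueness} then gives $\dep \in \sem{\mathsf{paths}^{\fin}(T^{\star})} = \sem{\mathsf{paths}^{\fin}(T)}$, so $\dep'' = \Phi(\pi)$ for the corresponding $\pi \in \mathsf{paths}^{\fin}(T)$. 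The distinguished branch of $T^{\star}$ is grown greedily from the root (labelled $\emptyset \subseteq \dep$): whenever the current label $\mu$ satisfies $\mu \subsetneq \dep$, minimality of $\dep$ gives $\mu \notin \terminals{G}$, so some $\alpha \in \heads{G(\mu)}$ witnesses a trigger; by monotonicity $\alpha \in \heads{G(\dep)}$, and $\mi{AtR}_\dep \comp G(\dep)$ supplies a TGD $\alpha \ra \text{\rm Result}(\bar p,\bar q,o^{\star}) \in \dep$ with $\delta\params{\bar p}(o^{\star})>0$, so the application of $\alpha$ to $\mu$ has a child labelled $\mu \cup \{\alpha \ra \text{\rm Result}(\bar p,\bar q,o^{\star})\} \subseteq \dep$ which we follow; once $\mu = \dep$ the node has no trigger (as $\dep \in \terminals{G}$) and is a leaf. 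Each step strictly enlarges a finite label inside $\dep$, so this branch is finite with result $\dep$, and the remaining nodes of $T^{\star}$ are expanded arbitrarily (at any node either a trigger exists and is applied, or the node is a leaf) into a full chase tree.

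The main obstacle is surjectivity: the chase tree $T$ we are handed commits to a single trigger per node, so there is no reason it literally spells out a prescribed minimal terminal $\dep$, and all the ``following $\dep$'' reasoning must transport the compatibility $\mi{AtR}_\dep \comp G(\dep)$ down the branch via monotonicity of $G$; the clean escape is to construct the tailored $T^{\star}$ and appeal to the order-independence of Lemma~\ref{lem:chase-tree-uniqueness}. The minimality claim inside the image argument is the other delicate point, and it again hinges on combining monotonicity of $G$ with the functional consistency from Lemma~\ref{lem:chase-tree-properties}(1).
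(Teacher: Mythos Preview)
Your proof is correct and follows the paper's strategy: establish injectivity, show each finite maximal path yields a finite possible outcome, and for surjectivity construct an auxiliary chase tree reaching the prescribed minimal terminal before invoking Lemma~\ref{lem:chase-tree-uniqueness}. The only differences are cosmetic---you build one tailored tree $T^{\star}$ per outcome whereas the paper builds a single tree $T''$ covering all finite outcomes simultaneously---and you are in fact more careful than the paper in two places: you supply the explicit minimality argument for $\sem{\pi}$ in $\terminals{G}$ (the paper simply asserts $\sem{\pi}_G \in G\text{-}\ground{\Pi[D]}$), and you justify injectivity by explaining why $\sem{\pi}$ is recoverable from $\sem{\pi}_G$.
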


\noindent \paragraph{Fixpoint Probabilistic Semantics.} We now proceed to introduce the desired fixpoint semantics by exploiting the chase procedure presented above.
Consider a GDatalog$^\naf[\Delta]$ program $\Pi$, and a database $D$ of $\edb{\Pi}$. Let $G$ be a grounder of $\Pi[D]$, and $T$ be a chase tree for $D$ w.r.t.~$\Pi$ relative to $G$. Our goal is to define a probability space $\mi{PS}_T$ over $\paths{T}$, and then show that from $\mi{PS}_T$ we can get a probability space that faithfully mimics $\Pi_G(D)$

We start by observing that each path $\pi \in \paths{T}$ induces a (possibly infinite) subset of $\sms{\dep_{\Pi[D]}}$, that is, $\sms{\sem{\pi}_G}$. Note also that for $\pi,\pi' \in \paths{T}$, it might be the case that $\pi \neq \pi'$ (and thus, by Lemma~\ref{lem:chase-tree-properties}(2), $\sem{\pi} \neq \sem{\pi'}$), but they induce the same subset of $\sms{\dep_{\Pi[D]}}$. We now proceed to define a probability space over $\Omega_T = \paths{T}$. 
Let $F_T$ be the subset of $2^{\paths{T}}$ consisting of $\mathsf{paths}^{\infty}(T)$, and all the maximal subsets $E$ of $\mathsf{paths}^{\fin}(T)$ such that, for all $\pi,\pi' \in E$, $\sms{\sem{\pi}_G} = \sms{\sem{\pi'}_G}$. Let $\mathcal{F}_T$ be the $\sigma$-algebra generated by $F_T$. We finally define the function $P_T : \mathcal{F}_T \ra [0,1]$ as follows.
For every $\pi \in \mathsf{paths}^{\fin}(T)$, let
\[
\mi{Pr}(\pi)\ =\ \prod_{\textrm{Result}_{|\bar q|}^{\delta}(\bar p, \bar q, o) \in \heads{G(\sem{\pi})}}{\delta\params{\bar p}(o)}.
\]
Then, for every (countable) set $E \in \mathcal{F}_T$ with $E \in 2^{\mathsf{paths}^{\fin}(T)}$, let
\[
P_T(E)\ =\ \sum_{\pi \in E}{\mi{Pr}(\pi)}.
\]
Clearly, $\mathsf{paths}^{\fin}(T) \in \mathcal{F}_T$, and thus, we can let
\[
P_T\left(\mathsf{paths}^{\infty}(T)\right)\ =\ 1 - P_T\left(\mathsf{paths}^{\fin}(T)\right).
\]
Finally, $P_T$ extends to countable unions via countable addition.
\OMIT{
Finally, for every (countable) set $E = \mathsf{paths}^{\infty}(T) \cup \bigcup_{i>0} E_i \in \mathcal{F}_T$, 
\[
P_T(E)\ =\ P_T\left(\mathsf{paths}^{\infty}(T)\right) + \sum_{i>0} P_T(E_i).
\]
}
Clearly, $\mi{PS}_T = (\Omega_T,\mathcal{F}_T,P_T)$ is a probability space.
Consider now the triple $\sem{\mi{PS}_T}_G$ obtained from $\mi{PS}_T$ by replacing every $\pi \in \Omega_T$ with $\sem{\pi}_G$. Formally, $\sem{\mi{PS}_T}_G = (\sem{\Omega_T}_G,\sem{\mathcal{F}_T}_G,\sem{P_T}_G)$, where 
\begin{itemize}
	\item $\sem{\mathcal{F}_T}_G = \{\sem{E}_G \mid E \in \mathcal{F}_T\}$, and 
	\item $\sem{P_T}_G : \sem{\mathcal{F}_T}_G \ra [0,1]$ is such that, for every $\sem{E}_G \in \sem{\mathcal{F}_T}$ (by definition, $E \in \mathcal{F}_T$), $\sem{P_T}_G(\sem{E}_G) = P_T(E)$.
\end{itemize}
We claim that $\sem{\mi{PS}_T}_G$ is a probability space that faithfully mimics the probability space $ \Pi_G(D) = \left(\Omega_{\Pi,G}(D),\mathcal{F}_{\Pi,G}^{D},P_{\Pi,G}^{D}\right)$.
More precisely, by exploiting Lemma~\ref{lem:bijection}, we can show the following:

\begin{theorem}\label{the:chase-semantics}
	Consider a GDatalog$^\naf[\Delta]$ program $\Pi$, and a database $D$ of $\edb{\Pi}$. Let $G$ be a grounder of $\Pi[D]$, and $T$ be a chase tree for $D$ w.r.t.~$\Pi$ relative to $G$. The following hold:
	\begin{enumerate}
		\item\label{the:chasesemantics:1} $\sem{\mathsf{paths}^\fin(T)}_G = \Omega_{\Pi,G}^{\fin}(D)$.
		
		\item\label{the:chasesemantics:2} There exists a bijection $g : \sem{\mathcal{F}_T}_G \ra \mathcal{F}_{\Pi,G}^{D}$ such that, for every $E \in \sem{\mathcal{F}_T}_G$, $E \cap \sem{\mathsf{paths}^\fin(T)}_G = g(E) \cap \Omega_{\Pi,G}^{\fin}(D)$.
		
		\item\label{the:chasesemantics:3} For every $E \in \sem{\mathcal{F}_T}_G$, $\sem{P_T}_G(E) = P_{\Pi,G}^{D}(g(E))$. 
	\end{enumerate}
\end{theorem}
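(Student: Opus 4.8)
The plan is to transport the probabilistic structure of $\mi{PS}_T$ onto $\Pi_G(D)$ along the bijection $\pi \mapsto \sem{\pi}_G$ supplied by Lemma~\ref{lem:bijection}, treating the infinity event separately. Item~(1) is immediate: Lemma~\ref{lem:bijection} states that $\pi \mapsto \sem{\pi}_G$ is a bijection from $\mathsf{paths}^{\fin}(T)$ \emph{onto} $\Omega_{\Pi,G}^{\fin}(D)$, so its image $\sem{\mathsf{paths}^{\fin}(T)}_G$ equals $\Omega_{\Pi,G}^{\fin}(D)$. For the remaining items I would first record a few observations. Since each $\delta\params{\bar p}$ is a discrete distribution and $\ins{C}$ is countable, every trigger application (Definition~\ref{def:chase-step}) branches into at most countably many children, so $T$ has countably many nodes, $\mathsf{paths}^{\fin}(T)$ is countable, and hence so is $\Omega_{\Pi,G}^{\fin}(D)$. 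Consequently $F_T$ is a \emph{countable} partition of $\paths{T}$ — the classes of finite paths under ``same $\sms{\sem{\cdot}_G}$'', together with the single block $\mathsf{paths}^{\infty}(T)$ which is disjoint from all the others — and similarly $F_{\Pi,G}^{D}$ is a countable partition of $\Omega_{\Pi,G}(D)$; therefore $\mathcal{F}_T$ (resp.\ $\mathcal{F}_{\Pi,G}^{D}$) is exactly the family of all unions of blocks of $F_T$ (resp.\ $F_{\Pi,G}^{D}$). Next, $\sem{\cdot}_G$ is injective on all of $\paths{T}$: two distinct paths first diverge at some node and hence carry a different ground AtR TGD by Lemma~\ref{lem:chase-tree-properties}(2) together with functional consistency, and applying the function $G$ preserves this; moreover $\sem{\mathsf{paths}^{\infty}(T)}_G$ consists of infinitary programs and is thus disjoint from $\sem{\mathsf{paths}^{\fin}(T)}_G = \Omega_{\Pi,G}^{\fin}(D)$. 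Finally — and this is what aligns the two partitions — for a finite path $\pi$ the set $\sms{\sem{\pi}_G}$ is precisely the stable-model set carried by the outcome $\sem{\pi}_G$, so the bijection of Lemma~\ref{lem:bijection} maps the blocks $C_1, C_2, \dots$ of $F_T$ bijectively onto the blocks $C'_1, C'_2, \dots$ of $F_{\Pi,G}^{D}$, say $\sem{C_i}_G = C'_{j(i)}$ for a reindexing $j$.

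For item~(2) I would define $g$ as follows. Given $E \in \sem{\mathcal{F}_T}_G$, injectivity of $\sem{\cdot}_G$ yields a unique $E_0 \in \mathcal{F}_T$ with $E = \sem{E_0}_G$; writing $E_0 = Z \cup \bigcup_{i \in S} C_i$ with $Z \in \{\emptyset, \mathsf{paths}^{\infty}(T)\}$ and $S$ a set of indices, set $g(E) = Z' \cup \bigcup_{i \in S} C'_{j(i)}$, where $Z' = \emptyset$ if $Z = \emptyset$ and $Z' = \Omega_{\Pi,G}^{\infty}(D)$ otherwise. Then $g(E) \in \mathcal{F}_{\Pi,G}^{D}$; $g$ is a bijection, with inverse defined symmetrically via $j^{-1}$ (using that the blocks on each side are pairwise disjoint and nonempty, so a union of blocks determines which blocks it comprises); and the required restriction identity holds because both $E \cap \Omega_{\Pi,G}^{\fin}(D)$ and $g(E) \cap \Omega_{\Pi,G}^{\fin}(D)$ equal $\bigcup_{i \in S} C'_{j(i)}$, since the infinitary parts $\sem{\mathsf{paths}^{\infty}(T)}_G$ and $\Omega_{\Pi,G}^{\infty}(D)$ are both disjoint from $\Omega_{\Pi,G}^{\fin}(D)$.

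For item~(3), by definition $\sem{P_T}_G(E) = P_T(E_0)$. The one per-element fact needed is $\mi{Pr}(\pi) = \mi{Pr}(\sem{\pi}_G)$ for every finite path $\pi$: the $\textrm{Result}$-atoms occurring in $\heads{\sem{\pi}_G}$ are exactly those occurring in $\sem{\pi}$ — the component $G(\sem{\pi}) \subseteq \ground{\dep_{\Pi[D]}^{\not\exists}}$ contributes none, because a $\textrm{Result}$ predicate never appears in the head of a rule of $\dep_{\Pi[D]}^{\not\exists}$ — so the two products agree. If $Z = \emptyset$, then $P_T(E_0) = \sum_{\pi \in E_0} \mi{Pr}(\pi)$ and $P_{\Pi,G}^{D}(g(E)) = \sum_{\dep \in g(E)} \mi{Pr}(\dep)$, and these sums of non-negative reals agree term by term under the restriction of $\pi \mapsto \sem{\pi}_G$ to $E_0$. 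If $Z = \mathsf{paths}^{\infty}(T)$, then $P_T(E_0) = (1 - P_T(\mathsf{paths}^{\fin}(T))) + \sum_{i \in S} \sum_{\pi \in C_i} \mi{Pr}(\pi)$ and, symmetrically, $P_{\Pi,G}^{D}(g(E)) = (1 - P_{\Pi,G}^{D}(\Omega_{\Pi,G}^{\fin}(D))) + \sum_{i \in S} \sum_{\dep \in C'_{j(i)}} \mi{Pr}(\dep)$; these match because the finite-outcome identity already gives $P_T(\mathsf{paths}^{\fin}(T)) = P_{\Pi,G}^{D}(\Omega_{\Pi,G}^{\fin}(D))$ and each inner sum over $C_i$ equals the corresponding sum over $C'_{j(i)}$.

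The hard part is not any single deduction — with Lemma~\ref{lem:bijection} in hand the whole argument is transport of structure — but the careful handling of the infinity event: one must check that it is disjoint from every finite block on both sides, that $\sem{\cdot}_G$ never merges it with a finite outcome, and that the two definitions of the measure on events that include it (via $1$ minus the finite mass) line up. The countability observations that identify $\mathcal{F}_T$ and $\mathcal{F}_{\Pi,G}^{D}$ with the families of unions-of-blocks also need to be stated explicitly, even though they are routine.
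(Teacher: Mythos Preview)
Your proof is correct and follows the same strategy as the paper's own argument: item~(1) is read off from Lemma~\ref{lem:bijection}, and items~(2) and~(3) are obtained by transporting the block structure of $F_T$ to that of $F_{\Pi,G}^{D}$ along the bijection $\pi \mapsto \sem{\pi}_G$, sending the infinity event to the infinity event. Your write-up is in fact considerably more careful than the paper's sketch --- you make explicit the countability of $F_T$ and $F_{\Pi,G}^{D}$, the resulting description of the two $\sigma$-algebras as families of unions of blocks, the injectivity of $\sem{\cdot}_G$ on \emph{all} paths, and the per-path identity $\mi{Pr}(\pi)=\mi{Pr}(\sem{\pi}_G)$ --- all of which the paper leaves implicit.
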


\OMIT{
\begin{theorem}\label{the:chase-semantics}
	Consider a GDatalog$^\naf[\Delta]$ program $\Pi$, and a database $D$ of $\edb{\Pi}$. Let $G$ be a grounder of $\Pi[D]$, and $T$ be a chase tree for $D$ w.r.t.~$\Pi$ relative to $G$. The following hold:
	\begin{enumerate}
		\item The function $f : \sem{\mathsf{paths}^\fin(T)} \ra \Omega_{\Pi,G}^{\fin}(D)$ such that $f(\dep) = \dep \cup G(\dep)$, for every $\dep \in \sem{\mathsf{paths}^\fin(T)}$, is a bijection.
		
		\item There exists a bijection $g : \sem{\mathcal{F}_T} \ra \mathcal{F}_{\Pi,G}^{D}$ such that, for every $E \in \sem{\mathcal{F}_T}$, $E \cap \sem{\mathsf{paths}^\fin(T)} = f^{-1}(g(E) \cap \Omega_{\Pi,G}^{\fin}(D))$.
		
		\item For every $E \in \sem{\mathcal{F}_T}$, $\sem{P_T}(E) = P_{\Pi,G}^{D}(g(E))$. 
	\end{enumerate}
\end{theorem}
}

The above theorem essentially tells us that there is en effective way of obtaining the probability space $\Pi_{G}(D)$ from the probability space $\mi{PS}_T$, no matter which chase tree $T$ we consider.

\OMIT{
{\color{red} We need to formalize what we mean by this (we cannot simply say that $\sem{\mi{PS_T}}$ coincides with $\Pi_{G}(D)$ since they differ on the infinite possible outcomes due to the fact that we do not enforce fairness). We then proceed to show the claim. To this end, we need to establish the following technical result, which exploits Lemma~\ref{lem:chase-tree-uniqueness}.

\begin{lemma}	
	The binary relation 
	\[
	\left\{(\pi, \sem{\pi} \cup G(\sem{\pi})) \mid \pi \in \mathsf{paths}^\fin(T)\right\}.
	\]
	is a bijective function of the form $\mathsf{paths}^\fin(T) \ra \Omega_{\Pi,G}^{\fin}(D)$.
\end{lemma}
}

\OMIT{
\begin{lemma}	
	Let $f$ be the binary relation over $\mathsf{paths}^\fin(T) \times 2^{\ground{\dep_{\Pi(D)}}}$ defined as $\{(\pi, \sem{\pi} \cup G(\sem{\pi})) \mid \pi \in \mathsf{paths}^\fin(T)\}$. Then, $f$ is a bijective function from $\mathsf{paths}^\fin(T)$ to $\Omega_{\Pi,G}^\fin(D)$.
\end{lemma}
}
}

\section{Stratified Negation}\label{sec:stratified-negation}

We have seen in Section~\ref{sec:gen-datalog} that for generative Datalog programs without negation, there is no need to go beyond the probabilistic semantics induced by the simple grounder, i.e., the semantics induced by the simple grounder is as good as the semantics induced by any other grounder (Theorem~\ref{the:gsimple-best}).
In this section, we establish an analogous result for the important class of generative Datalog$^\naf$ programs with stratified negation. But let us first recall this class.

\medskip
\noindent \paragraph{Generative Datalog with Stratified Negation.} Consider a GDatalog$^{\naf}$ program $\Pi$. The {\em dependency graph} of $\Pi$, denoted $\dg{\Pi}$, is a multigraph $(V,E)$, where (i) $V = \sch{\Pi}$, (ii) for each $\rho \in \Pi$ with $P$ being the predicate of $H(\rho)$, and for every predicate $R$ occurring in $B^+(\rho)$ (resp., $B^-(\rho)$), there is a {\em positive edge} $(R,P)$ (resp., a {\em negative edge} $(R,P)$) in $E$, and (iii) there are no other edges in $E$.
We say that {\em $P$ depends on $R$ (relative to $\Pi$)} if there is a path in $\dg{\Pi}$ from $R$ to $P$. A {\em cycle} in $\dg{\Pi}$ is a path from a predicate to itself.
We say that the program $\Pi$ has {\em stratified negation} if $\dg{\Pi}$ does not contain a cycle that goes through a negative edge. Henceforth, we refer to such kind of programs as GDatalog$^{\naf s}[\Delta]$ programs.

Interestingly, when the negation is stratified, a certain stratification of the given program can be induced that guarantees that a negative literal is affected only by rules from previous strata. We proceed to formalize this.
A {\em strongly connected component} of $\dg{\Pi} = (V,E)$ is a subset-maximal set $C \subseteq V$ such that $P$ depends on $R$ for every two distinct predicates $P,R \in C$. Let $\scc{\Pi}$ be the set of strongly connected components of $\dg{\Pi}$. A {\em topological ordering} of $\scc{\Pi}$ is a
linear order $C_1,\ldots,C_n$, for $n = |\scc{\Pi}|$, over $\scc{\Pi}$ such that, for $1 \leq i < j \leq n$, there is no predicate of $C_i$ that depends on one of $C_j$.
We denote by $\Pi_{|C_i}$
the subset of $\Pi$ that keeps only the rules whose head mentions a predicate from $C_i$.

\medskip
\noindent \paragraph{Perfect Grounding for GDatalog$^{\naf s}[\Delta]$.} Unlike positive programs, we can show that for generative Datalog$^{\naf s}$ programs there exists a better grounder than the simple one.
%
More precisely, for a GDatalog$^{\naf s}$ program $\Pi$, and a database $D$ of $\edb{\Pi}$, there exists a grounder $G$ of $\Pi[D]$ such that $\Pi_G(D)$ is as good as $\Pi_{\mi{GSimple}_{\Pi[D]}}(D)$, but not vice versa. We proceed to introduce a grounder that leads to the ultimate semantics in the case of stratified negation.

Given a GDatalog$^{\naf s}[\Delta]$ program $\Pi$, we define the so-called perfect grounder $\mi{GPerfect}_{\Pi}$, which can be think as the version of the simple grounder (see Definition~\ref{def:simple-grounder}) that considers the rules of $\Pi$ in a certain order according to the induced stratification of $\Pi$.
Similarly to the simple grounder, $\mi{GPerfect}_{\Pi}$ relies on an operator that operates on (infinitary) existential-free TGD$^\naf$ programs. Consider such a program $\dep$. We define $\mi{Perfect}_{\dep}$ as the monotonic function from $2^{\ground{\dep}}$ to $2^{\ground{\dep}}$ such that, for every $\dep' \in 2^{\ground{\dep}}$,
\begin{multline*}
\mathit{Perfect}_\dep(\dep')\ =\ \dep' \cup \left\{
h(\sigma)\ \mid\ 
\sigma \in \dep,\ h(B^+(\sigma)) \subseteq \heads{\dep'}\ \text{and}\right.\\
\left.h(B^-(\sigma)) \cap \heads{\dep'} = \emptyset)
\right\}.
\end{multline*}
The set $\mathit{Perfect}_{\dep}^{\infty}(\dep')$ is defined as expected; due to the monotonicity of $\mathit{Perfect}_{\dep}$, it is the least fixpoint of $\mathit{Perfect}_{\dep}$ that contains $\dep'$.

Assume now that $C_1,\ldots,C_n$ is a topological ordering over $\scc{\Pi}$. For a set $\dep \in [2^{\ground{\dep_{\Pi}^{\exists}}}]_{=}$, we define $\dep \uparrow C_i \in 2^{\ground{\dep_{\Pi}^{\not\exists}}}$, which, intuitively speaking, is the set assigned to $\dep$ by the perfect grounder considering the rules stratum by stratum up to the $i$-th stratum. Eventually, $\mi{GPerfect}_{\Pi}$ will assign to $\dep$ the (infinitary) TGD$^\naf$ program $\dep \uparrow C_n$.
By convention, let $\dep \uparrow C_0 = \emptyset$. Then, for every $i \in [n]$, with $\dep'_i = \dep_{\Pi_{|C_i}}^{\not \exists} \cup \dep \uparrow C_{i-1} \cup \dep$,
\begin{align*}
\dep \uparrow C_{i}\ =\
\begin{cases}
\mathit{Perfect}_{\dep'_i}^{\infty}(\emptyset) \setminus \dep, & \mbox{if } \mi{AtR}_\dep \comp \dep \uparrow C_{i-1}\\
\dep \uparrow C_{i-1}, & \mbox{otherwise}.
\end{cases}
\end{align*}
We are now ready to introduce the function $\mi{GPerfect}_{\Pi}$.

\begin{definition}[\textbf{Perfect Grounder}]\label{def:perfect-grounder}
	Consider a GDatalog$^{\naf s}[\Delta]$ program $\Pi$, and let $C_1,\ldots,C_n$ be a topological ordering over $\scc{\Pi}$. The function $\mi{GPerfect}_{\Pi} : [2^{\ground{\dep_{\Pi}^{\exists}}}]_{=} \ra 2^{\ground{\dep_{\Pi}^{\not\exists}}}$ is such that
	$\mi{GPerfect}_{\Pi}(\dep) = \dep \uparrow C_n$, for every $\dep \in [2^{\ground{\dep_{\Pi}^{\exists}}}]_{=}$. \hfill\markfull
\end{definition}

An example that illustrates $\mi{GPerfect}_{\Pi}$ can be found in the appendix.
We can show that $\mi{GPerfect}_{\Pi}$ is indeed a grounder. This relies on a technical lemma that establishes the following: every (infinitary) TGD$^\naf$ program $\dep$ that belongs to the $\mi{GPerfect}_{\Pi}$-grounding of $\Pi$ has a {\em unique} stable model, namely the instance $\heads{\dep}$.

\begin{proposition}\label{pro:gperfect-grounder}
	Consider a GDatalog$^{\naf s}[\Delta]$ program $\Pi$. It holds that $\mi{GPerfect}_{\Pi}$ is a grounder of $\Pi$.
\end{proposition}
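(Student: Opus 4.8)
The plan is to verify the two requirements of Definition~\ref{def:grounder-gdatalog}: that $\mi{GPerfect}_{\Pi}$ is monotone, and that for every $\dep \in [2^{\ground{\dep_{\Pi}^{\exists}}}]_{=}$ with $\mi{AtR}_{\dep} \comp \mi{GPerfect}_{\Pi}(\dep)$ and every totalizer $\dep'$ of $\mi{AtR}_\dep$ one has $\sms{\mi{GPerfect}_{\Pi}(\dep) \cup \dep} = \sms{\dep_{\Pi}^{\not\exists} \cup \dep'}$. The overall structure parallels the proof that $\mi{GSimple}_{\Pi}$ is a grounder (Proposition~\ref{pro:gsimple-grounder}), the extra work being the stratum-by-stratum bookkeeping forced by the topological ordering $C_1,\ldots,C_n$ of $\scc{\Pi}$ fixed in Definition~\ref{def:perfect-grounder}. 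Two preliminary observations drive everything: first, every ground instance of an active-to-result TGD is existential-free, so all programs of the form $(\dep \uparrow C_i) \cup \dep$ and $\dep_{\Pi}^{\not\exists} \cup \dep'$ are existential-free TGD$^\naf$ programs, on which the stable-model semantics behaves as for ordinary logic programs; second, since the fresh $\mathrm{Active}$ and $\mathrm{Result}$ predicates occur in $\dep_{\Pi}$ only positively, the negative dependencies of $\dep_{\Pi}$ are exactly those of $\Pi$, so all programs in sight inherit the stratification of $\Pi$.

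Monotonicity I would prove by induction on $i \in \{0,\ldots,n\}$, showing that $\dep_1 \subseteq \dep_2$ implies $\dep_1 \uparrow C_i \subseteq \dep_2 \uparrow C_i$, and in parallel that $\dep \uparrow C_0 \subseteq \dep \uparrow C_1 \subseteq \cdots$ for each fixed $\dep$ (the latter because the stratum-$i$ $\mathit{Perfect}$-fixpoint, started from $\emptyset$, re-derives everything already in $\dep \uparrow C_{i-1}$ together with the ground AtR rules of $\dep$, which are then removed by ``$\setminus\,\dep$''). The only delicate point is that $\mathit{Perfect}_{\dep'}$ is not monotone in $\dep'$ in general, since enlarging $\dep'$ may fire a negative guard and kill a rule; this is precisely where stratified negation is used: no rule of $\Pi_{|C_i}$ mentions a predicate of $C_i$ in a negative literal, so inside the stratum-$i$ fixpoint the guards $h(B^-(\sigma)) \cap \heads{\cdot} = \emptyset$ only test atoms over strictly lower strata, whose extension is already frozen in $\dep \uparrow C_{i-1}$; hence the stratum-$i$ fixpoint is monotone in the relevant parameters, and the case split on whether $\mi{AtR}_\dep \comp \dep \uparrow C_{i-1}$ holds is harmless (if it fails for $\dep_1$, its level freezes at $\dep_1 \uparrow C_{i-1} \subseteq \dep_2 \uparrow C_{i-1} \subseteq \dep_2 \uparrow C_i$).

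For the second requirement, the heart of the matter is the technical lemma stated before the proposition, in the form: for every $\dep \in \terminals{\mi{GPerfect}_{\Pi}}$ the infinitary program $\dep^\star := \mi{GPerfect}_{\Pi}(\dep) \cup \dep = (\dep \uparrow C_n) \cup \dep$ has a unique stable model, namely $\heads{\dep^\star}$. I would establish this by verifying that the sets $M_i := \heads{(\dep \uparrow C_i) \cup \dep}$ realize the standard layered construction of the perfect model: $M_{i}$ is the least model, relative to $M_{i-1}$, of the reduct of the stratum-$i$ part of $\dep^\star$, which is a positive program (by the stratification observation above) whose least fixpoint is exactly what $\mathit{Perfect}$ iterates. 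Since every rule that $\mathit{Perfect}$ ever adds has its positive body already derived and its negative body permanently disjoint from the heads, $\heads{\dep^\star}$ is the least model of its own reduct, i.e.\ a stable model; uniqueness follows by a further induction on $i$, using that on stratum $i$ the reduct relative to any stable model coincides with the reduct relative to $\heads{\dep^\star}$ (they agree on the lower strata), and that a positive program has a unique least model. Running the same argument on $\dep_{\Pi}^{\not\exists} \cup \dep'$ gives that it too has a unique stable model, and this model equals $\heads{\dep^\star}$: because $\dep \subseteq \dep'$ and $\mi{AtR}_\dep \comp \mi{GPerfect}_{\Pi}(\dep)$, every $\mathrm{Active}$ atom appearing during the layered construction already has its $\mathrm{Result}$ fixed inside $\dep$, so the rules of $\dep' \setminus \dep$ never fire, while $\mathit{Perfect}$ enumerates exactly the ground instances of the rules of $\dep_{\Pi_{|C_i}}^{\not\exists}$ that fire. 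Hence $\sms{\dep^\star} = \{\heads{\dep^\star}\} = \sms{\dep_{\Pi}^{\not\exists} \cup \dep'}$, as required.

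The main obstacle I anticipate is the uniqueness half of the technical lemma: excluding stable models that deviate from the layered perfect model is where stratification, the reduct, and the circular phenomenon that $\mathrm{Active}$ atoms may themselves be heads of rules whose bodies depend on earlier probabilistic choices all interact, and one must take care that the stratum-by-stratum induction is well-founded and that the construction converges to a single stable model despite $\dep'$ and the programs involved being genuinely infinitary. Monotonicity, by comparison, is bookkeeping once the stratum-wise monotonicity of $\mathit{Perfect}$ is isolated.
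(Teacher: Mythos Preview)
Your plan is correct and lands in the same place as the paper, but the routes diverge in two respects worth noting.

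First, for the technical lemma (unique stable model equal to $\heads{\dep^\star}$), the paper does not reconstruct the perfect model stratum by stratum as you propose. Instead it simply observes that $\dep^\star$ is a stratified ground program in the sense of~\cite{GeLi88} and invokes Corollary~1 of that paper to get uniqueness in one line; it then only has to check that $\heads{\dep^\star}$ is \emph{a} stable model, which it does via the reduct, using that every rule produced by $\mathit{Perfect}$ has its positive body already among the heads and its negative body disjoint from them. Your layered induction re-proves that classical corollary from scratch. The paper's route is shorter; yours is self-contained and makes explicit why the $\mathit{Perfect}$ operator is exactly the right one.

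Second, you explicitly tackle monotonicity of $\mi{GPerfect}_\Pi$, which Definition~\ref{def:grounder-gdatalog} requires but the paper's proof does not address at all. Your induction on strata is the right idea, and your observation that negative literals at stratum $i$ only test predicates from strictly lower strata is exactly what makes the $\mathit{Perfect}$-fixpoint monotone in the relevant inputs. One case you leave open: you handle ``compatibility fails for the smaller $\dep_1$'' but not ``compatibility holds for $\dep_1$ and fails for the larger $\dep_2$'', where $\dep_1$ would advance to stratum $i$ while $\dep_2$ freezes. You should argue that this cannot break $\dep_1\uparrow C_i \subseteq \dep_2\uparrow C_i$ (or that it cannot occur); the paper offers no help here since it skips monotonicity entirely.
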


We now show the main result of this section: the perfect grounder leads to the ultimate semantics when the negation is stratified.

\begin{theorem}\label{the:gperfect-best}
	Consider a GDatalog$^{\naf s}[\Delta]$ program $\Pi$, and a database $D$ of $\edb{\Pi}$. Let $G$ be an arbitrary grounder of $\Pi[D]$. It holds that $\Pi_{\mi{GPerfect}_{\Pi[D]}}(D)$ is as good as $\Pi_{G}(D)$.
\end{theorem}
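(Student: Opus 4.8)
The plan is to recast, for an arbitrary grounder $G$ of $\Pi[D]$ and an arbitrary $\mathcal{I}\subseteq\sms{\dep_{\Pi[D]}}$, the mass $P_{\Pi,G}^{D}(\{\dep\in\Omega_{\Pi,G}^{\fin}(D)\mid\sms{\dep}=\mathcal{I}\})$ as the value, under one fixed measure, of a set that splits into a grounder-dependent factor and a grounder-independent factor, and then to argue that $\mi{GPerfect}_{\Pi[D]}$ is optimal in the grounder-dependent factor. First I would let $\mathcal{A}$ be the countable set of all atoms $\text{\rm Active}_{n}^{\delta}(\bar p,\bar q)$ over $\ins{C}$, call a \emph{configuration} a map $c$ sending each $\alpha=\text{\rm Active}_{n}^{\delta}(\bar p,\bar q)\in\mathcal{A}$ to a value with $\delta\params{\bar p}(c(\alpha))>0$ (equivalently, a total functionally consistent subset of $\ground{\dep_{\Pi[D]}^{\exists}}$), and let $\mu$ be the product over $\mathcal{A}$ of the discrete distributions $\delta\params{\bar p}$. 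For a finite functionally consistent $\dep\subseteq\ground{\dep_{\Pi[D]}^{\exists}}$, the cylinder $\mathrm{Cyl}(\dep)=\{c\mid\dep\subseteq c\}$ has $\mu(\mathrm{Cyl}(\dep))=\mi{Pr}(\dep\cup G(\dep))$ for every grounder $G$, since $G(\dep)\subseteq\ground{\dep_{\Pi[D]}^{\not\exists}}$ produces no $\text{\rm Result}$ head atoms. Fixing a chase tree $T$ for $D$ w.r.t.\ $\Pi$ relative to $G$, every configuration follows a unique maximal path of $T$, and by Lemma~\ref{lem:bijection} together with the injectivity of node labels (Lemma~\ref{lem:chase-tree-properties}) the configurations with a finite $G$-chase are exactly $\mathcal{C}_{G}:=\bigsqcup_{\pi\in\mathsf{paths}^{\fin}(T)}\mathrm{Cyl}(\sem{\pi})$ (distinct finite paths give disjoint cylinders, as no configuration extends two leaf labels), with $\mu(\mathrm{Cyl}(\sem{\pi}))=\mi{Pr}(\sem{\pi}_{G})$; and for $c\in\mathrm{Cyl}(\sem{\pi})$, since $c$ is a totalizer of $\mi{AtR}_{\sem{\pi}}$ and $\mi{AtR}_{\sem{\pi}}\comp G(\sem{\pi})$ (as $\sem{\pi}$ labels a leaf), Definition~\ref{def:grounder-gdatalog} gives $\sms{\sem{\pi}_{G}}=\sms{\dep_{\Pi[D]}^{\not\exists}\cup c}=:\mathcal{I}(c)$, a value \emph{independent of $G$}. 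Summing over the finite paths with $\sms{\sem{\pi}_{G}}=\mathcal{I}$ then yields
\[
P_{\Pi,G}^{D}\!\left(\{\dep\in\Omega_{\Pi,G}^{\fin}(D)\mid\sms{\dep}=\mathcal{I}\}\right)=\mu\!\left(\mathcal{C}_{G}\cap\mathcal{I}^{-1}(\mathcal{I})\right),\quad\mathcal{I}^{-1}(\mathcal{I}):=\{c\mid\mathcal{I}(c)=\mathcal{I}\},
\]
where both this set and its $\mi{GPerfect}_{\Pi[D]}$-counterpart are countable unions of cylinders, hence $\mu$-measurable.

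\textbf{The crux.} The remaining goal is $\mathcal{C}_{G}\subseteq\mathcal{C}_{\mi{GPerfect}_{\Pi[D]}}$ for every grounder $G$, for which it suffices to show that $\mi{AtR}_{\dep}\comp G(\dep)$ implies $\mi{AtR}_{\dep}\comp\mi{GPerfect}_{\Pi[D]}(\dep)$ for every finite functionally consistent $\dep$: granting this, if $c$'s $G$-chase is finite then $c\in\mathrm{Cyl}(\dep)$ for some such $\dep$, and by monotonicity of $\mi{GPerfect}_{\Pi[D]}$ its chase on $c$ only meets triggers in $\heads{\mi{GPerfect}_{\Pi[D]}(\dep)}$, all of which are already resolved in $\dep\subseteq c$, so it stays inside $2^{\dep}$ and terminates. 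To prove the implication I would argue by contradiction: assume $\mi{AtR}_{\dep}\comp G(\dep)$ but some $\alpha=\text{\rm Active}_{|\bar q|}^{\delta}(\bar p,\bar q)\in\heads{\mi{GPerfect}_{\Pi[D]}(\dep)}$ is unresolved by $\dep$, and pick any totalizer $\dep'$ of $\mi{AtR}_{\dep}$. By Definition~\ref{def:grounder-gdatalog}, $\sms{G(\dep)\cup\dep}=\sms{\dep_{\Pi[D]}^{\not\exists}\cup\dep'}$; since $\Pi$ is stratified, $\dep_{\Pi[D]}^{\not\exists}\cup\dep'$ is a stratified ground program, so this set is the singleton $\{M\}$ with $M$ its perfect model. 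The derivation witnessing $\alpha\in\heads{\mi{GPerfect}_{\Pi[D]}(\dep)}$ uses only rules of $\dep_{\Pi[D]}^{\not\exists}$ and AtR TGDs from $\dep\subseteq\dep'$, and its negative side-conditions consult only strata strictly below that of $\alpha$ (the defining feature of the stratified/perfect construction behind Proposition~\ref{pro:gperfect-grounder}); as the extra $\text{\rm Result}$ atoms of $\dep'\setminus\dep$ lie in strata not below $\alpha$'s, this derivation stays valid in $\dep_{\Pi[D]}^{\not\exists}\cup\dep'$, so $\alpha\in M$. Then the AtR TGD $\alpha\ra\text{\rm Result}_{|\bar q|}^{\delta}(\bar p,\bar q,o')\in\dep'$ forces $\text{\rm Result}_{|\bar q|}^{\delta}(\bar p,\bar q,o')\in M$; but $M\subseteq\heads{G(\dep)\cup\dep}$, whose only $\text{\rm Result}$ head atoms come from the AtR TGDs of $\dep$, none of which has body $\alpha$ since $\alpha\notin\mathrm{dom}(\mi{AtR}_{\dep})$ — a contradiction. (Note this single-totalizer argument also covers point-mass distributions, where one cannot pick two differing resolutions of $\alpha$.)

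\textbf{Conclusion and obstacle.} For every $\mathcal{I}$, the crux together with the $G$-independence of $\mathcal{I}^{-1}(\mathcal{I})$ gives $\mathcal{C}_{G}\cap\mathcal{I}^{-1}(\mathcal{I})\subseteq\mathcal{C}_{\mi{GPerfect}_{\Pi[D]}}\cap\mathcal{I}^{-1}(\mathcal{I})$, so monotonicity of $\mu$ and the displayed identity yield $P_{\Pi,G}^{D}(\{\dep\in\Omega_{\Pi,G}^{\fin}(D)\mid\sms{\dep}=\mathcal{I}\})\le P_{\Pi,\mi{GPerfect}_{\Pi[D]}}^{D}(\{\dep\in\Omega_{\Pi,\mi{GPerfect}_{\Pi[D]}}^{\fin}(D)\mid\sms{\dep}=\mathcal{I}\})$, which by Definition~\ref{def:compare-semantics} is exactly the assertion that $\Pi_{\mi{GPerfect}_{\Pi[D]}}(D)$ is as good as $\Pi_{G}(D)$. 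I expect the main obstacle to be the crux: establishing that the perfect grounder never triggers a probabilistic choice that some valid grounder forgoes. Its delicate ingredient is the ``replay the stratified derivation of $\alpha$'' step, which hinges on the exact stratification induced on $\dep_{\Pi[D]}^{\not\exists}$ — in particular, placing the fresh $\text{\rm Active}$/$\text{\rm Result}$ predicates so that negative body literals still inspect only strictly lower strata — i.e.\ precisely the machinery underpinning Proposition~\ref{pro:gperfect-grounder}.
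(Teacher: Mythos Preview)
Your approach is correct and genuinely different from the paper's. The paper proceeds more directly: for each finite $G$-outcome $\dep\cup G(\dep)$ it iteratively constructs a set $\dep'\subseteq\dep$ (via $\dep_0=\emptyset$, $\dep_i=\{\sigma\in\dep\mid B^+(\sigma)\subseteq\heads{\mi{GPerfect}_{\Pi[D]}(\dep_{i-1})}\}$, taking the least fixed point) such that $\dep'\cup\mi{GPerfect}_{\Pi[D]}(\dep')$ is a finite $\mi{GPerfect}$-outcome with the same stable models, then notes $\mi{Pr}(\dep')\geq\mi{Pr}(\dep)$ and declares the summed inequality~``follows.'' Your product-measure recasting $P_{\Pi,G}^{D}(\cdots)=\mu(\mathcal{C}_G\cap\mathcal{I}^{-1}(\mathcal{I}))$ with a grounder-independent $\mathcal{I}^{-1}(\mathcal{I})$ is more elaborate but makes that final summation step airtight: the paper's pointwise inequality does not by itself give the summed one without an argument that distinct $G$-outcomes mapping to the same $\dep'$ have probabilities summing to at most $\mi{Pr}(\dep')$, which your cylinder decomposition handles automatically. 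Your crux ($\dep\in\terminals{G}\Rightarrow\dep\in\terminals{\mi{GPerfect}_{\Pi[D]}}$) is a slightly stronger statement than the paper's extraction of a \emph{minimal} $\dep'\subseteq\dep$, but either suffices.

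One point of caution in your crux argument: the sentence ``the extra $\text{\rm Result}$ atoms of $\dep'\setminus\dep$ lie in strata not below $\alpha$'s'' is not literally true---a totalizer adds AtR rules for \emph{every} Active atom, including ones whose associated predicates sit in low strata. What you actually need (and what your contradiction uses) is that every atom in $\heads{\mi{Perfect}_{\dep'_j}^{\infty}(\emptyset)}$ lies in $M$, which follows from a simultaneous induction on strata showing that $\heads{\mi{Perfect}_{\dep'_i}^{\infty}(\emptyset)}$ \emph{equals} $M$ restricted to predicates processed through stratum $i$, for all $i<j$; the compatibility condition $\mi{AtR}_\dep\comp\dep\uparrow C_{i}$ (which holds up to $j-1$ by the definition of $\dep\uparrow C_j$) is exactly what guarantees that $\dep$ and the totalizer $\dep'$ agree on all Active atoms actually reached at those strata, so the two computations coincide there. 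Once that equality is established, the negative side-conditions in the derivation of $\alpha$ are checked against the correct set and the replay goes through. This is indeed ``the machinery underpinning Proposition~\ref{pro:gperfect-grounder},'' but it is worth stating the invariant precisely rather than appealing to where the extra Result atoms ``lie.''
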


\OMIT{
\noindent \paragraph{Invariance Under SMS-Equivalence.}  At this point, one may wonder whether an invariance property, analogous to Theorem~\ref{the:stable-invariance-simple} for positive programs, can be established in the case of stratified negation by adopting the perfect grounder. We proceed to show that this is the case.
Recall that two GDatalog$^\naf[\Delta]$ programs $\Pi$ and $\Pi'$ are SMS-equivalent if $\sms{\dep_\Pi} = \sms{\dep_{\Pi'}}$.
The notion of {\em perfect-semantic equivalence} is defined in exactly the same way as the notion of simple-semantic equivalence, with the difference that the perfect grounder is adopted.
We can then establish the following implication; note that the converse does not hold in general:

\begin{theorem}\label{the:stable-invariance-perfect}
	If two GDatalog$^{\naf s}[\Delta]$ programs are SMS-equivalent, then they are perfect-semantic equivalent.
\end{theorem}

This tells us that rewriting a GDatalog$^{\naf s}[\Delta]$ program in a way that preserves SMS-equivalence is
safe, as the probabilistic semantics induced by the perfect grounder is in essence preserved.
}

\section{Conclusions}\label{sec:conclusions}

We have introduced generative Datalog with negation that allows sampling from discrete probability distributions, and defined its probabilistic semantics by interpreting negation according to the standard stable model semantics. We have also shown that the semantics can be equivalently defined via a fixpoint procedure based on a new chase procedure that operates on ground programs. 
Here we focused on the generative component, but one may also incorporate conditioning under events of positive probability just as in~\cite{BCKOV17}.
An interesting direction for future research, which is highly relevant for practical implementations, is to devise sophisticated grounders that avoid as much as possible the generation of superfluous ground rules.
Furthermore, in view of the fact that continuous distributions appear in a variety of application scenarios for probabilistic databases, it is natural to extend generative Datalog with negation to support continuous distributions.

\bibliographystyle{ACM-Reference-Format}



\balance

\newpage
\appendix
\onecolumn

\section{Related Work}\label{sec:apx-related-work}

The probabilistic programming and the probabilistic database communities have developed several different models and systems that allow to specify probability distributions over data.
Since the paper by B\'ar\'any et al.~\cite{BCKOV17}, which introduced generative Datalog, already presents a thorough discussion on related work (see Section 7 of~\cite{BCKOV17}), to which we refer the reader, we proceed to discuss only the related work that is closer to our work.
Conceptually closer to generative Datalog (with or without negation) are languages studied in statistical relational AI (StarAI)~\cite{2016Raedt}, which aim at the combination of predicate logic and probabilities.
Here are some prominent examples of such languages:
\begin{itemize}
\item ProbLog~\cite{RaKT07,FBRSGTJR15} is a probabilistic extension of Prolog where standard Prolog rules can be annotated with a probability value (i.e., it allows uncertainty at the level of rules).
Hybrid ProbLog~\cite{GuJR10} is an extension of ProbLog that allows continuous attribute-level uncertainty in rule-heads.

\item Probabilistic Answer Set Programming~\cite{ijar:CozmanM20} extends the
  well-known logic programming language of Answer Set
  Programming~\cite{GeLi88,ngc:GelfondL91} with probabilistic facts. The
  semantics of this language is defined via upper and lower probability bounds
  for stable models. A number of similar extensions have been proposed in the
  literature, with the general idea to annotate facts or rules with
  probabilities~\cite{tocl:Lukasiewicz01,lpnmr:WanK09,tplp:BaralGR09}. A more
  complete overview is given by Cozman and Mau{\'{a}}~\cite{ijar:CozmanM20}.

\item Markov Logic Networks (MLNs)~\cite{RiD06} combine first-order logic and Markov Networks, and they describe joint distributions of variables based on weighted first-order constraints.
Hybrid MLNs~\cite{WaDo08} extend MLNs to continuous distributions.
Infinite MLNs~\cite{SiDo07} allow for countably
infinitely many variables with countable domains.

\item Probabilistic Soft Logic~\cite{KimmingBBHG12} is another statistical relational learning framework that combines first-order logic and probabilistic graphical models that specifies joint distributions with weighted rules, but also ``soft'' truth values.

\item Another formal language that combines first-order logic and probabilistic graphical models, which also supports continuous distributions, is Bayesian Logic~\cite{MMRSOK05}, which in turn builds on Bayesian Networks.

\item A probabilistic extension of Datalog$^\pm$ (a family of knowledge representation languages that enrich Datalog with features such as value invention and equality in head-rules~\cite{CGLMP10}) that is based on MLNs as underlying probabilistic semantics has been proposed in~\cite{GLMS13}.
Various probabilistic extensions of plain Datalog can be also found in~\cite{DeKM10,Fuhr95,Fuhr00}.
\end{itemize}

Although the probabilistic formalisms discussed above (as well as those discussed in Section 7 of~\cite{BCKOV17}), might share individual features with generative Datalog (with or without negation), the combination of plain Datalog with probabilistic programming languages should be attributed to~\cite{BCKOV17}.
Moreover, the fact that generative Datalog (with or without negation) can express attribute-level uncertainty in rule-heads (via the $\Delta$-terms appearing in rule-heads) differs from other probabilistic extensions of Datalog, where the uncertainty is at the at level of rules specified by attaching probabilities to them (like, for example, the languages from~\cite{GLMS13} and~\cite{DeKM10,Fuhr95,Fuhr00}).

\section{Section~\ref{sec:preliminaries}: Preliminaries}\label{sec:apx-preliminaries}

\subsection*{Parameterized Probability Distribution - Example}

We provide an example that illustrates the notion of parameterized probability distribution.
Throwing a die is a classical example of a random process. The throwing of an {\em unbiased} die can be modelled via the discrete probability space $(\Omega,P)$, where $\Omega = \{1,\ldots,6\}$, the six faces of a die, and $P$ is such that $P(i) = \frac{1}{6}$ for each $i \in \Omega$.
On the other hand, we can model the throwing of a {\em biased} die by using a parameterized probability distribution. In particular, this can be done via $\mathsf{Die} : \mathbb{R}^6 \ra \mathcal{P}_{\Omega}$ with $\Omega = \{0,1,\ldots,6\}$ defined as follows: for every $\bar p = (p_1,\ldots,p_6) \in \mathbb{R}^6$, $\sum_{i=1}^{6} p_i = 1$ implies $\mathsf{Die}\params{\bar p}(0) = 0$ and $\mathsf{Die}\params{\bar p}(i) = p_i$ for each $i \in [6]$; otherwise, $\mathsf{Die}\params{\bar p}(0) = 1$ and $\mathsf{Die}\params{\bar p}(i) = 0$ for each $i \in [6]$. Note that the outcome $0$ is associated with incorrect instantiations of the parameters.

\section{Section~\ref{sec:gen-datalog}: Generative Datalog with Negation}\label{sec:apx-gen-datalog}

\newcommand{\bckov}{\ensuremath{\mathrm{BCKOV}}\xspace}
\newcommand{\gsimppi}{\ensuremath{\mi{GSimple}_{\Pi[D]}}\xspace}
\newcommand{\gsimppionly}{\ensuremath{\mi{GSimple}_{\Pi}}\xspace}

\subsection*{Proof of Proposition~\ref{pro:gsimple-grounder}}

Consider a set $\dep \in [2^{\ground{\dep_{\Pi}^{\exists}}}]_{=}$ such that $\mi{AtR}_\dep \comp \gsimppionly(\dep)$. We need to show that, for every totalizer $\dep'$ of $\mi{AtR}_\dep$, it holds that
\[
\sms{\gsimppionly(\dep)\cup \dep}\ =\ \sms{\dep^{\not\exists}_\Pi \cup \dep'}.
\]
Fix an arbitrary totalizer $\dep'$ of $\mi{AtR}_\dep$. We proceed to show the above equality. For brevity, let $\dep_1 = \gsimppionly(\dep)\cup \dep$ and $\dep_2 = \dep^{\not\exists}_\Pi \cup \dep'$.

$(\subseteq)$ Consider an instance $I \in \sms{\dep_1}$. We need to show that $I \in \sms{\dep_2}$, or, equivalently, that $I$ is a model of
\[
\SM[\dep_2]\ =\ \underbrace{\bigwedge_{\sigma \in \dep_2} \sigma}_{\Phi}\ \wedge\ \neg \underbrace{\exists \ins{X}_{\dep_2} \left( (\ins{X}_{\dep_2} < \ins{R}) \wedge \repl_{\dep_2}(\dep_2)\right)}_{\Psi}.
\]
Note that, for each $\sigma \in \dep_\Pi^{\not\exists}$, and for each $h$ such that $h(\sigma) \notin \dep_1$, it holds that $B^+(h(\sigma)) \not\subseteq \heads{\dep_1} \supseteq I$.
Therefore, for every $h$ and $\sigma$ such that $B^+(h(\sigma)) \subseteq I$, it holds that $B^-(h(\sigma)) \cap I = \emptyset$ since $I$ is a stable model of $\dep_1$; thus, $I$ is a model of every $\sigma \in \dep_\Pi^{\not\exists}$. 
The same can be observed for $\dep' \supseteq \dep$ which contains no negation, and thus, $I$ is a model of every $\sigma \in \dep'$. Consequently, $I$ is a model of $\Phi$.
Assume now, by contradiction, that $I$ is a model of $\Psi$. This would imply that there exists an instance $J \subsetneq I$ that is a model of each TGD of the set obtained from $\{\sigma \in \ground{\dep_1} \mid B^-(\sigma) \cap I = \emptyset\}$ by eliminating all the negative literals. This in turn implies that $I \not\in \sms{\dep_1}$, which is not possible. Hence, $I$ is a model of $\neg \Psi$, which in turn implies that $I$ is a model of $\SM[\dep_2]$, as needed.


$(\supseteq)$ This direction is similar. Let $I \in \sms{\dep_2}$, and we need to show that $I \in \sms{\dep_1}$, or, equivalently, that $I$ is a model of
\[
\SM[\dep_1]\ =\ \underbrace{\bigwedge_{\sigma \in \dep_1} \sigma}_{\Phi'}\ \wedge\ \neg \underbrace{\exists \ins{X}_{\dep_1} \left( (\ins{X}_{\dep_1} < \ins{R}) \wedge \repl_{\dep_1}(\dep_1)\right)}_{\Psi'}.
\]
Since all the TGDs of $\dep_1$ are obtained by applying some homomorphism to TGDs of $\dep_2$, we get that $I$ is a model of $\Phi'$.
Moreover, as above, assuming that $I$ is a model of $\Psi'$, allows us to conclude that $I \notin \sms{\dep_2}$, which contradicts our hypothesis. Therefore, $I$ is a model of $\neg \Psi'$, which in turn implies that $I$ is a model of $\SM[\dep_1]$, as needed.


\subsection*{Proof of Theorem~\ref{the:semantics}}
Recall that $\Pi_{G}^{D}$ is defined as the triple $\left(\Omega_{\Pi,G}(D),\mathcal{F}_{\Pi,G}^{D},P_{\Pi,G}^{D}\right)$. 
By construction, $\mathcal{F}_{\Pi,G}^{D}$ is a $\sigma$-algebra over the sample space $\Omega_{\Pi,G}(D)$.
It remains to show that the function $P_{\Pi,G}^{D} : \mathcal{F}_{\Pi,G}^{D} \ra [0,1]$ is a probability measure. By definition,
\[
\Omega_{\Pi,G}(D)\ =\ \Omega_{\Pi,G}^{\fin}(D) \cup \Omega_{\Pi,G}^{\infty}(D) \qquad \text{and} \qquad P_{\Pi,G}^{D}\left(\Omega_{\Pi,G}^{\infty}(D)\right) = 1 - P_{\Pi,G}^{D}\left(\Omega_{\Pi,G}^{\fin}(D)\right).
\]
Therefore, we immediately get that
\[
P_{\Pi,G}^{D}\left(\Omega_{\Pi,G}(D)\right)\ =\ 1.
\]
Observe now that the members of $F_{\Pi,G}^D$ are pairwise disjoint, and that $F_{\Pi,G}^D$ itself is countable. The induced $\sigma$-algebra $\mathcal{F}_{\Pi,G}^D$ will therefore only contain events of the form $E = \bigcup_{S \in F_E} S$ with $F_E \subseteq F_{\Pi,G}^D$ that are countable; the complement of any such event $E$ is the event $\bigcup_{S \in \left(F_{\Pi,G}^D \setminus F_E \right)} S$.   
Hence, for any countable set $\mathcal{E}$ of pairwise disjoint events, we can conclude that
\[
P_{\Pi,G}^D \left(\bigcup_{E \in \mathcal{E}} E \right)\ =\  P_{\Pi,G}^D \left(\bigcup_{E\in \mathcal{E}} \bigcup_{S \in F_{E}} S \right)\ =\ \sum_{E\in \mathcal{E}} \sum_{S \in F_E} P_{\Pi,G}^D \left(S \right)\ =\ \sum_{E\in \mathcal{E}} P_{\Pi,G}^D \left(E \right),
\]
as needed. Note that, since all the sets in $\mathcal{E}$ are pairwise disjoint, also $F_E \cap F_{E'} = \emptyset$ for any distinct  $E, E'$ in $\mathcal{E}$.

\subsection*{Positive Programs}

As briefly discussed in the main body of the paper (in Section~\ref{sec:gen-datalog}), we can show that the semantics induced by the simple grounder coincide with the semantics of ~\citet{BCKOV17} for positive generative Datalog whenever we focus on positive programs that guarantee that all the possible outcomes are finite. We proceed to formally state and prove this claim. To this end, we first recall the semantics for positive generative Datalog from~\cite{BCKOV17}, and then establish the above claim.

\medskip

\noindent\underline{\textbf{Existing Semantics for Positive Generative Datalog}}

\smallskip

%
\noindent To distinguish our semantics from the semantics of~\citet{BCKOV17}, we will refer to their semantics as \bckov semantics (in reference to the authors of \cite{BCKOV17}), and their possible outcomes as \bckov possible outcomes. Below we give an abridged version of the key notions of the \bckov semantics for positive GDatalog$[\Delta]$ programs.
%
Similar to the definition of our semantics, also the \bckov semantics rely on a translation into a TGD program, which is slightly different than our translation presented in Section~\ref{sec:gen-datalog}. Let us recall their translation.
%
Consider a (positive) GDatalog$[\Delta]$ rule $\rho$
\[
R_1(\bar u_1),\ldots,R_n(\bar u_n)\ \ra\ R_0(\bar w),
\]
where $\bar w = (w_1,\ldots,w_{\ar{R_0}})$ with $w_{i_1} = \delta_1\params{\bar p_{1}}[\bar q_{1}],\ldots,w_{i_r} = \delta_r\params{\bar p_{r}}[\bar q_{r}]$, for $1 \leq i_1 < \cdots < i_r \leq \ar{R_0}$, be all the $\Delta$-terms in $\bar w$.\footnote{Strictly speaking, \cite{BCKOV17} assumes that there is only one $\Delta$-term per rule head. This is an assumption that can be made without loss of generality as explained in~\cite{BCKOV17}. For the sake of consistency, though, we consider multiple $\Delta$-terms in the head.} 
%
If $r=0$ (i.e., there are no $\Delta$-terms in $\bar w$), then $\widetilde{\rho_{\exists}}$ is defined as the singleton set $\{\dep_\rho\}$, where $\dep_\rho$ is the TGD without existentially quantified variables obtained after converting $\rho$ into a first-order sentence in the usual way. In fact, $\widetilde{\rho_{\exists}}$ coincides with $\rho_{\exists}$ defined in Section~\ref{sec:gen-datalog}.
Otherwise, $\widetilde{\rho_{\exists}}$ is defined as the set consisting of the following TGDs:
\begin{eqnarray*}
R_1(\bar u_1),\ldots,R_n(\bar u_n)\ \ra\ \exists y_j\ \textrm{Result}_{|\bar q_j|}^{\delta_j}(\bar p_{j},\bar q_{j},y_j)
\end{eqnarray*}
for each $j \in [r]$, and
\[
\text{\rm Result}_{|\bar q_1|}^{\delta_1}(\bar p_{1},\bar q_{1},y_1),\ldots, \text{\rm Result}_{|\bar q_r|}^{\delta_r}(\bar p_{r},\bar q_{r},y_r),
R_1(\bar u_1),\ldots,R_n(\bar u_n)\ \ra\ 
R_0(\bar w'),
\]
where $\textrm{Result}_{|\bar q_j|}^{\delta_j}$ are fresh $(|\bar p_j|+|\bar q_j|+1)$-ary predicates, not occurring in $\sch{\Pi}$, $y_1,\ldots,y_r$ are distinct variables not occurring in $\rho$, and $\bar w' = (w_1,\ldots,w_{i_1-1},y_1,w_{i_1+1},\ldots,w_{i_r-1},y_r,w_{i_r+1},\ldots,w_{\ar{R_0}})$ is obtained from $\bar w$ by replacing $\Delta$-terms with variables $y_1,\ldots,y_r$.
Finally, for a (positive) GDatalog$[\Delta]$ program $\Pi$, we define $\widetilde{\dep_\Pi}$ as the TGD program $\bigcup_{\rho \in \Pi} \widetilde{\rho_\exists}$.
Note that the key difference of the above translation compared to our translation given in Section~\ref{sec:gen-datalog} is that the intermediate active-to-result TGDs are missing.
We can now recall the notion of possible outcome as defined in~\cite{BCKOV17}:


\begin{definition}[\textbf{BCKOV Possible Outcome}]
	Consider a GDatalog$[\Delta]$ program $\Pi$, and a database $D$ of $\edb{\Pi}$. A {\em \bckov possible outcome} of $D$ w.r.t.~$\Pi$ is a minimal model $I$ of $\widetilde{\dep_\Pi}$ such that $\delta\params{\bar p}(o) > 0$ for every $\Delta$-atom $\textrm{Result}_{|\bar q|}^{\delta}(\bar p,\bar q,o) \in I$. We denote the set of all \bckov possible outcomes of $D$ w.r.t.~$\Pi$ as $\Omega^\bckov_{\Pi}(D)$. \hfill\markfull
 \end{definition}


Consider a GDatalog$[\Delta]$ program $\Pi$, and a database $D$ of $\edb{\Pi}$. With $F$ being a set of atoms, we write $\Omega_{\Pi}^{F\subseteq}(D)$ for the set of possible outcomes
$\{ J \in \Omega^\bckov_{\Pi}(D) \mid F \subseteq  J\}$, that is, the set that collects all those possible outcomes that contain all the facts of $F$.
Let $\Omega_{\Pi}^{\subseteq_{\fin}}(D)$ be the set of sets $\Omega_{\Pi}^{F\subseteq}(D)$ where $F$ is finite.
A finite sequence of atoms $\mathbf{f} = (f_1,f_2,\dots,f_n)$ is a derivation if, for all $i\in [n]$, $f_i = \head{h(\sigma)}$ for some TGD $\sigma \in \widetilde{\dep_\Pi}$ and homomorphism $h$ from $B(\sigma)$ to $J_{i-1}$, where $J_{k} = D \cup \{f_1,f_2,\dots,f_k\}$. 
For a derivation $f_1,\dots,f_n$, the set $\{f_1,\dots,f_n\}$ is called a derivation set. Intuitively, a finite derivation set is an intermediate instance of some chase tree. The probability of a set of atoms $F$ is defined in essentially the same way as our definition of $\mi{Pr}(\dep)$. Formally, for an atom $f$ of the form $\mathrm{Result}_n^\delta(\bar p, \bar q, o)$, let $\mi{Pr}^\bckov(f)=\delta\left<\bar p\right>(o)$. If $f$ is of any other form, then $\mi{Pr}^\bckov(f)=1$. For a set of atoms $F$, let $P^\bckov(F) = \prod_{f\in F} \mi{Pr}^\bckov(f)$.
The following holds:

\begin{theorem}[\cite{BCKOV17}]
	\label{thm:todsprob}
	Consider a GDatalog$[\Delta]$ program $\Pi$, and a database $D$ of $\edb{\Pi}$. There exists a unique probability space $(\Omega,\mathcal{F},P)$ that satisfies all of the following:
	\begin{enumerate}
		\item $\Omega\ =\ \Omega^\bckov_{\Pi}(D)$,
		\item $\mathcal{F}$ is the $\sigma$-algebra generated by $\Omega_{\Pi}^{\subseteq_{\fin}}(D)$, and
		\item $P(\Omega_{\Pi}^{F\subseteq})\ =\ P^\bckov(F)$ for every derivation set $F$.
	\end{enumerate}
\end{theorem}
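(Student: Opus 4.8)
The plan is to derive \emph{existence} from an explicit construction on a chase (``process'') tree and \emph{uniqueness} from a standard $\pi$--$\lambda$ (Dynkin) argument; this is in essence the main probability-space construction of~\cite{BCKOV17}.

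For existence I would fix a fair chase tree $T$ for $D$ with respect to the translation $\widetilde{\dep_\Pi}$: internal nodes are labelled by finite derivation sets (the intermediate instances $J_i$ from the definition of a derivation), each branching corresponds to firing a step that produces an atom $\textrm{Result}_{|\bar q|}^{\delta}(\bar p,\bar q,y)$, with one child per value $o$ satisfying $\delta\params{\bar p}(o) > 0$, the edge to that child carrying weight $\delta\params{\bar p}(o)$, while all other edges carry weight $1$. Taking unions of labels along maximal paths should yield, using confluence of the chase for positive programs, a bijection between the maximal paths of $T$ and $\Omega^\bckov_{\Pi}(D)$. For a node $v$, let $C_v$ be the set of maximal paths through $v$ and let $\widehat{P}(C_v)$ be the product of the edge weights from the root to $v$; since $\sum_{o} \delta\params{\bar p}(o) = 1$ at every branching, these values satisfy the Kolmogorov consistency conditions, so Carath\'eodory's extension theorem gives a unique probability measure on the $\sigma$-algebra generated by the cylinders $C_v$. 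Pushing this measure forward along the path-to-outcome bijection produces a triple $(\Omega,\mathcal{F},P)$, and I would then verify (1)--(3): (1) is the bijection; (2) holds because for a derivation set $F$ the set $\Omega_{\Pi}^{F\subseteq}(D)$ is (a countable union of) cylinders, and conversely every cylinder has this form; and (3) follows by unfolding the product of edge weights, noting that only $\textrm{Result}$-atoms contribute factors different from $1$, which is precisely the definition of $P^\bckov(F)$.

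For uniqueness, suppose $(\Omega,\mathcal{F},P)$ and $(\Omega,\mathcal{F},P')$ both satisfy (1)--(3). I would consider the family $\mathcal{D}$ of all sets $\Omega_{\Pi}^{F\subseteq}(D)$ with $F$ a finite derivation set, together with $\emptyset$, and argue that it is a $\pi$-system: $\Omega_{\Pi}^{F_1\subseteq}(D) \cap \Omega_{\Pi}^{F_2\subseteq}(D) = \Omega_{\Pi}^{(F_1 \cup F_2)\subseteq}(D)$, and when this is non-empty the atoms of $F_1 \cup F_2$ all lie in a common possible outcome and can be re-ordered into a derivation (the positive chase is order-insensitive), so $F_1 \cup F_2$ is again a finite derivation set. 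Moreover $\Omega = \Omega_{\Pi}^{D\subseteq}(D) \in \mathcal{D}$, and $\mathcal{D}$ generates $\mathcal{F}$: for an arbitrary finite set of atoms $F$, $\Omega_{\Pi}^{F\subseteq}(D)$ is the countable union, over finite derivation sets $F' \supseteq F$ each contained in some outcome, of the sets $\Omega_{\Pi}^{F'\subseteq}(D)$ (every atom of an outcome occurs at finite depth in a fair tree), so $\sigma(\mathcal{D})$ equals the $\sigma$-algebra generated by $\Omega_{\Pi}^{\subseteq_{\fin}}(D)$. Since $P$ and $P'$ agree on $\mathcal{D}$ by~(3) and on $\Omega$, Dynkin's $\pi$--$\lambda$ theorem yields $P = P'$ on $\sigma(\mathcal{D}) = \mathcal{F}$.

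I expect the main obstacle to be the path-to-outcome correspondence together with the independence of the whole construction from the chosen chase tree: one must show that every $\bckov$ possible outcome is the union of the labels along some maximal path of a fair process tree, that distinct maximal paths give distinct outcomes, and that the resulting measure does not depend on the scheduling of chase steps --- a confluence statement for the positive program $\widetilde{\dep_\Pi}$. Once that is in place, the remaining work (the measurability bookkeeping of step~(2) and the $\pi$-system verification) is routine.
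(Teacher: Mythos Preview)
The paper does not prove this theorem: it is stated with the citation \cite{BCKOV17} and used as a black box to define the probability space $\Pi^\bckov(D)$, with no argument given. So there is no ``paper's own proof'' to compare against; your sketch is essentially a reconstruction of the original argument from \cite{BCKOV17} (fair chase/process tree, cylinder measure extended via Carath\'eodory, plus a $\pi$--$\lambda$ argument for uniqueness), and that is indeed the right shape for this result.
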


For a (positive) GDatalog$[\Delta]$ program $\Pi$, and a database $D$ of $\edb{\Pi}$, we denote by
\[
\Pi^\bckov(D)\ =\ \left(\Omega^\bckov_{\Pi}(D), \mathcal{F}^\bckov_{\Pi,D}, P^\bckov_{\Pi,D}\right)
\]
the unique probability space provided by Theorem~\ref{thm:todsprob}, which forms the {\em output of $\Pi$ on $D$} according to the \bckov semantics of~\cite{BCKOV17}.

\OMIT{
\begin{theorem}[\cite{BCKOV17}]
  \label{thm:todsprob}
  Consider a GDatalog$[\Delta]$ program $\Pi$, and a database $D$ of $\edb{\Pi}$. There exists a unique probability space $(\Omega^\bckov_\mathcal{G}, \mathcal{F}^{\bckov}_\mathcal{G}, P^\bckov_\mathcal{G})$, that satisfies all of the following:
  \begin{enumerate}
  \item $\Omega^{\bckov} = \Omega^\bckov_{\mathcal{G}}(D)$,
  \item $\mathcal{F}^\bckov_\mathcal{G} = \sigma(\Omega_{\mathcal{G}}^{\mathit{fin}\subseteq}(D))$,
  \item and $P^\bckov_\mathcal{G}(\Omega_{\mathcal{G}}^{F\subseteq}) = \mathbf{P}^\bckov(F)$ for every derivation set $F$.
  \end{enumerate}
\end{theorem}
}


\medskip

\noindent\underline{\textbf{BCKOV Semantics vs. Our Semantics}}

\smallskip

\noindent We move on to show that for positive programs (i.e., programs with no negation) where all possible outcomes are finite, our semantics induced by the simple grounder coincide with the \bckov semantics, that is, the resulting probability spaces are isomorphic.

\begin{definition}[\textbf{Finite Grounding}]
  Consider a GDatalog$^\naf[\Delta]$ program $\Pi$, a database $D$ of $\edb{\Pi}$, and a grounder $G$ of $\Pi[D]$. We say that $G$ is \emph{finitely grounding} (for $\Pi[D]$) if every possible outcome of $D$ w.r.t.~$\Pi$ relative to $G$ is finite. \hfill\markfull
\end{definition}

We further need the notion of isomorphism between probability spaces.
An \emph{isomorphism} between two probability spaces $PS_1 = (\Omega_1, \mathcal{F}_1, P_1)$ and $PS_2 = (\Omega_2, \mathcal{F}_2, P_2)$ is a bijection $f \colon \Omega_1 \to \Omega_2$ such that (i) for every $F \in 2^{\Omega_1}$, $F \in \mathcal{F}_1$ iff $f(F) \in \mathcal{F}_2$, and (ii) for every $F \in \mathcal{F}_1$, $P_1(F) = P_2(f(F))$. 
%
We say that $PS_1$ and $PS_2$ are {\em isomorphic}, denoted $PS_1 \simeq PS_2$, if there exists an isomorphism between them.
We are now ready to state the desired result:

\begin{theorem}\label{the:isomorphic-spaces}
Consider a GDatalog$[\Delta]$ program $\Pi$, and a database $D$ of $\edb{\Pi}$ such that $\gsimppi$ is finitely grounding. Then
\[
\Pi_{\gsimppi}(D)\ \simeq\ \Pi^{\bckov}(D).
\]
\end{theorem}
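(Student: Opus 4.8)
The plan is to exhibit an explicit measure-preserving bijection between the two sample spaces and to observe that, under the finitely-grounding hypothesis, both event spaces are full powersets, so the bijection is automatically an isomorphism of probability spaces. First I would record three consequences of the hypotheses. Since $\gsimppi$ is finitely grounding, every $\dep \in \Omega_{\Pi,\gsimppi}(D)$ is a finite ground program, hence $\Omega_{\Pi,\gsimppi}^{\infty}(D) = \emptyset$ and $\Omega_{\Pi,\gsimppi}(D) = \Omega_{\Pi,\gsimppi}^{\fin}(D)$ is countable, being a set of finite subsets of the countable set $\ground{\dep_{\Pi[D]}}$. Since $\Pi$ is positive, each such $\dep$ is a finite ground positive TGD program, so it has a unique stable model, namely its minimal model $M_\dep$; and because $\gsimppi(\dep')$ is produced by the simple-grounding fixpoint (Definition~\ref{def:simple-grounder}), which only introduces a ground rule once its positive body already occurs among the head atoms generated so far, one gets $M_\dep = \heads{\dep}$.

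Next I would define $f \colon \Omega_{\Pi,\gsimppi}(D) \to \Omega^{\bckov}_{\Pi}(D)$ by letting $f(\dep)$ be $M_\dep = \heads{\dep}$ with all $\text{\rm Active}$-atoms removed, i.e.\ its restriction to $\sch{\widetilde{\dep_\Pi}}$. I would check that $f$ is well defined: $f(\dep)$ satisfies the $\Delta$-positivity condition because its $\text{\rm Result}$-atoms are exactly those of $\heads{\dep}$, for which positivity is part of the definition of possible outcome; $f(\dep)$ is a model of $\widetilde{\dep_\Pi}$ because, after deleting the $\text{\rm Active}$-atoms, the ground rules that $\gsimppi$ produces for the configuration underlying $\dep$ become ground instances of $\widetilde{\dep_\Pi}$ closed under $\heads{\dep}$ (here one uses that the configuration underlying a possible outcome is a subset-minimal terminal, hence defined on every derivable $\text{\rm Active}$-atom); and $f(\dep)$ is minimal as a model of $\widetilde{\dep_\Pi}$ because it is the outcome of a terminating chase of $\widetilde{\dep_\Pi}$ from $D$.

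For the inverse, given $I \in \Omega^{\bckov}_{\Pi}(D)$ I would set $\dep_I = \{\text{\rm Active}_{|\bar q|}^{\delta}(\bar p,\bar q) \ra \text{\rm Result}_{|\bar q|}^{\delta}(\bar p,\bar q,o) \mid \text{\rm Result}_{|\bar q|}^{\delta}(\bar p,\bar q,o) \in I\}$; since $I$ is a minimal model of $\widetilde{\dep_\Pi}$, the existential rules force a unique $\text{\rm Result}$-atom per parameter tuple, so $\dep_I$ is functionally consistent, and one verifies that $\dep_I$ is a subset-minimal element of $\terminals{\gsimppi}$ with $f(\dep_I \cup \gsimppi(\dep_I)) = I$. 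That the two maps are mutually inverse follows from the fact that $\dep$ is recoverable from $\heads{\dep}$: the $\text{\rm Active}$-atoms are heads of body-to-$\text{\rm Active}$ rules, hence determined by the remaining atoms; the configuration is then read off as the $\text{\rm Active}$/$\text{\rm Result}$ pairs; and $\gsimppi$ applied to that configuration is a deterministic function of it. Hence $f$ is a bijection; in particular every $\bckov$ possible outcome is finite, so $\Omega^{\bckov}_{\Pi}(D)$ is countable too.

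It remains to match the event spaces and the measures. Injectivity of $f$ together with $M_\dep = \heads{\dep}$ gives that distinct outcomes have distinct stable models, so the only members of the generator $F^{D}_{\Pi,\gsimppi}$ are the singletons of $\Omega_{\Pi,\gsimppi}(D)$ (together with the empty error event); as the sample space is countable, $\mathcal{F}^{D}_{\Pi,\gsimppi}$ is its full powerset. On the other side, for finite $I \in \Omega^{\bckov}_{\Pi}(D)$ we have $\Omega_{\Pi}^{I\subseteq}(D) = \{I\}$ (a minimal model has no proper minimal submodel), so every singleton lies in $\mathcal{F}^{\bckov}_{\Pi,D}$, which is therefore also a full powerset. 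Thus condition (i) of being an isomorphism holds trivially, and by countable additivity condition (ii) reduces to singletons: $P^{D}_{\Pi,\gsimppi}(\{\dep\}) = \mi{Pr}(\dep) = \prod_{\text{\rm Result}_{|\bar q|}^{\delta}(\bar p,\bar q,o) \in \heads{\dep}} \delta\params{\bar p}(o)$, while, since $f(\dep)$ is a finite minimal model and hence a derivation set, Theorem~\ref{thm:todsprob} yields $P^{\bckov}_{\Pi,D}(\{f(\dep)\}) = P^{\bckov}_{\Pi,D}(\Omega_{\Pi}^{f(\dep)\subseteq}(D)) = P^{\bckov}(f(\dep)) = \prod_{\text{\rm Result}_{|\bar q|}^{\delta}(\bar p,\bar q,o) \in f(\dep)} \delta\params{\bar p}(o)$, and the $\text{\rm Result}$-atoms of $f(\dep)$ are precisely those of $\heads{\dep}$, so the two products coincide. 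I expect the main obstacle to be the well-definedness and inversion of $f$: pinning down the exact correspondence between configurations of probabilistic choices handled by $\gsimppi$ on our translation $\dep_\Pi$ (with its $\text{\rm Active}$/$\text{\rm Result}$ layers) and minimal models of the $\bckov$ translation $\widetilde{\dep_\Pi}$, which is where one genuinely uses positivity (so that stable models are minimal models and $\heads{\dep}$ is the least fixpoint of the simple grounding) and finite grounding (so that no infinite possible outcome is diverted into the error event, which would break surjectivity of $f$ onto $\Omega^{\bckov}_{\Pi}(D)$); the chase-tree machinery of Section~\ref{sec:chase} can be used here to identify the two derivation processes step by step.
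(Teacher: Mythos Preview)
Your proposal is correct and follows essentially the same approach as the paper: the paper defines the same map $f$ (unique stable model modulo $\text{\rm Active}$), packages your observations about uniqueness and distinctness of stable models and the correspondence with $\bckov$ outcomes into three auxiliary lemmas (Lemmas~\ref{lem:positivemodels}, \ref{lem:PitoG}, \ref{lem:finiteiff}), and then argues exactly as you do that both $\sigma$-algebras reduce to full powersets of countable sets so that measure preservation need only be checked on singletons. Your explicit description of the inverse via $\dep_I$ is slightly more detailed than the paper's, which simply appeals to its Lemma~\ref{lem:PitoG} for surjectivity.
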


The rest of the section is devoted to proving the above theorem. To this end, we first need to establish some auxiliary technical lemmas.
Recall that our translation of a GDatalog$[\Delta]$ progrma into a TGD program is the same as the translation used in the definition of the \bckov semantics, except for the addition of the intermediate ``active'' predicates, and the respective active-to-result TGDs that link the ``active'' with the ``result'' predicates. It will therefore be useful to talk about stable models of a TGD program $\dep$ without the ``active'' predicates. In particular, for a stable model $I$, we refer to the subset of $I$ after eliminating all the atoms of the form $\textrm{Active}_{|\bar q|}^{\delta}(\bar p,\bar q)$ as $I$ {\em modulo active}.

\begin{lemma}
	\label{lem:positivemodels}
	Consider a GDatalog$[\Delta]$ program $\Pi$, and a database $D$ of $\edb{\Pi}$. The following hold:
	\begin{enumerate}
		\item For every possible outcome $\dep \in \Omega_{\Pi,\gsimppi}(D)$, $\Sigma$ has exactly one stable model. 
		\item For every distinct $\dep,\dep' \in \Omega_{\Pi,\gsimppi}(D)$, the stable model of $\Sigma$ modulo active and the stable model of $\dep'$ modulo active are different.
	\end{enumerate}
\end{lemma}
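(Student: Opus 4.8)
The plan is to exploit the fact that, when $\Pi$ is negation-free, so is its translation $\dep_{\Pi[D]}$, whence every member of $\Omega_{\Pi,\gsimppi}(D)$ is a ground, existential-free, \emph{definite} positive program (each rule has exactly one head atom and only positive body literals), for which the stable-model semantics collapses to the minimal-model semantics. Concretely, fix $\dep\in\Omega_{\Pi,\gsimppi}(D)$; by the definition of the $\gsimppi$-grounding of $\Pi[D]$ we have $\dep=\Theta\cup\gsimppi(\Theta)$ for some $\subseteq$-minimal $\Theta\in\terminals{\gsimppi}$. Every rule of $\Theta$ is a ground instance of an active-to-result TGD, hence of the form $\text{\rm Active}_{|\bar q|}^{\delta}(\bar p,\bar q)\ra\text{\rm Result}_{|\bar q|}^{\delta}(\bar p,\bar q,o)$ with all terms constants and no existential variable, while every rule of $\gsimppi(\Theta)\subseteq\ground{\dep_{\Pi[D]}^{\not\exists}}$ is a ground instance of a negation-free, existential-free TGD with a single head atom; so $\dep$ is a definite positive ground program. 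Since $\dep$ has no negative literals, $\repl_\dep$ simply replaces each predicate of $\dep$ by a fresh predicate variable, so $\SM[\dep]$ holds of an instance $I$ exactly when $I$ is a model of $\dep$ and no $J\subsetneq I$ is a model of $\dep$, i.e.\ exactly when $I$ is a $\subseteq$-minimal model of $\dep$. A definite positive program has a unique $\subseteq$-minimal model $M$, namely the set of ground atoms derivable from its rules; hence $\sms{\dep}=\{M\}$, which is item~(1).

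For item~(2) we record three facts about $\dep=\Theta\cup\gsimppi(\Theta)$ and its unique stable model $M$. \emph{(i)} Since $\gsimppi(\Theta)\subseteq\ground{\dep_{\Pi[D]}^{\not\exists}}$ contains no active-to-result TGD, the set of active-to-result TGDs occurring in $\dep$ equals $\Theta$; thus $\Theta\mapsto\dep$ is injective. \emph{(ii)} Inspecting the translation $\Pi\mapsto\dep_\Pi$: an atom whose predicate is some $\text{\rm Result}_{n}^{\delta}$ appears in a rule head of $\dep_\Pi$ \emph{only} inside an active-to-result TGD (in the ``result-consuming'' rules such atoms occur in the body, and the head predicate there lies in $\sch{\Pi}$, hence is not a $\text{\rm Result}$-predicate). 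Therefore $\text{\rm Result}_{|\bar q|}^{\delta}(\bar p,\bar q,o)\in M$ iff $(\text{\rm Active}_{|\bar q|}^{\delta}(\bar p,\bar q)\ra\text{\rm Result}_{|\bar q|}^{\delta}(\bar p,\bar q,o))\in\Theta$ and $\text{\rm Active}_{|\bar q|}^{\delta}(\bar p,\bar q)\in M$. \emph{(iii)} Because $\Theta$ is $\subseteq$-minimal in $\terminals{\gsimppi}$, every active-to-result TGD of $\Theta$ has its body atom $\text{\rm Active}_{|\bar q|}^{\delta}(\bar p,\bar q)$ in $\heads{\gsimppi(\Theta)}\subseteq M$: $\text{\rm Active}$-atoms are produced only by non-active-to-result rules, so a ``dead'' active-to-result TGD of $\Theta$ (one whose body atom is never derived) never fires during the iteration defining $\mi{Simple}$; deleting it from $\Theta$ therefore leaves $\gsimppi(\Theta)$ unchanged and only removes a value of $\mi{AtR}_\Theta$ on an $\text{\rm Active}$-atom outside $\heads{\gsimppi(\Theta)}$, so the resulting proper subset of $\Theta$ is still a member of $\terminals{\gsimppi}$, contradicting minimality. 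Combining (ii) and (iii), the $\text{\rm Result}$-atoms of $M$ are exactly $\{\text{\rm Result}_{|\bar q|}^{\delta}(\bar p,\bar q,o)\mid(\text{\rm Active}_{|\bar q|}^{\delta}(\bar p,\bar q)\ra\text{\rm Result}_{|\bar q|}^{\delta}(\bar p,\bar q,o))\in\Theta\}$, a set that determines $\Theta$ (recover each active-to-result TGD from its head by dropping the last argument and replacing $\text{\rm Result}$ by $\text{\rm Active}$).

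Finally, take distinct $\dep,\dep'\in\Omega_{\Pi,\gsimppi}(D)$, with corresponding minimal terminals $\Theta,\Theta'$ and unique stable models $M,M'$. By (i), $\dep\neq\dep'$ gives $\Theta\neq\Theta'$, so by the equality just displayed $M$ and $M'$ contain different sets of $\text{\rm Result}$-atoms. Since no $\text{\rm Result}$-atom has the form $\text{\rm Active}_{|\bar q|}^{\delta}(\bar p,\bar q)$, the $\text{\rm Result}$-atoms of $M$ (resp.\ $M'$) are unaffected by the passage to ``modulo active''; hence $M$ modulo active and $M'$ modulo active differ on some $\text{\rm Result}$-atom, proving~(2). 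The only non-routine point is fact~(iii) — that a $\subseteq$-minimal terminal carries no ``dead'' active-to-result TGD; everything else is bookkeeping on the shape of the translation $\dep_\Pi$.
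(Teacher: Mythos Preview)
Your proof is correct and follows essentially the same line as the paper's. Both arguments reduce item~(1) to the standard fact that a positive existential-free ground program has a unique stable (= minimal) model, and for item~(2) both argue that the $\text{\rm Result}$-atoms in the unique model of a possible outcome are precisely $\heads{\Theta}$ for its generating minimal terminal $\Theta$, so distinct terminals yield distinct $\text{\rm Result}$-atom sets and hence distinct models modulo active. Your fact~(iii)---that minimality of $\Theta$ rules out ``dead'' active-to-result TGDs---is exactly what the paper asserts in the sentence ``Since $T$ and $T'$ are minimal w.r.t.\ $\terminals{\gsimppi}$, all body atoms in $T$ and $T'$ occur in the heads of $\gsimppi(T)$ and $\gsimppi(T')$, respectively''; you simply justify it more carefully.
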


\begin{proof}
	{\em Item (1).} Since $\Pi$ is positive, so is the TGD program $\dep_\Pi$. Therefore, every possible outcome $\dep \in \Omega_{\Pi,\gsimppi}(D)$ is positive and without existential quantification. It is well-known that such programs have exactly one stable model; thus, $\sms{\Sigma}$ is a singleton.
	
	{\em Item (2).} Let $\Sigma, \Sigma'$ be two distinct possible outcomes of $\Omega_{\Pi,\gsimppi}(D)$. This means that $\Sigma$ and $\Sigma'$ belong to $\gsimppi$-$\ground{\Pi[D]}$, and therefore, $\Sigma$ and $\Sigma'$ are generated by distinct terminals $T$, $T'$, i.e., different subsets of $[2^{\ground{\dep_{\Pi[D]}^{\exists}}}]_{=}$ that are incomparable by $\subseteq$.
	Since $T$ and $T'$ are minimal w.r.t. $\terminals{\gsimppi}$, all body atoms in $T$ and $T'$ occur in the heads of $\gsimppi(T)$ and $\gsimppi(T')$, respectively. Hence, by definition of $\gsimppi$, the single stable model of $\Sigma$ contains exactly $\heads{T}$  as the only atoms with a predicate of the form $\textrm{Result}^\delta_n$ (the ``result'' atoms), while the stable model of $\Sigma'$ contains exactly $\heads{T'}$ as its only ``result'' atoms. Since $T$ and $T'$ are incomparable by $\subseteq$, so are $\heads{T}$ and $\heads{T'}$. Hence, the stable models of $\dep$ and $\dep'$ modulo active are different, as needed.
\end{proof}

\begin{lemma}
	\label{lem:PitoG}
	Consider a GDatalog$[\Delta]$ program $\Pi$, and a database $D$ of $\edb{\Pi}$. For every $\dep \in \gsimppi\text{-}\ground{\Pi[D]}$, the following statements are equivalent:
	\begin{enumerate}
	\item $\dep \in \Omega_{\Pi,\gsimppi}(D)$.
	\item The stable model of $\dep$ modulo active belongs to $\Omega_{\Pi}^{\bckov}(D)$.
	\end{enumerate}
\end{lemma}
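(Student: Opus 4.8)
\medskip
\noindent\underline{\textbf{Proof Plan for Lemma~\ref{lem:PitoG}}}

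\smallskip

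\noindent The plan is to reduce the claimed equivalence to a single unconditional fact about the stable model of $\dep$, and then observe that the two probability side conditions coincide verbatim. So fix $\dep \in \gsimppi\text{-}\ground{\Pi[D]}$. By Definition~\ref{def:grounder-gdatalog} we have $\dep = T \cup \gsimppi(T)$ for some minimal $T \in \terminals{\gsimppi}$; in particular $T \in [2^{\ground{\dep_{\Pi[D]}^{\exists}}}]_{=}$ is functionally consistent and $\mi{AtR}_T \comp \gsimppi(T)$, and $\dep$ is a ground, positive, existential-free TGD program. By Lemma~\ref{lem:positivemodels}(1) it has a unique stable model $M$, which is its least model; write $N$ for $M$ modulo active. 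The proof of Lemma~\ref{lem:positivemodels} records that the only $\textrm{Result}$-atoms of $M$ are exactly those of $\heads{T}$; since $\gsimppi(T) \subseteq \ground{\dep_{\Pi[D]}^{\not\exists}}$ has no $\textrm{Result}$-atom in any head, the $\textrm{Result}$-atoms of $\heads{\dep}$ are also exactly $\heads{T}$, and deleting $\textrm{Active}$-atoms does not touch them, so the $\textrm{Result}$-atoms of $N$ are $\heads{T}$ as well. Hence the condition ``$\delta\params{\bar p}(o) > 0$ for every $\textrm{Result}_{|\bar q|}^{\delta}(\bar p,\bar q,o) \in \heads{\dep}$'' from Definition~\ref{def:possible-outcome} and the corresponding condition on the $\textrm{Result}$-atoms of $N$ in the definition of a \bckov possible outcome are one and the same. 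It therefore suffices to prove the \emph{key claim}: $N$ is a minimal model of $\widetilde{\dep_\Pi}$, unconditionally (for every such $\dep$). Indeed, given the key claim, $\dep \in \Omega_{\Pi,\gsimppi}(D)$ iff the probability condition holds on the $\textrm{Result}$-atoms of $\heads{\dep}$ iff it holds on those of $N$ iff ($N$ is a minimal model of $\widetilde{\dep_\Pi}$ --- always --- and the probability condition holds), i.e.\ iff $N \in \Omega^\bckov_{\Pi}(D)$.

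\smallskip

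\noindent \emph{$N$ is a model of $\widetilde{\dep_\Pi}$.} I would check each of the three shapes of TGD in $\widetilde{\dep_\Pi}$. The existential-free ones (the $\dep_\rho$ coming from rules without $\Delta$-terms, and the TGDs $\textrm{Result}_{|\bar q_1|}^{\delta_1}(\ldots,y_1),\ldots,R_1(\bar u_1),\ldots \ra R_0(\bar w')$) are precisely the non-$\textrm{Active}$ TGDs of $\dep_{\Pi}^{\not\exists}$; since $\gsimppi(T) = \mi{Simple}_{\dep'}^{\infty}(\emptyset)\setminus T$ is a $\mi{Simple}$-fixpoint, every ground instance of such a TGD whose positive body lies in $\heads{\gsimppi(T)} \supseteq N$ is already present in $\dep$, so $M \models \dep$ forces the head into $M$, and --- as these TGDs never mention $\textrm{Active}$ --- into $N$. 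For the existential TGD $R_1(\bar u_1),\ldots,R_n(\bar u_n) \ra \exists y_j\, \textrm{Result}_{|\bar q_j|}^{\delta_j}(\bar p_j,\bar q_j,y_j)$: if its body is satisfied in $N$, then the matching ground instance of $R_1(\bar u_1),\ldots,R_n(\bar u_n) \ra \textrm{Active}_{|\bar q_j|}^{\delta_j}(\bar p_j,\bar q_j)$ lies in $\gsimppi(T)$, hence $\textrm{Active}_{|\bar q_j|}^{\delta_j}(\bar p_j,\bar q_j) \in \heads{\gsimppi(T)} \cap M$; compatibility $\mi{AtR}_T \comp \gsimppi(T)$ then supplies a ground AtR TGD $\textrm{Active}_{|\bar q_j|}^{\delta_j}(\bar p_j,\bar q_j) \ra \textrm{Result}_{|\bar q_j|}^{\delta_j}(\bar p_j,\bar q_j,o) \in T \subseteq \dep$, whence $\textrm{Result}_{|\bar q_j|}^{\delta_j}(\bar p_j,\bar q_j,o) \in M$, and this atom survives in $N$.

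\smallskip

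\noindent \emph{$N$ is minimal.} Suppose for contradiction $N' \subsetneq N$ is a model of $\widetilde{\dep_\Pi}$. Let $M' = N' \cup A'$, where $A'$ is the set of $\textrm{Active}$-atoms $\alpha$ such that some ground TGD $R_1(\bar u_1),\ldots,R_n(\bar u_n) \ra \alpha$ of $\gsimppi(T)$ has its (Active-free) body inside $N'$. I claim $M' \models \dep$. The ``body $\ra \textrm{Active}$'' TGDs of $\dep$ hold by construction of $A'$. The existential-free non-$\textrm{Active}$ TGDs of $\dep$ are ground instances of TGDs of $\widetilde{\dep_\Pi}$ whose bodies and heads never mention $\textrm{Active}$, so $N' \models \widetilde{\dep_\Pi}$ already gives them in $N' \subseteq M'$. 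For a ground AtR TGD $\alpha \ra \textrm{Result}_{|\bar q|}^{\delta}(\bar p,\bar q,o) \in T$ with $\alpha \in M'$: since $N \subseteq N$ contains no $\textrm{Active}$-atom, $\alpha \in A'$, so $N'$ contains the body of the matching existential TGD of $\widetilde{\dep_\Pi}$, hence $\textrm{Result}_{|\bar q|}^{\delta}(\bar p,\bar q,o') \in N'$ for some $o'$; but $N' \subseteq N$ and $T$ is functionally consistent, so $N$ --- hence $N'$ --- contains at most the single $\textrm{Result}$-atom with prefix $(\bar p,\bar q)$ that $T$ prescribes, forcing $o' = o$ and $\textrm{Result}_{|\bar q|}^{\delta}(\bar p,\bar q,o) \in N' \subseteq M'$. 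Thus $M' \models \dep$. Moreover $A'$ is contained in the set of $\textrm{Active}$-atoms of $M$ (each such $\alpha$ is the head of a TGD of $\gsimppi(T)\subseteq\dep$ whose body is in $N' \subseteq N \subseteq M$, so $\alpha\in M$), hence $M' \subseteq M$; and $M' \neq M$ since $N \setminus N' \neq \emptyset$ contains no $\textrm{Active}$-atom, so $M' \subsetneq M$, contradicting that $M$ is the least model of $\dep$. This establishes the key claim, and with it the lemma. The delicate point --- and the place where the whole argument could fail --- is exactly this last step: ensuring that the ``lifted'' candidate $M'$ satisfies the AtR TGDs of $T$ with the \emph{correct} sampled value $o$, which is precisely where the functional consistency of $T$ together with the inclusion $N' \subseteq N$ is indispensable; everything else is bookkeeping over the three TGD shapes and the fixpoint property of $\mi{Simple}$ from Definition~\ref{def:simple-grounder} and Proposition~\ref{pro:gsimple-grounder}.
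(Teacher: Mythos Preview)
Your approach is sound and in fact more careful than the paper's own argument. The paper proves the two directions separately and quite tersely: for $(1)\Rightarrow(2)$ it simply asserts that the stable model of $\dep$ modulo active is a \bckov possible outcome (``it is easy to see''), and for $(2)\Rightarrow(1)$ it observes that every $\textrm{Result}$-atom in $\heads{\dep}$ lies in $N$, so the probability condition transfers back. You instead factor the work into a single unconditional \emph{key claim}---that $N$ is always a minimal model of $\widetilde{\dep_{\Pi}}$---and then note that the two probability side-conditions are literally the same constraint on the same set $\heads{T}$ of $\textrm{Result}$-atoms. This decomposition is cleaner and makes explicit exactly what the paper leaves implicit; the minimality argument via a lifted model $M' = N' \cup A'$ is the right idea and correctly isolates functional consistency of $T$ as the crux.

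Two small slips to repair, neither fatal. First, it is $\dep = T \cup \gsimppi(T)$ that equals the fixpoint $\mi{Simple}_{\dep'}^{\infty}(\emptyset)$, not $\gsimppi(T)$ alone; the minimality of $T$ in $\terminals{\gsimppi}$ is what guarantees $T \subseteq \mi{Simple}_{\dep'}^{\infty}(\emptyset)$, so that $M = \heads{\dep}$. Second, the containment $\heads{\gsimppi(T)} \supseteq N$ is false: $N$ contains the $\textrm{Result}$-atoms $\heads{T}$, which are not heads of $\gsimppi(T)$. What you actually need---and already have---is $N \subseteq M = \heads{\dep}$; then the fixpoint property of $\dep$ gives that any ground instance of a TGD of $\dep_{\Pi[D]}^{\not\exists}$ with body inside $\heads{\dep}$ already lies in $\dep$, and your model argument goes through verbatim. (Also, ``since $N \subseteq N$ contains no $\textrm{Active}$-atom'' is presumably a typo for $N' \subseteq N$.)
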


\begin{proof}
	$(1) \Rightarrow (2)$. By definition, $\dep = \dep' \cup \gsimppi(\dep')$ for some minimal $\dep' \in \terminals{\gsimppi}$. We define the instance
	\[
	I\ =\ D \cup \left\{\textrm{Result}_{|\bar q|}^{\delta}(\bar p,\bar q, o) \mid \textrm{Active}_{|\bar q|}^{\delta}(\bar p,\bar q) \ra \textrm{Result}_{|\bar q|}^{\delta}(\bar p,\bar q, o) \in \dep'\right\},
	\]
	i.e., $I = D \cup \heads{\dep'}$. It is easy to see that $I \in \Omega_{\Pi}^{\bckov}(D)$. Moreover, $I$ is the stable model of $\dep$ modulo active, and the claim follows.
	
	\OMIT{
	Let $J$ be a \bckov possible outcome of $\mathcal{G}$. Recall that $J$ is a minimal model, we will define set  $T_J$ as the ground $\mi{AtR}$ TGDs corresponding to the result atoms in $J$.  That is, let $T_J$ be the smallest set such that for every $\textrm{Result}_{|\bar q|}^{\delta}(\bar p,\bar q, y) \in J$, the rule
	\[
	\textrm{Active}_{|\bar q|}^{\delta}(\bar p,\bar q) \ra \textrm{Result}_{|\bar q|}^{\delta}(\bar p,\bar q, o)
	\]
	is in $T_J$. It is then straightforward to verify that $\dep = G(T_J)\cup T_J \in \Omega_{\Pi,\gsimppi}(D)$ and that the stable model of $\dep$ modulo \textrm{Active} is exactly $J$

	($\Rightarrow$)
	let $J$ be the stable model of a possible outcome $\Sigma \in \Omega_{\Pi, \gsimppi}(D)$.
	For positive programs, the definition of $\mathit{GSimple}$ corresponds directly to the standard fixed-point definition of minimal models (cf.,~\cite{}) for some fixed instantiations of $\Delta$-terms. 
	Recall that the only difference in the  TGD$^\naf$ rewriting $\sigma_\Pi$ of our semantics to the TGDs $\widehat{\mathcal{G}}$ in \bckov semantics, are the intermediate $\mi{AtR}$ TGDs which are of no consequence w.r.t. the resulting model (except for the additional \textrm{Active} facts). 
	It is then straightforward to verify that $J$ modulo \textrm{Active} is a minimal model of $\widehat{\mathcal{G}}$.}

	$(2) \Rightarrow (1)$. We need to show that $\delta\params{\bar p}(o) > 0$ for every $\Delta$-atom $\textrm{Result}_{|\bar q|}^{\delta}(\bar p,\bar q,o) \in \heads{\dep}$.
	Observe that every such $\Delta$-atom occurs in the stable model of $\dep$ modulo active. Thus, by hypothesis, every $\Delta$-atom $\textrm{Result}_{|\bar q|}^{\delta}(\bar p,\bar q,o) \in \heads{\dep}$ belongs to a \bckov possible outcome. This allows us to conclude that $\delta\params{\bar p}(o) > 0$ for every $\Delta$-atom $\textrm{Result}_{|\bar q|}^{\delta}(\bar p,\bar q,o) \in \heads{\dep}$, and the claim follows.
\end{proof}

\begin{lemma}
	\label{lem:finiteiff}
	Consider a GDatalog$[\Delta]$ program $\Pi$, and a database $D$ of $\edb{\Pi}$. The following are equivalent:
	\begin{enumerate} 
	\item $\gsimppi$ is finitely grounding.
	\item Every instance $I \in \Omega_{\Pi}^{\bckov}(D)$ is finite.
	\end{enumerate}
\end{lemma}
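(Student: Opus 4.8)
The plan is to reduce both directions to one structural fact: a possible outcome $\dep\in\Omega_{\Pi,\gsimppi}(D)$ is finite iff its (unique) stable model is finite, and this matches finiteness of the corresponding \bckov possible outcome.

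First I would pin down the shape of a member of $\gsimppi\text{-}\ground{\Pi[D]}$. By Definition~\ref{def:grounder-gdatalog} such a $\dep$ equals $\dep_0\cup\gsimppi(\dep_0)$ for a $\subseteq$-minimal $\dep_0\in\terminals{\gsimppi}$, and by Definition~\ref{def:simple-grounder}, $\gsimppi(\dep_0)=\mi{Simple}_{\dep'_0}^{\infty}(\emptyset)\setminus\dep_0$ with $\dep'_0=\dep_{\Pi[D]}^{\not\exists}\cup\dep_0$. The key observation is that $\subseteq$-minimality of $\dep_0$ forbids ``dead'' active-to-result TGDs: for every $\tau\in\dep_0$ of the form $\textrm{Active}_{|\bar q|}^{\delta}(\bar p,\bar q)\ra\textrm{Result}_{|\bar q|}^{\delta}(\bar p,\bar q,o)$, the body atom $\textrm{Active}_{|\bar q|}^{\delta}(\bar p,\bar q)$ is derived by $\dep'_0$. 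Otherwise $\tau$ would never be triggered by $\mi{Simple}_{\dep'_0}$, so removing it leaves $\mi{Simple}_{\dep'_0}^{\infty}(\emptyset)$ and $\gsimppi(\dep_0)$ unchanged; since that body atom is then absent from $\heads{\gsimppi(\dep_0)}$ and, by functional consistency, $\tau$ is the only TGD of $\dep_0$ with that body, we would get $\mi{AtR}_{\dep_0\setminus\{\tau\}}\comp\gsimppi(\dep_0\setminus\{\tau\})$, so $\dep_0\setminus\{\tau\}\in\terminals{\gsimppi}$, contradicting minimality. Hence $\dep_0\subseteq\mi{Simple}_{\dep'_0}^{\infty}(\emptyset)$ and so $\dep=\mi{Simple}_{\dep'_0}^{\infty}(\emptyset)$. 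Being a ground positive existential-free program, $\dep$ has (by Lemma~\ref{lem:positivemodels}(1)) a unique stable model $M$, which is its least Herbrand model; and since $\dep=\mi{Simple}_{\dep'_0}^{\infty}(\emptyset)$ collects exactly the ground rules whose positive bodies are provable, $\heads{\dep}=M$.

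From $\heads{\dep}=M$ the finiteness equivalence follows. If $\dep$ is finite then $M$ is finite. If $M$ is finite, then $\dep_0$ is finite because, by the observation above together with functional consistency, it has at most one TGD per active atom of $M$, so $|\dep_0|\le|M|$; hence $\dep'_0$ is a finite set of TGDs (using that $\Pi$ and $D$ are finite). Every rule $h(\sigma)$ produced by $\mi{Simple}_{\dep'_0}$ satisfies $h(B^+(\sigma))\subseteq\heads{\mi{Simple}_{\dep'_0}^{\infty}(\emptyset)}=M$, so for each of the finitely many $\sigma\in\dep'_0$ only finitely many homomorphisms into the finite $M$ are eligible, whence $\dep=\mi{Simple}_{\dep'_0}^{\infty}(\emptyset)$ is finite. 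Thus $\dep$ is finite iff $M$ is finite. Finally, by Lemma~\ref{lem:PitoG} the instance $J:=M$ modulo active is a \bckov possible outcome of $D$ w.r.t.\ $\Pi$, and $M$ is recovered from $J$ by re-adding exactly the active atoms of $M$; each such $\textrm{Active}_{|\bar q|}^{\delta}(\bar p,\bar q)$ triggers its unique active-to-result TGD in $\dep_0$, contributing $\textrm{Result}_{|\bar q|}^{\delta}(\bar p,\bar q,o)\in J$, and distinct active atoms contribute distinct result atoms; this injects the active atoms of $M$ into $J$, so $|M|\le 2|J|$, and therefore $M$ is finite iff $J$ is finite.

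Combining, $\dep$ is finite iff the associated $J\in\Omega^{\bckov}_{\Pi}(D)$ is finite. Together with Lemma~\ref{lem:positivemodels}(2), and with the routine inverse construction (from a \bckov outcome $J$, read off the active-to-result TGDs it induces and check, using minimality of $J$, that this is a $\subseteq$-minimal terminal whose associated possible outcome has $J$ as stable model modulo active --- dual to the argument in the proof of Lemma~\ref{lem:PitoG}), the map $\dep\mapsto J$ is a bijection from $\Omega_{\Pi,\gsimppi}(D)$ onto $\Omega^{\bckov}_{\Pi}(D)$. Hence $\gsimppi$ is finitely grounding iff every $J\in\Omega^{\bckov}_{\Pi}(D)$ is finite, which is the claim: for $(1)\Rightarrow(2)$ one uses surjectivity of $\dep\mapsto J$, for $(2)\Rightarrow(1)$ that it is well defined. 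The hard part will be the structural fact that a $\subseteq$-minimal terminal carries no dead active-to-result TGD, so that $\dep$ actually coincides with the simple-grounder fixpoint $\mi{Simple}_{\dep'_0}^{\infty}(\emptyset)$; everything else is bookkeeping with least models of positive programs and the active/result machinery already used for Lemma~\ref{lem:PitoG}.
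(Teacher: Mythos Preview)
Your proof is correct and follows essentially the same route as the paper's: both directions go through the correspondence between possible outcomes $\dep$ and \bckov outcomes $J$ (the stable model of $\dep$ modulo active), arguing that $\dep$ is finite iff its stable model $M$ is finite iff $J$ is finite, with Lemma~\ref{lem:PitoG} supplying the correspondence in each direction.

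The one noteworthy difference is that you make explicit a point the paper's proof leaves implicit: the ``no dead AtR TGDs'' argument showing that a $\subseteq$-minimal terminal $\dep_0$ cannot contain an AtR TGD whose \textrm{Active} body is not derived, whence $\dep_0\subseteq\mi{Simple}_{\dep'_0}^{\infty}(\emptyset)$ and $\heads{\dep}=M$. The paper's $(2)\Rightarrow(1)$ direction simply asserts that ``the simple grounder derives only ground rules whose head needs to be satisfied in the stable model,'' which tacitly covers $\gsimppi(\dep_0)$ but says nothing about $\dep_0$ itself; your minimality argument is precisely what closes that gap. You also set up the bijection $\dep\mapsto J$ more explicitly than strictly needed for this lemma (the paper just invokes Lemma~\ref{lem:PitoG} once per direction), but that extra structure is exactly what the subsequent isomorphism theorem needs anyway, so it is not wasted effort.
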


\begin{proof}
	$(1) \Rightarrow (2)$. Fix an arbitrary \bckov possible outcome $I \in \Omega^{\bckov}_{\Pi}$. By Lemma~\ref{lem:PitoG}, there exists a possible outcome $\Sigma_I$ of $\Pi[D]$ relative to $\gsimppi$ such that the single stable model of $\Sigma_I$ modulo active is precisely the instance $I$. Since, by hypothesis, $\gsimppi$ is finitely grounding, we get that $\Sigma$ is finite. This in turn implies that $I$ is also finite, and the claim follows.
	
	$(2) \Rightarrow (1)$. Fix an arbitrary possible outcome $\dep \in \Omega_{\Pi,\gsimppi}(D)$. We need to show that $\dep$ is finite. By Lemma~\ref{lem:PitoG}, the stable model of $\dep$ modulo active is a \bckov possible outcome, and thus, by hypothesis, is finite. Since there exists at most one ``active'' atom that corresponds to a ``result'' atom in the stable model of $\dep$ modulo active, we can conclude that the stable model of $\dep$ is finite. 
	By definition (for positive programs), the simple grounder derives only ground rules whose head needs to be satisfied in the stable model of $\dep$. Thus, since the stable model of $\dep$ is finite, also $\dep$ itself is finite, and the claim follows.
\end{proof}

Having the above auxiliary lemmas in place, we can now complete the proof of Theorem~\ref{the:isomorphic-spaces} stated above.

\OMIT{
To conclude this section, we show that the probability space resulting from our semantics, and the probability space of \bckov semantics are isomorphic. Formally,
a \emph{isomorphism} between two probability spaces $(\Omega_1, \mathcal{F}_1, P_1)$, $(\Omega_2, \mathcal{F}_2, P_2)$ is a bijection $f \colon \Omega_1 \to \Omega_2$, such that
$F_1 \in \mathcal{F}_1 \iff f(F_1) \in \mathcal{F}_2$, and $P_1(F_1) = P_2(f(F_1))$ for all $F_1 \in \mathcal{F}_1$. 

\begin{theorem}
	Let $\Pi$ be a positive GDatalog$^\naf[\Delta]$ and $D$ be a database of $\edb{\Pi}$ such that $\gsimppi$ is finitely grounding. Let $\mathcal{G}$ be the GDatalog$[\Delta]$ program corresponding to $\Pi[D]$.
	The probability space $(\Omega_{\Pi,G}(D), \mathcal{F}_{\Pi,G}^D, P_{\Pi,G}^D)$ with $G= \gsimppi$ is isomorphic to the probability space $(\Omega^\bckov_{\mathcal{G}}, F^\bckov_{\mathcal{G}}, P^\bckov_{\mathcal{G}}$ from Theorem~\ref{thm:todsprob}.
\end{theorem}
}

\medskip

\textsc{Proof (of Theorem~\ref{the:isomorphic-spaces}).} For brevity, throughout the proof, we simply write $G$ instead of $\gsimppi$. We need to show that there exists an isomorphism between the probability spaces $\Pi_G(D)$ and $\Pi^{\bckov}(D)$.
First, observe that since $G$ is finitely grounding, $\Omega_{\Pi,G}^\infty(D) = \emptyset$, and thus, $\Omega_{\Pi,G}(D) = \Omega_{\Pi,G}^{\fin}(D)$.
We claim that the function $f : \Omega_{\Pi,G}^{\fin}(D) \ra \Omega_{\Pi}^{\bckov}(D)$ such that, for every $\dep \in \Omega^{\fin}_{\Pi,G}(D)$, $f(\dep)$ is the unique stable model of $\dep$ modulo active, is the required isomorphism. We need to show the following statements:
\begin{enumerate}
	\item $f$ is a bijection,
	\item for every $F \in 2^{\Omega_{\Pi,G}^{\fin}(D)}$, $F \in \mathcal{F}_{\Pi,G}^{D}$ iff $f(F) \in \mathcal{F}^\bckov_{\Pi,D}$, and
	\item for every $F \in \mathcal{F}_{\Pi,G}^{D}$, $P_{\Pi,G}^{D}(F) = P_{\Pi,D}^{\bckov}(f(F))$.
\end{enumerate}

\medskip

{\em Item (1).} For showing that $f$ is a bijection, we need to show that (i) for every $\dep,\dep' \in \Omega^{\fin}_{\Pi,G}(D)$, $\dep \neq \dep'$ implies $f(\dep) \neq f(\dep')$, and (ii) for every $\dep \in \Omega_{\Pi}^{\bckov}(D)$, there exists $\dep' \in \Omega^{\fin}_{\Pi,G}(D)$ such that $\dep = f(\dep')$. The former is an immediate consequence of Lemma~\ref{lem:positivemodels}, while the latter follows from Lemma~\ref{lem:PitoG}.

\medskip

{\em Item (2).} It suffices to show that $f(\mathcal{F}_{\Pi,G}^{D}) = \mathcal{F}_{\Pi,D}^{\bckov}$.
Recall that $\mathcal{F}^D_{\Pi,G}$ is the $\sigma$-algebra generated by $F^D_{\Pi,G}$ consisting of $\Omega^\infty_{\Pi,G}(D)$, and all the maximal sets $E \subseteq \Omega^{\fin}_{\Pi,G}(D)$ such that $\sms{\dep}=\sms{\dep'}$ for all $\dep,\dep'\in E$. Note that, by Lemma~\ref{lem:positivemodels},  $F^D_{\Pi,G}$ consists only of singleton sets.
We proceed to show that $f(\mathcal{F}_{\Pi,G}^{D}) = \mathcal{F}_{\Pi,D}^{\bckov}$.
$(\supseteq)$ By Lemma~\ref{lem:finiteiff}, every instance $I \in \Omega^\bckov_\Pi(D)$ is finite, and, by Lemma~\ref{lem:PitoG}, there exists $\Sigma \in \Omega^{\fin}_{\Pi,G}(D)$ such that $f(\Sigma)=I$.
Since all possible outcomes of $D$ w.r.t.~$\Pi$ are finite, and we assume countable sample spaces, there are only countably many possible outcomes of $D$ w.r.t.~$\Pi$. Therefore, all sets in $\Omega_{\Pi}^{\subseteq_\fin}$ are also countable. Hence, any set of $\mathcal{F}_{\Pi,D}^{\bckov}$ is a countable union of sets of $f(\mathcal{F}_{\Pi,G}^{D})$, and thus, $f(\mathcal{F}_{\Pi,G}^{D}) \supseteq \mathcal{F}_{\Pi,D}^{\bckov}$.
$(\subseteq)$ Similarly, since all $I \in \Omega^{\bckov}_{\Pi}(D)$ are finite, $\{I\} = \Omega_{\Pi}^{I\subseteq}$, and thus also $\{I\}\in \Omega_{\Pi}^{F\subseteq_\fin}$. Hence, $f(F_{\Pi,G}^D) \subseteq  \Omega_{\Pi}^{\subseteq_{\fin}}$, and then clearly also $f(\mathcal{F}_{\Pi,G}^{D}) \subseteq \mathcal{F}_{\Pi,D}^{\bckov}$.

\medskip

{\em Item (3).} Since the sample spaces are countable, and every sample exists as a singleton in the respective $\sigma$-algebra, it suffices to prove the claim only for the singleton events. The probability of all other events is determined by them via the axioms of probability measures. For every $\dep \in \Omega^{\fin}_{\Pi,G}(D)$, let $I_\dep$ be the minimal model of $\dep$. Again, we have that $I_\dep$ is finite; hence, $I_\dep$ is also a finite derivation set and $\{I_\dep\} = \Omega^{I_\dep\subseteq}_\Pi$. Then, $P^\bckov_\Pi(\{I_\dep \})$ is precisely the product of probabilities of the ``result'' atoms occurring in $I_\dep$. By the definition of $\gsimppi$, and grounders in general, all ``result'' atoms in heads of $\Pi$ are also in the respective minimal model. Hence, $P_{\Pi,G}^D(F)=P^\bckov_{\Pi}(f(F))$, for every singleton $F \in \mathcal{F}_{\Pi,G}^{D}$, and the claim follows. \qed


\OMIT{

In the following we compare the interpretation of  positive GDatalog$^ \naf[\Delta]$ under our proposed semantics to their interpretation as GDatalog$[\Delta]$ programs under \bckov semantics. For clarity, we will use two different names for the same program in the following, one name where we consider the program as a  GDatalog$^ \naf[\Delta]$ program and another when considered as the \emph{corresponding} GDatalog$[\Delta]$ program. Despite the distinct names used below, both objects represent the syntactically same program. We will always consider our semantics for positive GDatalog$^\naf[\Delta]$ programs, in comparision to \bckov semantics for the corresponding GDatalog$[\Delta]$ program.

Our rewriting of GDatalog$^\naf[\Delta]$ programs to TGDs is the same as the rewriting used in \bckov semantics, except for the addition of intermediate $\textrm{Active}$ predicates and the respective TGDs that link \textrm{Active} to \textrm{Result}. It will therefore be useful to talk about stable models of a GDatalog$^\naf[\Delta]$ program $\Pi$ \emph{modulo \textrm{Active}}, which we define as a stable model of some possible outcome $\Sigma \in \Omega_{\Pi,G}(D)$ with all atoms of the form $\textrm{Active}_{|\bar q|}^{\delta}(\bar p,\bar q)$ removed.

\begin{lemma}
  \label{lem:positivemodels}
  Let $\Pi$ be a positive GDatalog$^\naf[\Delta]$ program and let $D$ be a database of $\edb{\Pi}$. For any possible outcome $\Sigma \in \Omega_{\Pi,\gsimppi}(D)$, $\Sigma$ has exactly one stable model. Moreover, for every $\Sigma' \in \Omega_{\Pi,\gsimppi}(D)$ with $\Sigma \neq \Sigma'$, the stable model of $\Sigma$ modulo \textrm{Active} is distinct from the stable model of $\Sigma'$ modulo \textrm{Active} of every other possible outcome .
\end{lemma}
\begin{proof}
  Since $\Pi$ is positive, so is $\sigma_\Pi$ and therefore also every possible outcome $\Sigma$ is positive and no existential quantification. It is well known that such programs have exactly one stable model, i.e., $\sms{\Sigma}$ is a singleton.

  Let $\Sigma, \Sigma'$ be two distinct possible outcomes. That is $\Sigma$ and $\Sigma'$ are in $\gsimppi$-$\mi{ground}(\Pi[D])$, and therefore $\Sigma$ and $\Sigma'$ must be generated by distinct terminals $T$, $T'$, i.e., different subsets of $[2^{\ground{\dep_{\Pi[D]}^{\exists}}}]_{=}$ that are incomparable by $\subseteq$.
  Since $T$ and $T'$ are minimal w.r.t. $\terminals{\gsimppi}$, all body atoms in $T$ and $T'$ occur in the heads of $\gsimppi(T)$ and $\gsimppi(T')$, respectively. Hence, by definition of $\gsimppi$, the single stable model of $\Sigma$ contains exactly $\heads{T}$  as the only atoms of the form $\textrm{Result}^\delta_n$ (the \emph{result} atoms), while the stable model of $\Sigma'$ contains exactly $\heads{T'}$ as its only result atoms. Since $T$ and $T'$ are incomparable by $\subseteq$, so are $\heads{T}$ and $\heads{T'}$. Hence, the stable models modulo \textrm{Active} of any distinct $\Sigma$ and $\Sigma'$ are also distinct.
\end{proof}

\begin{lemma}
  \label{lem:PitoG}
  Let $\Pi$ be a positive GDatalog$^\naf[\Delta]$ program, and let $D$ be a database of $\edb{\Pi}$.
  Then $\dep \in \Omega_{\Pi,\gsimppi}(D)$ if and only if the stable model of $\dep$ modulo \textrm{Active} is a \bckov possible outcome of the GDatalog$[\Delta]$ program $\mathcal{G}$ corresponding to $\Pi[D]$.
\end{lemma}
\begin{proof}
    ($\Leftarrow$) Let $J$ be a \bckov possible outcome of $\mathcal{G}$. Recall that $J$ is a minimal model, we will define set  $T_J$ as the ground $\mi{AtR}$ TGDs corresponding to the result atoms in $J$.  That is, let $T_J$ be the smallest set such that for every $\textrm{Result}_{|\bar q|}^{\delta}(\bar p,\bar q, y) \in J$, the rule
    \[
      \textrm{Active}_{|\bar q|}^{\delta}(\bar p,\bar q) \ra \textrm{Result}_{|\bar q|}^{\delta}(\bar p,\bar q, o)
    \]
    is in $T_J$. It is then straightforward to verify that $\dep = G(T_J)\cup T_J \in \Omega_{\Pi,\gsimppi}(D)$ and that the stable model of $\dep$ modulo \textrm{Active} is exactly $J$.

    ($\Rightarrow$)
    let $J$ be the stable model of a possible outcome $\Sigma \in \Omega_{\Pi, \gsimppi}(D)$.
    For positive programs, the definition of $\mathit{GSimple}$ corresponds directly to the standard fixed-point definition of minimal models (cf.,~\cite{}) for some fixed instantiations of $\Delta$-terms. 
    Recall that the only difference in the  TGD$^\naf$ rewriting $\sigma_\Pi$ of our semantics to the TGDs $\widehat{\mathcal{G}}$ in \bckov semantics, are the intermediate $\mi{AtR}$ TGDs which are of no consequence w.r.t. the resulting model (except for the additional \textrm{Active} facts). 
    It is then straightforward to verify that $J$ modulo \textrm{Active} is a minimal model of $\widehat{\mathcal{G}}$.
\end{proof}

\begin{lemma}
  \label{lem:finiteiff}
  Let $\Pi$ be a positive GDatalog$^\naf[\Delta]$ program, and let $D$ be a database of $\edb{\Pi}$. Let $\mathcal{G}$ be the GDatalog$[\Delta]$ program corresponding to $\Pi[D]$. 
  Then $\gsimppi$ is finitely grounding for $\Pi[D]$ if and only if all possible outcomes of the corresponding GDatalog$[\Delta]$ program $\mathcal{G}$ are finite.
\end{lemma}
\begin{proof}
  For the implication from left to right, observe that for every \bckov possible outcome $J \in \Omega^{\bckov}_\mathcal{G}$ there is a possible outcome $\Sigma$ of $\Pi[D]$ relative to $\gsimppi$, such that the single stable model of $\Sigma$ modulo \textrm{Active} is $J$ (Lemma~\ref{lem:PitoG}). Since $\gsimppi$ is finitely grounding, $\Sigma$ is finite and thus so is $J$. Hence, all possible outcomes of $\mathcal{G}$ are finite.
  For the other direction we have that for every possible outcome $\Sigma$ of $\Pi[D]$ relative to $\gsimppi$, the stable model modulo \textrm{Active} $J$ of $\Sigma$ is a \bckov possible outcome. By assumption, every \bckov possible outcome is finite. There is at most one satisfied \textrm{Active} fact per satisfied \textrm{Result} fact in the model, hence the stable model of $\Sigma$ is finite. By definition (for positive programs), $\mi{GSimple}$ derives only ground rules whose head needs to be satisfied in the model of $\Sigma$, thus since the model is finite and $D$ is finite, also $\Sigma$ will be finite.
\end{proof}

To conclude this section, we show that the probability space resulting from our semantics, and the probability space of \bckov semantics are isomorphic. Formally,
a \emph{isomorphism} between two probability spaces $(\Omega_1, \mathcal{F}_1, P_1)$, $(\Omega_2, \mathcal{F}_2, P_2)$ is a bijection $f \colon \Omega_1 \to \Omega_2$, such that
$F_1 \in \mathcal{F}_1 \iff f(F_1) \in \mathcal{F}_2$, and $P_1(F_1) = P_2(f(F_1))$ for all $F_1 \in \mathcal{F}_1$. 

\begin{theorem}
  Let $\Pi$ be a positive GDatalog$^\naf[\Delta]$ and $D$ be a database of $\edb{\Pi}$ such that $\gsimppi$ is finitely grounding. Let $\mathcal{G}$ be the GDatalog$[\Delta]$ program corresponding to $\Pi[D]$.
  The probability space $(\Omega_{\Pi,G}(D), \mathcal{F}_{\Pi,G}^D, P_{\Pi,G}^D)$ with $G= \gsimppi$ is isomorphic to the probability space $(\Omega^\bckov_{\mathcal{G}}, F^\bckov_{\mathcal{G}}, P^\bckov_{\mathcal{G}}$ from Theorem~\ref{thm:todsprob}.
\end{theorem}
\begin{proof}
  Throughout the proof we will use $G$ for $\gsimppi$.
  First, observe that if $\gsimppi$ is finitely grounding, then $\Omega_{\Pi,G}^\infty(D) = \emptyset$.
  We argue that the function $f$, that takes every $\dep \in \Omega^{fin}_{\Pi,G}(D)$ to the single stable model of $\dep$ modulo \textrm{Active}, is the required isomorphism.  Note that $\Omega^{fin}_{\Pi,G}(D) = \Omega_{\Pi,G}(D)$ since \gsimppi is finitely grounding. By  Lemma~\ref{lem:positivemodels}, every $\dep \in \Omega^{fin}_{\Pi,G}(D)$ has a single stable model. By Lemma~\ref{lem:PitoG}, the function $f$ is a one-to-one correspondence between ground programs in $\Omega^{fin}_{\Pi,G}(D)$ and models in $\Omega^\bckov_{\mathcal{G}}$.

  We move on to verifying that $f$ is indeed measure preserving, that is, for every $X \in \mathcal{F}^D_{\Pi,G}$ we have $f(X) \in \mathcal{F}^\bckov_{\mathcal{G}}$ and $P^D_{\Pi,G}(X)=P^\bckov_{\mathcal{G}}(f(X))$.
   Let us write $\sigma(F)$ for the closure of set $F$ under countable union and complement. 
 Recall that $\mathcal{F}^D_{\Pi,G} =  \sigma(F^D_{\Pi,G})$, where $F^D_{\Pi,G}$ consists of $\Omega^\infty_{\Pi,G}(D)$ and all maximal $E \subseteq \Omega^{fin}_{\Pi,G}(D)$ such that $\sms{\dep}=\sms{\dep'}$ for all $\dep,\dep'\in E$. By Lemma~\ref{lem:positivemodels},  $F^D_{\Pi,G}$ consists only of singletons. We argue that $f(\sigma(F_\Pi)) = \sigma(\Omega_{\mathcal{G}}^{\mathit{fin}\subseteq})$.
 For any $J \in \Omega^\bckov_\mathcal{G}$, we have by Lemma~\ref{lem:finiteiff} that $J$ is finite and by Lemma~\ref{lem:PitoG} there is a $\Sigma \in \Omega^{fin}_{\Pi,G}(D)$ such that $f(\Sigma)=J$.
 Since all possible outcomes of $\mathcal{G}$ are finite and we assume countable sample spaces, there are only countably many possible outcomes of $\mathcal{G}$. Therefore, all
 sets in $\Omega_{\mathcal{G}}^{\mathit{fin}\subseteq}$ are also countable. Hence, any set in $\sigma(\Omega_\mathcal{G}^{\mathit{fin}\subseteq})$ is a countable union of sets in $f(F^D_{\Pi,G})$ and we have $f(\sigma(F_{\Pi,G}^D)) \supseteq \sigma(\Omega_{\mathcal{G}}^{\mathit{fin}\subseteq})$. Similarly, since all $J \in \Omega^\bckov_{\mathcal{G}}$ are finite, $\{J\} = \Omega_{\mathcal{G}}^{J\subseteq}$ and thus also $\{J\}\in \Omega_{\mathcal{G}}^{fin\subseteq}$. Thus $f(F_{\Pi,G}^D) \subseteq  \Omega_{\mathcal{G}}^{\mathit{fin}\subseteq}$ and then clearly also $f(\sigma(F_{\Pi,G}^D)) \subseteq \sigma(\Omega_{\mathcal{G}}^{\mathit{fin}\subseteq})$.

 Finally, we observe that for every $X \in \sigma(F^D_{\Pi,G})$ we have $P_{\Pi,G}^D(X) = P^\bckov_{\mathcal{G}}(f(X))$. Since the sample spaces are countable, and every sample exists as a singleton in the respective $\sigma$ algebra, it suffices to observe the equivalence only for the singleton events. The probability of all other events is determined by them via the axioms of probability measures.
 For every $\dep \in \Omega^{fin}_{\Pi,G}(D)$, let $J$ be the minimal model of $\dep$. Again we have that $J$ is finite, hence $J$ is also a finite derivation set and $\{J\} = \Omega^{J\subseteq}_\mathcal{G}$. Then $P^\bckov_\mathcal{G}(\{J \})$ is precisely the product of probabilities of \textrm{Result} facts in $J$. By the definition of $\gsimppi$ and grounders in general, all result atoms in heads of $G$ are also in the respective minimal model. Hence, $P_{\Pi,G}^D(X)=P^\bckov_{\mathcal{G}}(f(X))$.
\end{proof}
}


\subsection*{Proof of Theorem~\ref{the:gsimple-best}}
A GDatalog$[\Delta]$ program $\Pi$ is a special case of a GDatalog$^{\naf s}[\Delta]$ program that does not use negation.
Furthermore, the simple grounder and the perfect grounder coincide for positive programs since, for every positive (infinitary) existential-free TGD program $\dep$, $\mathit{Simple}_\dep$ and $\mathit{Perfect}_\dep$ is the same function from $2^{\ground{\dep}}$ to $2^{\ground{\dep}}$.
Therefore, Theorem~\ref{the:gsimple-best} is a special case of Theorem~\ref{the:gperfect-best}, which is shown below.

\OMIT{
\subsection*{Proof of Proposition~\ref{pro:compare-semantics}}
We first observe that for every $\dep \in [2^{\ground{\dep_{\Pi}^{\exists}}}]_{=}$ and $\dep \cup G'(\dep) \in  \Omega^{fin}_{\Pi,G'}(D)$, there is a $\dep' \subseteq \dep$, such that $\dep' \cup G(\dep') \in \Omega^{fin}_{\Pi,G}(D)$ and $\sms{\dep \cup G'(\dep)} = \sms{\dep' \cup G(\dep')}$.

By assumption, if $G'(\dep)$ is finite, then so is $G(\dep)$, and $\mi{AtR}_\dep \comp G(\dep)$. Thus, either $\dep$  is in $\terminals{G}$ or there is a $\dep'\subseteq \dep$ with $\mi{AtR}_{\dep'} \comp G(\dep')$. By the definition of a grounder we then have
\[
  \sms{\dep' \cup G(\dep')} = \sms{\dep \cup G(\dep)} = \sms{\Sigma_\Pi^{\not\exists} \cup \dep^*} = \sms{\dep \cup G'(\dep)}
\]
where $\dep^*$ is  a totalizer of $\mi{Atr}_\dep$. Clearly, $\dep' \cup G(\dep') \subseteq \dep \cup G(\dep)$ is finite.

Now, for any $\dep' \in [2^{\ground{\dep_{\Pi}^{\exists}}}]_{=}$ such that $\dep' \cup G(\dep') \in  \Omega^{fin}_{\Pi,G}(D)$, let $\mathfrak{E}(\dep') \subseteq \Omega^{fin}_{\Pi,G'}(D)$ such that every program in $\mathfrak{E}$ is of the form $\dep \cup G'(\dep)$, with $\dep \supseteq \dep'$, and has the same stable models as $\dep' \cup G(\dep')$.
By definition of $P_{\Pi,G}^D$ for sets of finite possible outcomes it follows that
\[
  P_{\Pi,G}^D(\{\dep \cup G(\dep)\}) \geq P_{\Pi,G'}^D(\mathfrak{E}(\dep'))
\]
By the above observation, every set of possible outcomes in $\Omega^{fin}_{\Pi,G'}(D)$ is the subset of some union $\bigcup_{\dep' \in \mathfrak{S}}\mathfrak{E}(\dep')$, where $\mathfrak{S} \subseteq \Omega^{fin}_{\Pi,G}(D)$. Hence the proposition holds.
}


\section{Section~\ref{sec:chase}: Fixpoint Probabilistic Semantics}\label{sec:apx-chase}

\subsection*{Proof of Lemma~\ref{lem:chase-tree-properties}}
{\em Item (1).} Intuitively, this follows inductively from Definition~\ref{def:chase-step} and
Definition~\ref{def:chase-tree}, since the definition requires that the label of each child of a chase tree node is the result of a trigger application.
We proceed to prove item (1) by induction on the height of the tree. 

\textbf{Base Case.} Only the root node $v_\mathit{root}$ of $T$ resides at depth $0$, with label
$\lambda(v_\mathit{root}) = \emptyset$, and it trivially holds that $\emptyset \in [2^{\ground{\dep_{\Pi[D]}^{\exists}}}]_{=}$.

\textbf{Inductive Step.} Assume now that the property
holds for all nodes up to depth $n$. Pick any such node $v$ at depth $n$ and let
$v_1, v_2, \ldots$ be its children. By Definition~\ref{def:chase-tree}, there
exists a trigger $\alpha$ for $G(\lambda(v))$ on $\lambda(v)$ such that
$\lambda(v)\params{\alpha}\{\lambda(v_1),\lambda(v_2),\ldots\}$. By
Definition~\ref{def:chase-step}, for each $i > 0$ we have that $\lambda(v_i)
\setminus \lambda(v) = \left\{\alpha \ra \text{\rm Return}^{\delta}_{|\bar
q|}(\bar p,\bar q,o)\right\}$, for some $o \in \ins{C}$, but no other AtR TGD
$\alpha \ra \text{\rm Return}^{\delta}_{|\bar q|}(\bar p,\bar q,o')$ can exist
in $\lambda(v)$. Hence, $\lambda(v_i)$ is functionally consistent, which
establishes the desired property for all nodes at depth $n+1$, proving the
claim.

\smallskip
{\em Item (2).} First, note that, by Definition~\ref{def:chase-tree}
and~\ref{def:chase-step}, for any parent node $v_p \in N$ with child node $v_c$,
it must hold that $\lambda(v_p) \subset \lambda(v_c)$, i.e., the labels of the
nodes increase monotonically along the parent-child axis. We proceed by considering the following two
cases:

\textbf{Case 1.} First, assume that neither $u$ nor $v$ are ancestors of each other.  Let $w$ be
the least common ancestor node of both $u$ and $v$, with $w_u$ and $w_v$ being
two different child nodes of $w$ and ancestor nodes of $u$ and $v$, respectively
(by assumption, these two nodes must clearly exist).  By Definition~\ref{def:chase-tree} and~\ref{def:chase-step}, $\lambda(w_u) =
\lambda(w) \cup \left\{\alpha \ra \text{\rm Return}^{\delta}_{|\bar q|}(\bar
p,\bar q,o)\right\}$, and $\lambda(w_v) = \left\{\alpha \ra \text{\rm
Return}^{\delta}_{|\bar q|}(\bar p,\bar q,o')\right\}$, where $o \neq o'$, for
some $o, o' \in \ins{C}$. But then, by monotonicity, $\lambda(w_u) \subseteq
\lambda(u)$, and $\lambda(w_v) \subseteq \lambda(v)$, proving the claim that
$\lambda(u) \neq \lambda(v)$.

\textbf{Case 2.} For the other case, assume, w.l.o.g., that $u$ is an ancestor of $v$. But then,
by monotonicity of the node labels, we have that $\lambda(u) \subset \lambda(v)$
and thus, clearly, $\lambda(u) \neq \lambda(v)$, as claimed.

\subsection*{Proof of Lemma~\ref{lem:chase-tree-uniqueness}}

Due to item (2) of Lemma~\ref{lem:chase-tree-properties}, it suffices to show the following: given a path $\pi \in \mathsf{paths}^{\fin}(T)$ of length $k$, there
exists a path $\pi' \in \mathsf{paths}^{\fin}(T')$, also of length $k$, with $\sem{\pi} = \sem{\pi'}$.
To prove this, we show inductively that $\pi'$ exists, and for all nodes $v$ in path $\pi'$ it holds that $\lambda(v) \subseteq \sem{\pi}$. Let $v_0$ be the
first node of $\pi'$, that is, the root node of tree $T'$.

\textbf{Base Case.} It clearly holds that the path of length $0$ consisting only $v_0$ exists in $T'$,
and that $\lambda(v_0) \subseteq \sem{\pi}$, since $\lambda(v_0) = \emptyset$.

\textbf{Inductive Step.} Assume that a path $v_0v_1\ldots v_n$, with $n < k$,
exists in $T'$ such that $\lambda(v_i) \subseteq \sem{\pi}$ holds for $0
\leqslant i \leqslant n$. Note that, by the monotonicity of grounder $G$, we
have that $G(\lambda(v_n)) \subseteq \sem{\pi}_G$. Let $\alpha = \text{\rm
Active}_{|\bar q|}^{\delta}(\bar p,\bar q)$ be the trigger that was applied to
$\lambda(v_n)$ to obtain $T'$. We claim that such a trigger exists, and will
proof this claim, denoted Claim (*), at the end of the proof. It must thus hold
that $\alpha \in \heads{\sem{\pi}}$. But since $\pi$ is a finite path, its leaf
node, by Definition~\ref{def:chase-tree}, contains no more triggers for
$G(\sem{\pi})$ on $\sem{\pi}$, and hence $\sem{\pi}$ must contain an AtR TGD
$\rho = \alpha \ra \text{\rm Return}^{\delta}_{|\bar q|}(\bar p,\bar q,o)$, for
some $o \in \ins{C}$. But since $\alpha$ is the trigger applied to $v_n$ in
$T'$, there must, by definition, exist a child node $v_{n+1}$ of $v_n$ with
$\lambda(v_{n+1}) = \lambda(v_n) \cup \{ \rho \}$. Clearly, $\lambda(v_{n+1})
\subseteq \sem{\pi}$, which establishes our desired property for the path $\pi'
= v_0v_1\ldots v_k$ in $T'$.

Finally, we need to show that for leaf node $v_k$ we have $\lambda(v_k) =
\sem{\pi}$. This is, however, straightforward: the above inductive proof shows
that overall, the exact same triggers were applied in $\pi'$ and in $\pi$, with
the same AtR TGDs added in each, resulting in the same set.

It remains to show Claim (*). Assume to the contrary that such a trigger does not exist for $v_n$. Then, $v_n$, for $n < k$, is the leaf node of path $\pi'$,
and hence $\sem{\pi'} \subset \sem{\pi}$. But then there must be an AtR TGD
$\rho_j = \text{\rm Active}_{|\bar q_j|}^{\delta_j}(\bar p_j,\bar q_j) \ra
\text{\rm Return}^{\delta_j}_{|\bar q_j|}(\bar p_j,\bar q_j,o_j) \in \sem{\pi}$,
with $\rho_j \not\in \sem{\pi'}$. Let $v_j$, for $0 \leqslant j < k$ be the node in
$\pi$ where trigger $\alpha_j = \text{\rm Active}_{|\bar q_j|}^{\delta_j}(\bar
p_j,\bar q_j)$ first appeared. Either $\lambda(v_j) \subseteq \lambda(v_n)$, a
contradiction, or in $v_{j-1}$ an AtR TGD $\rho_{j-1} = \text{\rm Active}_{|\bar
q_{j-1}|}^{\delta_{j-1}}(\bar p_{j-1},\bar q_{j-1}) \ra \text{\rm
Return}^{\delta_{j-1}}_{|\bar q_{j-1}|}(\bar p_{j-1},\bar q_{j-1},o_{j-1}) \in
\sem{\pi}$ was added such that trigger $\alpha_j$ arose. But then, let $v_{j-1}$
be the node in $\pi$ where trigger $\alpha_{j-1} = \text{\rm Active}_{|\bar
q_{j-1}|}^{\delta_{j-1}}(\bar p_{j-1},\bar q_{j-1})$ first appeared. The same
argument applies to $v_{j-1}$, and then further, inductively, until a
contradiction is reached. Such a contradiction will be reached at the root node
at the latest, since the label of the root node is the empty set. This completes the proof of Claim (*), and this, of Lemma~\ref{lem:chase-tree-uniqueness}.

\subsection*{Proof of Lemma~\ref{lem:bijection}}

First, observe that $\sem{\cdot}_G$ is a bijective function. Hence, we only need
to show that this function maps $\mathsf{paths}^{\fin}(T)$ precisely to the set
$\Omega_{\Pi,G}^{\fin}(D)$.

($\Rightarrow$) By Definition~\ref{def:chase-tree}
and~\ref{def:chase-step}, each path $\pi \in \mathsf{paths}^{\fin}(T)$ is
associated with the set $\sem{\pi}$, such that $\sem{\pi} \comp G(\sem{\pi})$.
Hence, $\sem{\pi}_G \in G$-$\ground{\Pi}$, and, since, by
Definition~\ref{def:chase-step}, it holds that $\delta\params{\bar p}(o) > 0$
for a trigger application to take place, we also have $\sem{\pi}_G \in
\Omega_{\Pi,G}^{\fin}(D)$.

($\Leftarrow$) From $\Omega_{\Pi,G}^{\fin}(D)$, we define a partial
chase tree $T'$ recursively as follows: Add a root node $v_\mathit{root}$ with
$\lambda(v_\mathit{root}) = \emptyset$. For any node $v$ in $T'$, for which a
trigger $\alpha$ exists for $G(\lambda(v))$ on $\lambda(v)$, pick any such
$\alpha$ and let 
\[
\{ \dep_1, \dep_2, \ldots \}\ =\ \left\{ \lambda(v) \cup \{ \rho = \alpha \ra \text{\rm Return}^{\delta}_{|\bar q|}(\bar p,\bar q,o) \} \mid \dep
\in \Omega_{\Pi,G}^{\fin}(D),\ \lambda(v) \subseteq \dep,\ \rho \in
\dep\ \text{and}\ \delta\params{\bar p}(o) > 0 \right\}.  
\]
For each such set of AtR
TGDs $\dep_i$ with $i > 0$, create a child node $v_i$ of $v$ in $T'$, with
$\lambda(v_i) =\dep_i$. Note that trigger $\alpha$ neccesarily exists for any
node $v$ in $T'$ where $\dep = \lambda(v) \cup G(\lambda(v)) \not\in
\Omega_{\Pi,G}^{\fin}(D)$; otherwise, $\dep$ would not be subset minimal, and thus not in $G$-$\ground{\Pi}$ or in $\Omega_{\Pi,G}^{\fin}(D)$.

Note that, by definition of $T'$, there is a path $\pi$ in $T'$ for each set
$\dep \in \Omega_{\Pi,G}^{\fin}(D)$, such that $\sem{\pi} = \dep = \lambda(v)$,
where $v$ is the leaf node of $\pi$, and $\dep$ is constructed step-by-step
along path $\pi$ in $T'$. The resulting tree $T'$ is not yet a complete chase
tree, since it only contains finite paths (i.e., the ones reflected in
$\Omega_{\Pi,G}^{\fin}(D)$). Let $T''$ be the chase tree obtained from $T'$,
where, whenever for a node $v$ in $T'$ there exists a constant $o \in \ins{C}$
with $\delta\params{\bar p}(o) > 0$, but there is no child $v'$ of $v$, with
$\lambda(v') = \lambda(v) \cup \left\{\alpha \ra \text{\rm
Return}^{\delta}_{|\bar q|}(\bar p,\bar q,o)\right\}$, such a child $v'$ is
present in $T''$. For each such node $v$ in $T''$, but not in $T'$, let the
chase subtree rooted at $v$ be defined as in Definition~\ref{def:chase-tree}.

With this construction in place, we need show that the constructed labelled tree
$T''$ is indeed a chase tree, as per Definition~\ref{def:chase-tree}. By
construction, the root node of $T''$ is labelled with $\emptyset$. For any
non-leaf node $v$ in $T''$, we need to show that $\lambda(v) \params{\alpha}
\{\lambda(v_1),\lambda(v_2),\ldots\}$ is indeed a valid trigger application,
where $v_1, v_2, \ldots$ are the children of $v$. Take any set
$\dep' \in \Omega_{\Pi,G}^{\fin}(D)$. Recall that, by definition of
$G$-$\ground{\Pi}$, $\dep = \dep' \cap \ground{\dep_{\Pi}^{\exists}}$ is a set
of AtR TGDs that is subset-minimal w.r.t.\ the compatibility relation
$\mi{AtR}_\dep \comp G(\dep)$ (i.e., $\dep$ is a subset-minimal set within
$\terminals{G}$). In particular, $\dep$ is functionally consistent. We need to
show, as per Definition~\ref{def:chase-step}, that for every node $v$ in $T''$,
with children $v_1, v_2, \ldots$, it holds that
\begin{enumerate}
  \item\label{point:a} for each integer $i >0$, $\lambda(v_i) = \lambda(v) \cup
    \left\{\alpha \ra \text{\rm Return}^{\delta}_{|\bar q|}(\bar p,\bar
    q,o)\right\}$, where $o \in \ins{C}$ and $\delta\params{\bar p}(o) > 0$, and
  \item\label{point:b} for each constant $o \in \ins{C}$ with
    $\delta\params{\bar p}(o) > 0$, there exists $i > 0$ such that $\lambda(v_i) =
    \lambda(v) \cup \left\{\alpha \ra \text{\rm Return}^{\delta}_{|\bar q|}(\bar
    p,\bar q,o)\right\}$.
\end{enumerate}

For any node in $T''$ but not in $T'$, these conditions are trivially satisfied.
Hence we only need to argue for nodes in $T'$.  For Point~\ref{point:a}, this
follows from the definition of $\Omega_{\Pi,G}^{\fin}(D)$. Note that for any
node $v$ in $T'$, $\lambda(v_i)$ is functionally consistent and compatible with
$G(\lambda(v_i))$. We have $\mi{AtR}_{\lambda(v_i)} \comp G(\lambda(v_i))$,
since $\lambda(v)$ contains trigger $\alpha$ that was picked for $v$ in the
construction of $T'$, and $\mi{AtR}_{\lambda(v_i)}$ hence is defined for
$\alpha$. Point~\ref{point:b} follows directly by construction of $T''$ from
$T'$.

We have shown that $T''$ is a chase tree, with $T'$ a subtree of finite depth
within $T''$. By Lemma~\ref{lem:chase-tree-uniqueness}, we know that any chase,
in particular the tree $T$ chosen in the current lemma, must agree with $T''$ on
the finite paths. Thus, we have that $\sem{\mathsf{paths}^{\fin}(T)}\ =\
\sem{\mathsf{paths}^{\fin}(T'')}$.

To complete the proof, we thus need to show that all finite paths in $T''$ are
already present in $T'$. Towards a contradiction, assume that there is a finite
path $\pi$ in $T''$ whose leaf node is not in $T'$. But then $\sem{\pi}$ is
clearly a finite set of AtR TGDs, and $\sem{\pi} \comp \sem{\pi}_G$. But then,
$\sem{\pi}_G \in \Omega_{\Pi,G}^{\fin}(D)$, and, by construction of $T'$, there
is a leaf node labelled $\sem{\pi}$ in $T'$.  Hence, since the labels of tree
nodes along a path are monotonically increasing, path $\pi$ must have its leaf
node in $T'$, a contradiction.

We thus have, as required, that, indeed, the function $\sem{\cdot}_G$ is a
bijection from $\mathsf{paths}^{\fin}(T'')$ to
$\Omega_{\Pi,G}^{\fin}(D)$. But, as argued above, by
Lemma~\ref{lem:chase-tree-uniqueness}, this also holds for $T$, completing the
proof.

\subsection*{Proof of Theorem~\ref{the:chase-semantics}}

{\em Item (1).} This directly follows from Lemma~\ref{lem:bijection}.

\smallskip

\noindent {\em Item (2).} For a chase tree $T$, let $f: \mathsf{paths}^{\fin}(T) \to \Omega_{\Pi,G}^{\fin}(D)$ be the bijective function provided by Lemma~\ref{lem:bijection}. Let $g$ be the extension of $f$ to sets, such that for any set $E \subseteq \sem{\mathsf{paths}^{\fin}(T)}$ we have
$g(E) = \{ f(\pi) \mid E = \sem{\pi_1, \pi_2, \ldots}, \pi = \pi_i, i > 0 \}$. Clearly, $g$ is a bijection between the $\sigma$-algebras $\sem{\mathcal{F}_T}_G$ and $\mathcal{F}_{\Pi,G}^{D}$, since $f$ is a bijection
between the base sets of these two $\sigma$-algebras, establishing item (\ref{the:chasesemantics:2}), if, additionally, we set $g(\mathsf{paths}^{\inf}(T)) = \Omega_{\Pi,G}^{\inf}(D)$.

\smallskip

\noindent {\em Item (3).} Finally, item (\ref{the:chasesemantics:3}) follows directly from the definitions of the probability measures $\sem{P_T}_G$ and $P_{\Pi,G}^{D}$, since the
probabilities for the base sets $\mathsf{paths}^{\fin}(T)$ and
$\Omega_{\Pi,G}^{\fin}(D)$ are defined in exactly the same way: as the product
of all $\delta\params{\bar p}(o)$ for each $\textrm{Result}_{|\bar
q|}^{\delta}(\bar p, \bar q, o)$ atom occurring in an AtR TGD in $\pi \in
\mathsf{paths}^{\fin}(T)$ and in $\dep \in \Omega_{\Pi,G}^{\fin}(D)$,
respectively. Since $f(\pi) = \dep$ and $f^{-1}(\dep) = \pi$, clearly
$\sem{P_T}_G(\pi) = P_{\Pi,G}^{D}(g(\pi))$. The equivalence of the probabilities
for the infinity and composite events then follows trivially from the
definitions.

\section{Section~\ref{sec:stratified-negation}: Stratified Negation}\label{sec:apx-stratified-negation}

\subsection*{Example of Perfect Grounding}

Let us consider a simple scenario in which a given set of dimes are tossed, and if none of them show tail then also a given set of quarters are tossed.
Let $\Delta = \{\flip\}$, and assume we have the following schema: 
\begin{equation*}
	\begin{array}{lllll}
		\text{\rm Dime}(\mi{dime\_id}) & 
		\text{\rm Quarter}(\mi{quarter\_id}) &
		\text{\rm DimeTail}(\mi{dime\_id,\mi{boolean}}) &
		\text{\rm QuarterTail}(\mi{quarter\_id},\mi{boolean}) &
		\text{\rm SomeDimeTail}.
	\end{array}
\end{equation*}
For simplicity, let us assume global identifiers, that is, the sets of identifiers of dimes and quarters have empty intersection.
The GDatalog$^{\naf s}[\Delta]$ program $\Pi$ over the above schema that encodes our simple scenario is as follows:
\begin{align*}
	\text{\rm Dime}(x) &\ra \text{\rm DimeTail}(x,\flip\params{0.5}[x])\\
	\text{\rm DimeTail}(x,1) &\ra \text{\rm SomeDimeTail}\\
	\text{\rm Quarter}(x), \naf\text{\rm SomeDimeTail} &\ra \text{\rm QuarterTail}(x,\flip\params{0.5}[x]).
\end{align*}
Hence, the TGD$^\naf$ program $\dep_\Pi$ is the following:
\begin{align*}
	\text{\rm Dime}(x) &\ra \text{\rm Active}_{1}^{\flip}(0.5,x)\\
	\text{\rm DimeTail}(x,1) &\ra \text{\rm SomeDimeTail}\\
	\text{\rm Quarter}(x), \naf \text{\rm SomeDimeTail} &\ra \text{\rm Active}_{1}^{\flip}(0.5,x)\\
	\text{\rm Active}_{1}^{\flip}(0.5,x) &\ra \exists z \, \text{\rm Result}_{1}^{\flip}(0.5,x,z)\\
	\text{\rm Result}_{1}^{\flip}(0.5,x,z), \text{\rm Dime}(x) &\ra \text{\rm DimeTail}(x,z) \\
	\text{\rm Result}_{1}^{\flip}(0.5,x,z), \text{\rm Quarter}(x), \naf \text{\rm SomeDimeTail} &\ra \text{\rm QuarterTail}(x,z).
\end{align*}

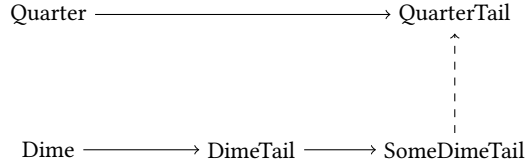
\begin{figure}
	\tikzstyle{node} = [text centered]
	\tikzstyle{line} = [draw, solid]
	\tikzstyle{arc} = [draw, solid, ->]
	\tikzstyle{darc} = [draw, dashed, ->]
	
	\begin{tikzpicture}[scale=.9]
		\node at (0,0) [node](d) {\text{\rm Dime}};
		\node at (0,2) [node](q) {\text{\rm Quarter}};
		\node at (3,0) [node](dt) {\text{\rm DimeTail}};
		\node at (6,0) [node](sdt) {\text{\rm SomeDimeTail}};
		\node at (6,2) [node](qt) {\text{\rm QuarterTail}};
		
		\draw (d) edge[arc] (dt);
		\draw (dt) edge[arc] (sdt);
		\draw (q) edge[arc] (qt);
		\draw (sdt) edge[darc] (qt);
		
	\end{tikzpicture}
	\caption{Dependency graph of the example program with stratified negation (negative arcs are dashed)}\label{fig:dg}
\end{figure}
The dependency graph $\dg{\Pi}$ is shown in Figure~\ref{fig:dg}.
In the following, let us consider the topological ordering
\[
C_1 = \{\text{\rm Dime}\} \quad C_2 = \{\text{\rm Quarter}\} \quad  C_3 = \{\text{\rm DimeTail}\} \quad C_4 = \{\text{\rm SomeDimeTail}\} \quad C_5 = \{\text{QuarterTail}\}
\]
and a database
\[
D\ =\ \{\text{\rm Dime(1)},\text{\rm Dime(2)}, \text{\rm Quarter(3)}\}.
\]
Note that $\Pi_{|C_1} \cup \Pi_{|C_2} = \{ \ra \alpha \mid \alpha \in D\}$,
$\Pi_{|C_3} = \{\text{\rm Dime}(x) \ra \text{\rm DimeTail}(x,\flip\params{0.5}[x])\}$,
$\Pi_{|C_4} = \{\text{\rm DimeTail}(x,1) \ra \text{\rm SomeDimeTail}\}$, and
$\Pi_{|C_5} = \{\text{\rm Quarter}(x), \naf\text{\rm SomeDimeTail} \ra \text{\rm QuarterTail}(x,\flip\params{0.5}[x])\}$.
Hence, the perfect grounding for $\Pi[D]$ works with $D$ and 
\begin{align*}
	\dep_{\Pi_{|C_3}} = &
	\left\{
	\begin{array}{rl}
		\text{\rm Dime}(x) &\ra \text{\rm Active}_{1}^{\flip}(0.5,x)\\
		\text{\rm Active}_{1}^{\flip}(0.5,x) &\ra \exists z \, \text{\rm Result}_{1}^{\flip}(0.5,x,z)\\
		\text{\rm Result}_{1}^{\flip}(0.5,x,z), \text{\rm Dime}(x) &\ra \text{\rm DimeTail}(x,z)
	\end{array}
	\right\}\\
	\dep_{\Pi_{|C_4}} = &
	\left\{
	\begin{array}{rl}
		\text{\rm DimeTail}(x,1) &\ra \text{\rm SomeDimeTail}
	\end{array}
	\right\}\\
	\dep_{\Pi_{|C_5}} = &
	\left\{
	\begin{array}{rl}
		\text{\rm Quarter}(x), \naf \text{\rm SomeDimeTail} &\ra \text{\rm Active}_{1}^{\flip}(0.5,x)\\
		\text{\rm Active}_{1}^{\flip}(0.5,x) &\ra \exists z \, \text{\rm Result}_{1}^{\flip}(0.5,x,z)\\
		\text{\rm Result}_{1}^{\flip}(0.5,x,z), \text{\rm Quarter}(x), \naf \text{\rm SomeDimeTail} &\ra \text{\rm QuarterTail}(x,z)
	\end{array}
	\right\}.
\end{align*}

Let us first consider the case in which the first dime shows tail, and the second dime shows head, which is encoded by the following $\dep \in [2^{\ground{\dep_{\Pi}^{\exists}}}]_{=}$:
\begin{align*}
	\text{\rm Active}_{1}^{\flip}(0.5,1) &\ra \text{\rm Result}_{1}^{\flip}(0.5,1,1)\\
	\text{\rm Active}_{1}^{\flip}(0.5,2) &\ra \text{\rm Result}_{1}^{\flip}(0.5,2,0).
\end{align*}
The perfect grounding is obtained as follows:
\begin{align*}
	\dep \uparrow C_{2} = &\ \left\{
	\begin{array}{rl}
		&\ra \text{\rm Dime}(1)\\
		&\ra \text{\rm Dime}(2)\\
		&\ra \text{\rm Quarter}(3)
	\end{array}
	\right\}\\
	\dep \uparrow C_{3} = &\ \dep \uparrow C_{2} \cup \left\{
	\begin{array}{rl}
		\text{\rm Dime}(1) &\ra \text{\rm Active}_{1}^{\flip}(0.5,1)\\
		\text{\rm Result}_{1}^{\flip}(0.5,1,1), \text{\rm Dime}(1) &\ra \text{\rm DimeTail}(1,1) \\
		\text{\rm Dime}(2) &\ra \text{\rm Active}_{1}^{\flip}(0.5,2) \\
		\text{\rm Result}_{1}^{\flip}(0.5,2,0), \text{\rm Dime}(2) &\ra \text{\rm DimeTail}(2,0)
	\end{array}
	\right\}\\
	\dep \uparrow C_{4} = &\ \dep \uparrow C_{3} \cup \left\{
	\begin{array}{rl}
		\text{\rm DimeTail}(1,1) &\ra \text{\rm SomeDimeTail}
	\end{array}
	\right\}\\
	\dep \uparrow C_{5} = &\ \dep \uparrow C_{4} \cup \emptyset = \mi{GPerfect}_{\Pi[D]}(\dep).
\end{align*}
Note that $\mi{AtR}_\dep \comp \mi{GPerfect}_{\Pi[D]}(\dep)$, that is, $\dep \in \terminals{\mi{GPerfect}_{\Pi[D]}}$.
Moreover, $\dep \cup \mi{GPerfect}_{\Pi[D]}(\dep)$ belongs to $\mi{GPerfect}_{\Pi[D]}$-$\ground{\Pi[D]}$, and
$\sms{\dep \cup \mi{GPerfect}_{\Pi[D]}(\dep \cup \mi{GPerfect}_{\Pi[D]}(\dep))} = \{\{\text{\rm Dime}(1),$ $\text{\rm Dime}(2),$ $\text{\rm Quarter}(3),$ $\text{\rm Active}_{1}^{\flip}(0.5,1),$ $\text{\rm DimeTail}(1,1),$ $\text{\rm Active}_{1}^{\flip}(0.5,2),$ $\text{\rm DimeTail}(2,0),$ $\text{\rm SomeDimeTail}\}\} = \{\heads{\dep \cup \mi{GPerfect}_{\Pi[D]}(\dep)}\}$.

Let us now consider the case in which no dime shows tail, which is encoded by the following $\dep \in [2^{\ground{\dep_{\Pi}^{\exists}}}]_{=}$:
\begin{align*}
	\text{\rm Active}_{1}^{\flip}(0.5,1) &\ra \text{\rm Result}_{1}^{\flip}(0.5,1,0)\\
	\text{\rm Active}_{1}^{\flip}(0.5,2) &\ra \text{\rm Result}_{1}^{\flip}(0.5,2,0).
\end{align*}
The perfect grounding is obtained as follows:
\begin{align*}
	\dep \uparrow C_{2} = &\ \left\{
	\begin{array}{rl}
		&\ra \text{\rm Dime}(1)\\
		&\ra \text{\rm Dime}(2)\\
		&\ra \text{\rm Quarter}(3)
	\end{array}
	\right\}\\
	\dep \uparrow C_{3} = &\ \dep \uparrow C_{2} \cup \left\{
	\begin{array}{rl}
		\text{\rm Dime}(1) &\ra \text{\rm Active}_{1}^{\flip}(0.5,1)\\
		\text{\rm Result}_{1}^{\flip}(0.5,1,0), \text{\rm Dime}(1) &\ra \text{\rm DimeTail}(1,0) \\
		\text{\rm Dime}(2) &\ra \text{\rm Active}_{1}^{\flip}(0.5,2) \\
		\text{\rm Result}_{1}^{\flip}(0.5,2,0), \text{\rm Dime}(2) &\ra \text{\rm DimeTail}(2,0)
	\end{array}
	\right\}\\
	\dep \uparrow C_{4} = &\ \dep \uparrow C_{3} \cup \emptyset\\
	\dep \uparrow C_{5} = &\ \dep \uparrow C_{4} \cup
	\left\{
	\begin{array}{rl}
		\text{\rm Quarter}(3), \naf \text{\rm SomeDimeTail} &\ra \text{\rm Active}_{1}^{\flip}(0.5,3)
	\end{array}
	\right\}
	= \mi{GPerfect}_{\Pi[D]}(\dep).
\end{align*}
Note that $\mi{AtR}_\dep \comp \mi{GPerfect}_{\Pi[D]}(\dep)$ does not hold due to $\text{\rm Active}_{1}^{\flip}(0.5,3) \in \heads{\dep \cup \mi{GPerfect}_{\Pi[D]}(\dep)}\}$.

\subsection*{Proof of Proposition~\ref{pro:gperfect-grounder}}

Before giving the proof of Proposition~\ref{pro:gperfect-grounder}, we first need to establish following auxiliary lemma:

\begin{lemma}\label{lem:unique-stable-model}
	%
 	For every $\dep \in \terminals{\mi{GPerfect}_{\Pi}}$ such that there is no $\dep'\in \terminals{\mi{GPerfect}_{\Pi}}$ with $\dep' \subsetneq \dep$, $\sms{\dep \cup \mi{GPerfect}_{\Pi}(\dep)} = \{\heads{\dep \cup \mi{GPerfect}_{\Pi}(\dep)}\}$.
\end{lemma}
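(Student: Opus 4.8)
The plan is to prove Lemma~\ref{lem:unique-stable-model} by recognising $\dep \cup \mi{GPerfect}_{\Pi}(\dep)$ as a ground, stratified Datalog$^\naf$ program whose perfect model is exactly its set of head atoms. Throughout, fix a subset‑minimal $\dep \in \terminals{\mi{GPerfect}_{\Pi}}$, let $C_1,\dots,C_n$ be the topological ordering of $\scc{\Pi}$ underlying $\mi{GPerfect}_{\Pi}$, write $\dep'' = \dep \cup \mi{GPerfect}_{\Pi}(\dep) = \dep \cup (\dep \uparrow C_n)$, and put $M = \heads{\dep''}$. The first step is to rule out the ``otherwise'' branch in the definition of $\dep \uparrow C_i$: if it were taken at the smallest such $i$, then $\dep \uparrow C_j = \dep \uparrow C_{i-1}$ for all $j \geq i-1$, hence $\mi{GPerfect}_{\Pi}(\dep) = \dep \uparrow C_{i-1}$, and the failure of $\mi{AtR}_\dep \comp \dep \uparrow C_{i-1}$ contradicts $\dep \in \terminals{\mi{GPerfect}_{\Pi}}$ (which, by definition of $\terminals{\cdot}$, asserts precisely $\mi{AtR}_\dep \comp \mi{GPerfect}_{\Pi}(\dep)$). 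So for every $i \in [n]$ we have $\dep \uparrow C_i = \mathit{Perfect}_{\dep'_i}^{\infty}(\emptyset) \setminus \dep$ with $\dep'_i = \dep_{\Pi_{|C_i}}^{\not\exists} \cup \dep \uparrow C_{i-1} \cup \dep$, and $\mi{AtR}_\dep \comp \dep \uparrow C_i$ for all $i$. A second use of subset‑minimality is that every AtR TGD $\alpha \ra \text{\rm Result}_{|\bar q|}^{\delta}(\bar p,\bar q,o)$ in $\dep$ has its body atom $\alpha$ occurring in $M$: otherwise that TGD never fires in any of the fixpoints $\mathit{Perfect}_{\dep'_i}^{\infty}(\emptyset)$, so deleting it from $\dep$ leaves a strictly smaller terminal, a contradiction; hence every head atom of $\dep$ is genuinely supported inside $\dep''$.

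The second step is structural: $\dep''$ is a \emph{ground} Datalog$^\naf$ program (the AtR TGDs of $\dep$ become variable‑free positive rules after grounding, and $\dep \uparrow C_n \subseteq \ground{\dep_{\Pi}^{\not\exists}}$) that is moreover \emph{stratified}. Indeed, the translation $\Pi \mapsto \dep_{\Pi}$ only inserts the fresh \emph{positive} predicates $\text{\rm Active}_n^{\delta}$ and $\text{\rm Result}_n^{\delta}$ ``between'' the body and the head of each rule, so a directed cycle through a negative edge in the dependency graph of $\dep_{\Pi}$ forces one in $\dg{\Pi}$; since $\Pi$ is a GDatalog$^{\naf s}[\Delta]$ program, there is none, hence $\dep_{\Pi}$, its grounding, and $\dep''$ admit a stratification, compatible with processing the predicates in the order induced by $C_1,\dots,C_n$. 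One then invokes the standard fact that a ground stratified Datalog$^\naf$ program has a \emph{unique} stable model, namely its perfect model, obtained by iterating, stratum by stratum, the least‑fixpoint operator of the rules of the current stratum with negative literals evaluated against the atoms already produced in lower strata.

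It remains to identify this perfect model with $M = \heads{\dep''}$, which is the heart of the argument: $\mi{GPerfect}_{\Pi}$ simply \emph{is} the stratified evaluation. By induction on $i$ one shows (i) $\dep \uparrow C_{i-1} \subseteq \dep \uparrow C_i$ — the ground rules already collected are re‑collected inside the stratum‑$i$ fixpoint, since each of their positive bodies is re‑derived there; and (ii) the head atoms accumulated up through stratum $i$, i.e.\ those of $\dep \cup (\dep \uparrow C_i)$, are exactly the atoms of the perfect model of $\dep''$ over predicates of strata $\leq i$ (together with the $\text{\rm Result}$ atoms carried in $\dep$). The crucial point, which uses stratification, is that the negative body of every rule instance $h(\sigma)$ placed into $\dep \uparrow C_i$ mentions only predicates of strata $<i$, whose perfect‑model content is already fully present in $\heads{\dep \uparrow C_{i-1}}$; hence such a negative body is disjoint not merely from the current approximation but from all of $M$, so every rule of $\dep''$ is fired correctly and no spurious atom is produced. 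Consequently the reduct $(\dep'')^M$ re‑derives exactly the atoms of $M$, one stratum at a time, while conversely any atom derivable from $(\dep'')^M$ is a head atom of $\dep''$, hence in $M$; and $M$ trivially satisfies every rule of $\dep''$ since it contains all head atoms. Thus $M$ is the least model of $(\dep'')^M$, i.e.\ the unique stable model, so $\sms{\dep \cup \mi{GPerfect}_{\Pi}(\dep)} = \{M\} = \{\heads{\dep \cup \mi{GPerfect}_{\Pi}(\dep)}\}$.

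The main obstacle is the induction in the third step: making rigorous that the iterated operator $\mathit{Perfect}_{\dep'_i}^{\infty}(\emptyset)$ coincides, on the atoms it produces, with the stratum‑$i$ layer of the textbook stratified bottom‑up evaluation of $\dep''$. This requires the stratification of $\dep_{\Pi}$ (inherited from that of $\Pi$) to guarantee both that no atom over a higher‑stratum predicate is generated during the stratum‑$i$ fixpoint and that every lower‑stratum atom inspected negatively by a stratum‑$i$ rule is already available, plus the bookkeeping behind $\dep \uparrow C_{i-1} \subseteq \dep \uparrow C_i$; the functional consistency of $\dep$ (the fact that $\dep \in [2^{\ground{\dep_{\Pi}^{\exists}}}]_{=}$) is what makes the $\text{\rm Active}$/$\text{\rm Result}$ layer deterministic, so that each produced $\text{\rm Active}$ atom contributes exactly one $\text{\rm Result}$ atom to $M$.
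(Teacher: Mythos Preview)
Your proof is correct and follows essentially the same strategy as the paper: recognise $\dep \cup \mi{GPerfect}_{\Pi}(\dep)$ as a ground stratified program, invoke the classical uniqueness of its stable model, and then identify that model with $\heads{\dep \cup \mi{GPerfect}_{\Pi}(\dep)}$ by arguing that every rule in the program has its positive body among earlier heads and its negative body disjoint from all heads. Your treatment is in fact more careful than the paper's in two respects---you explicitly rule out the ``otherwise'' branch of $\dep \uparrow C_i$ and you justify, via the stratification inherited from $\Pi$, why $B^-(\sigma) \cap \heads{\dep^\star} = \emptyset$ (not merely disjointness from the heads at the moment $\sigma$ was added)---whereas the paper asserts both ``by construction'' and appeals directly to Corollary~1 of~\cite{GeLi88}.
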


\begin{proof}
For brevity, let $\dep^\star = \dep \cup \mi{GPerfect}_{\Pi}(\dep)$. Observe that $\dep^\star$ is a stratified ground program according to \cite{GeLi88}.
From Corollary~1 in \cite{GeLi88}, we know that $\dep^\star$ has a unique stable model.
We shall show that such a stable model is $\heads{\dep^\star}$, i.e., $\heads{\dep^\star}$ is a model of $\SM[\dep^\star] = \bigwedge_{\sigma \in \dep^\star} \sigma\ \wedge\ \neg \exists \ins{X}_{\dep^\star} \left( (\ins{X}_{\dep^\star} < \ins{R}) \wedge \repl_{\dep^\star}(\dep^\star)\right)$.
Actually, $\heads{\dep^\star} \models \bigwedge_{\sigma \in \dep^\star} \sigma$ is trivial, so it remains to show that there is no smaller model for the program obtained from $\dep^\star$ by fixing the interpretation of negative literals according to $\heads{\dep^\star}$ (such a program is called \emph{reduct} in the literature).

By Definition~\ref{def:grounder-gdatalog}, $\dep \in \terminals{\mi{GPerfect}_{\Pi}}$ implies $\mi{AtR}_\dep \comp \mi{GPerfect}_{\Pi}(\dep)$.
Moreover, by construction of $\dep^\star$ (Definition~\ref{def:perfect-grounder} and in particular of $\mi{Perfect}_{\dep}$), (i) all $\sigma \in \dep^\star$ satisfy $B^-(\sigma) \cap \heads{\dep^\star} = \emptyset$, and (ii) $\dep^\star$ is ordered so that all atoms in $B^+(\sigma)$ occur in some head of previously derived TGD$^\naf$.
Moreover, by assumption, (iii) there is no $\dep'\in \terminals{\mi{GPerfect}_{\Pi}}$ with $\dep' \subsetneq \dep$.
From (i) we have that the program reduct is equivalent to the program obtained from $\dep^\star$ by removing all negative literals (because such negative literals are true w.r.t.~$\heads{\dep^\star}$).
From (ii) and (iii), we have that $\heads{\dep^\star}$ is the subset-minimal model of the program reduct. This completes our proof.
\end{proof}

To prove that $\mi{GPerfect}_{\Pi}$ is a grounder of $\Pi$, we have to show that if $\dep \in [2^{\ground{\dep_{\Pi}^{\exists}}}]_{=}$ is such that $\mi{AtR}_\dep \comp G(\dep)$, and $\dep'$ is a totalizer of $\mi{AtR}_{\dep}$, then $\sms{\mi{GPerfect}_{\Pi}(\dep) \cup \dep} = \sms{\dep_{\Pi}^{\not\exists} \cup \dep'}$.
From Lemma~\ref{lem:unique-stable-model} we know that $\sms{\mi{GPerfect}_{\Pi}(\dep) \cup \dep} = \{\heads{\mi{GPerfect}_{\Pi}(\dep) \cup \dep}\}$.
From Corollary~1 in \cite{GeLi88} we know that $\sms{\dep_{\Pi}^{\not\exists} \cup \dep'} = \{I\}$ (because $\dep_{\Pi}^{\not\exists} \cup \dep'$ is stratified).
Hence, we have to show that $I = \heads{\mi{GPerfect}_{\Pi}(\dep) \cup \dep}$.
In fact, $\heads{\mi{GPerfect}_{\Pi}(\dep) \cup \dep}$ is a stable model of $\dep_{\Pi}^{\not\exists} \cup \dep'$ because 
$\mi{GPerfect}_{\Pi}(\dep) \cup \dep \subseteq \ground{\dep_{\Pi}^{\not\exists}} \cup \dep'$, and
all $\sigma \in (\ground{\dep_{\Pi}^{\not\exists}} \cup \dep') \setminus (\mi{GPerfect}_{\Pi}(\dep) \cup \dep)$ are such that $B^+(\sigma) \not\subseteq \heads{\mi{GPerfect}_{\Pi}(\dep) \cup \dep}$ or $B^-(\sigma) \cap \heads{\mi{GPerfect}_{\Pi}(\dep) \cup \dep} \neq \emptyset$.

\subsection*{Proof of Theorem~\ref{the:gperfect-best}}

According to Definition~\ref{def:compare-semantics}, $\Pi_{\mi{GPerfect}_{\Pi[D]}}(D)$ is as good as $\Pi_{G}(D)$ if
\begin{equation}\label{eq:prob-as-good-as}
P_{\Pi,\mi{GPerfect}_{\Pi[D]}}^{D}\left(\{\dep \in \Omega_{\Pi,\mi{GPerfect}_{\Pi[D]}}^{\fin}(D) \mid \sms{\dep} = \mathcal{I}\}\right)\ \geq 
P_{\Pi,G}^{D}\left(\{\dep \in \Omega_{\Pi,G}^{\fin}(D) \mid \sms{\dep} = \mathcal{I}\}\right)
\end{equation}
for every $\mathcal{I} \subseteq \sms{\dep_{\Pi[D]}}$.
Expanding \eqref{eq:prob-as-good-as} with Definition~\ref{def:possible-outcome} and Definition~\ref{def:grounder-gdatalog}, we have
\begin{equation}\label{eq:prob-as-good-as:2}
	\begin{split}
	P_{\Pi,\mi{GPerfect}_{\Pi[D]}}^{D}\left(\{\dep \cup \mi{GPerfect}_{\Pi[D]}(\dep) \in \Omega_{\Pi,\mi{GPerfect}_{\Pi[D]}}^{\fin}(D) \mid \sms{\dep \cup \mi{GPerfect}_{\Pi[D]}(\dep)} = \mathcal{I}\}\right)\\ 
	{}\geq
	P_{\Pi,G}^{D}\left(\{\dep \cup G(\dep) \in \Omega_{\Pi,G}^{\fin}(D) \mid \sms{\dep \cup G(\dep)} = \mathcal{I}\}\right)
	\end{split}
\end{equation}
We proceed to provide a construction showing that, for every $\dep \cup G(\dep) \in \Omega_{\Pi,G}^{\fin}(D)$, there exists $\dep' \subseteq \dep$ such that $\dep' \cup \mi{GPerfect}_{\Pi[D]}(\dep') \in \Omega_{\Pi,\mi{GPerfect}_{\Pi[D]}}^{\fin}(D)$.
Let $\dep_0 = \emptyset$, and for $i \geq 1$,
\[
	\dep_i\ =\ \{\sigma \in \dep \mid B^+(\sigma) \subseteq \mi{GPerfect}_{\Pi[D]}(\dep_{i-1})\}.
\]
The least fixed point of the above sequence is reached in at most $|\dep|$ steps (because the sequence in increasing, and $\dep_i \subseteq \dep$ for all $i \geq 0$).
Let $\dep'$ be such a least fixed point.
By applying Proposition~\ref{pro:gperfect-grounder} and Definition~\ref{def:grounder-gdatalog}, we derive $\sms{\dep' \cup \mi{GPerfect}_{\Pi[D]}(\dep')} = \sms{\dep \cup G(\dep)} = \mathcal{I}$.
Finally, from $\dep' \subseteq \dep$ we immediately get that $\mi{Pr}(\dep' \cup \mi{GPerfect}_{\Pi[D]}(\dep')) \geq \mi{Pr}(\dep \cup G(\dep))$, and \eqref{eq:prob-as-good-as:2} follows.


\end{document}